\documentclass[reqno,11pt]{amsart}

\usepackage{amsmath, latexsym, amsfonts, amssymb, amsthm, amscd}
\usepackage[colorlinks=true, pdfstartview=FitV, linkcolor=blue, citecolor=blue, urlcolor=blue,pagebackref=false,hyperindex,breaklinks]{hyperref}
\usepackage{calc,amsfonts,amsthm,amscd,epsfig,psfrag,amsmath,amssymb,enumerate,graphicx}
\usepackage[utf8]{inputenc}
\usepackage{graphics,epsf,psfrag,epsfig}

\setlength{\oddsidemargin}{5mm}
\setlength{\evensidemargin}{5mm}
\setlength{\textwidth}{150mm}
\setlength{\headheight}{0mm}
\setlength{\headsep}{12mm}
\setlength{\topmargin}{0mm}
\setlength{\textheight}{220mm}
\setcounter{secnumdepth}{2}


\numberwithin{equation}{section}

\newtheorem{theorem}{Theorem}[section]
\newtheorem{lemma}[theorem]{Lemma}
\newtheorem{proposition}[theorem]{Proposition}

\newtheorem{rem}[theorem]{Remark}

\renewcommand{\ge}{\geq}
\renewcommand{\le}{\leq}
\newcommand{\ind}{\mathbf{1}}

\renewcommand{\tilde}{\widetilde}
\renewcommand{\hat}{\widehat}

\newcommand{\cT}{{\ensuremath{\mathcal T}} }

\newcommand{\bP}{{\ensuremath{\mathbf P}} }
\newcommand{\bE}{{\ensuremath{\mathbf E}} }


\DeclareMathSymbol{\leqslant}{\mathalpha}{AMSa}{"36} 
\DeclareMathSymbol{\geqslant}{\mathalpha}{AMSa}{"3E} 
\DeclareMathSymbol{\eset}{\mathalpha}{AMSb}{"3F}     
\renewcommand{\leq}{\;\leqslant\;}                   
\renewcommand{\geq}{\;\geqslant\;}                   
\newcommand{\dd}{\,\text{\rm d}}             

\newcommand{\maxtwo}[2]{\max_{\substack{#1 \\ #2}}} 


\newcommand{\bbE}{{\ensuremath{\mathbb E}} }

\newcommand{\bbN}{{\ensuremath{\mathbb N}} }

\newcommand{\bbP}{{\ensuremath{\mathbb P}} }

\newcommand{\bbR}{{\ensuremath{\mathbb R}} }

\newcommand{\bbZ}{{\ensuremath{\mathbb Z}} }


\newcommand{\gd}{\delta}
\newcommand{\gep}{\varepsilon}       

\newcommand{\gD}{\Delta}

\newcommand{\gO}{\Omega}
\newcommand{\gl}{\lambda}

\makeatletter
\def\captionfont@{\footnotesize}
\def\captionheadfont@{\scshape}

\long\def\@makecaption#1#2{%
  \vspace{2mm}
  \setbox\@tempboxa\vbox{\color@setgroup
    \advance\hsize-6pc\noindent
    \captionfont@\captionheadfont@#1\@xp\@ifnotempty\@xp
        {\@cdr#2\@nil}{.\captionfont@\upshape\enspace#2}%
    \unskip\kern-6pc\par
    \global\setbox\@ne\lastbox\color@endgroup}%
  \ifhbox\@ne 
    \setbox\@ne\hbox{\unhbox\@ne\unskip\unskip\unpenalty\unkern}%
  \fi
  \ifdim\wd\@tempboxa=\z@ 
    \setbox\@ne\hbox to\columnwidth{\hss\kern-6pc\box\@ne\hss}%
  \else 
    \setbox\@ne\vbox{\unvbox\@tempboxa\parskip\z@skip
        \noindent\unhbox\@ne\advance\hsize-6pc\par}%
\fi
  \ifnum\@tempcnta<64 
    \addvspace\abovecaptionskip
    \moveright 3pc\box\@ne
  \else 
    \moveright 3pc\box\@ne
    \nobreak
    \vskip\belowcaptionskip
  \fi
\relax
}
\makeatother
\def\writefig#1 #2 #3 {\rlap{\kern #1 truecm
\raise #2 truecm \hbox{#3}}}


\newcommand{\Tm}{T_{\rm mix}}

\title[Scaling limit of polymer pinning]
{The scaling limit of polymer pinning dynamics and a one dimensional Stefan freezing problem}

\address{CEREMADE, Place du Mar\'echal De Lattre De Tassigny
75775 PARIS CEDEX 16 - FRANCE}
\email{lacoin@ceremade.dauphine.fr}
\author{Hubert Lacoin}

\begin{document}
 
\begin{abstract}
We consider the stochastic evolution of a $1+1$-dimensional interface (or polymer) in presence of a substrate. This stochastic process
is a dynamical version of the homogeneous pinning model.
We start from a configuration far from equilibrium: a polymer with a non-trivial macroscopic height profile, and look at the evolution of a
space-time rescaled interface. 
In two cases, we prove that this rescaled interface has a scaling limit on the diffusive scale
(space rescaled by $L$ in both dimensions and time rescaled by $L^2$ where $L$ denotes the length of the interface) which we describe:
when the interaction with the substrate is such that the system is unpinned at equilibrium, then 
the scaling limit of the height profile is given by the solution of the heat equation with Dirichlet boundary condition ; when the attraction to the substrate 
is infinite, the scaling limit is given a free-boundary problem which belongs to the class of Stefan problems with contracting boundary, also referred 
to as Stefan freezing problems.
In addition, we prove the existence and regularity of the solution to this problem until a maximal time, where the boundaries collide.
\\
2010 \textit{Mathematics Subject Classification: 82C24, 80A22, 60F99.}  \\  
\textit{Keywords: Heat-bath dynamics, Scaling Limit, Stefan Problem, Interface Motion, Pinning model.}
\end{abstract}

\maketitle

\tableofcontents
\section{Introduction}

\subsection{The dynamical pinning model}

Random polymer models have been used for a long time by physicists to describe a large variety of physical phenomena. 
Among the numerous models that have been introduced by theoretical physicists and rigorously studied by mathematicians
(see e.g.\ \cite{cf:denH} for a survey of the most studied polymer models), 
the polymer pinning model, that involves a simple random walk interacting with a defect line, has focused a 
lot of interest
 both of the mathematics and physics community. The phase transition phenomenon between a 
pinned phase and a depinned one is now well understood, even in presence of disorder (see \cite{cf:Fisher} for a seminal paper concerning the homogeneous case, and 
\cite{cf:GB, cf:GB2} for recent reviews).

\medskip

On the other hand, dynamical pinning (which has some importance in biophysical application see \cite{cf:ABKM, cf:BKM} and references therein) 
has attracted attention of mathematicians only more recently and a lot of questions
concerning relaxation to equilibrium and its connection properties are still unsolved.

\medskip

The object of most of the mathematical studies on the dynamical pinning model up to now (see \cite{cf:CMT, cf:CLMST}) has been the mixing 
time and the relaxation time for the dynamics. It has been shown there that the mixing property of the system depends in a crucial way of 
the pinning parameter $\gl$, or more precisely, on whether the polymer at equilibrium is pinned or unpinned.

\medskip

In the present paper, we choose to study a different aspect, that is, the dynamical scaling limit of the height-profile of the polymer.
We start from an initial condition that is very far from equilibrium and approximates a macroscopic deterministic profile and we want to 
describe the evolution
of the profile under diffusive scaling. Our aim is to show that the nature of the limit of the rescaled process depends only on whether one lies in the pinned or depinned phase, and to describe 
explicitly the scaling limit in each case.

\medskip

The scaling limit is easier to guess in the unpinned phase. As in this case, there is no contact point with the substrate at equilibrium, one can infer that
the scaling limit is the same that for a system with no substrate, for which it is known that the height profile 
converges to the solution of the heat-equation
on the segment with zero Dirichlet boundary condition.

\medskip

In the pinned phase, a more interesting phenomenon takes place. In this case the dynamical picture should be the following: there are macroscopic region where 
the polymer is pinned to the substrate and other regions where the polymer stays at a macroscopic distance from it; the boundary between the pinned 
and the unpinned region is moving in time and in the unpinned region, the polymer profile evolves according to the heat-equation. The system 
reaches equilibrium when the unpinned region has totally vanished.

\medskip

In the present paper, we prove that this picture holds when the pinning parameter is infinite (or tending to infinity sufficiently fast with the size of the system).
An important step to establish this result is to prove existence and regularity of the free-boundary problem that appears in the scaling limit.


\subsection{A one dimensional Stefan freezing problem}

The free-boundary problem of unknown $(f,l,r)$ ($f$ is the function and $l$ and $r$ are the boundary of the unpinned region) that appears as the scaling limit of the pinning model in the pinned phase is the following

\begin{equation}\label{sstef}
\begin{cases} 
 \partial_t f- f_{xx}=0 \quad \text{on }  (l(t),r(t)),\\
f(\cdot,t)\equiv 0 \quad \text{on } [-1,1]\setminus (l(t),r(t)),\\
f_x(l(t),t)=- f_x(r(t),t)=1,\\
l'(t)=- f_{xx}(l(t),t),\quad  r'(t)=f_{xx}(r(t),t),\\
f(\cdot,0)=f_0,\ l(0)=l_0,\ r_0.
\end{cases}
\end{equation}
We are exclusively interested in the case of a $1$-Lipshitz initial condition that vanishes outside of the interval $(r_0,l_0)\subset [-1,1]$ and
is positive inside. This problem belongs to the class of Stefan problem, which have been introduced in mathematics to 
describe the evolution of a multiphase medium.

\medskip

The boundary condition for $f_x$ and the fact 
that the heat equation preserves Lipshitzianity imply that $f$ cannot be convex in the neighborhood of the moving boundaries,
and thus the boundary points  $l(t)$ and $r(t)$ are moving towards each other ($l'\ge 0$ and $r'\ge 0$). These problems with contracting boundary
are referred to as \textit{freezing} problems whereas those with expanding boundary are called \textit{melting} problems.

\medskip

Most of the work in the literature on Stefan problems concerns melting problems. One of the reason for this is that these problem can be rewritten
as a diffusion equation for an enthalpy function with a  diffusion coefficient that  is monotonous increasing function of the enthalpy. 
This monotonicity allows to derive uniqueness of the solution with some generality (we refer to the Introduction of \cite{cf:CS} for more precision).
The freezing problems like \eqref{sstef} on the contrary are more challenging even in the one dimensional setup. 

\medskip

Even though one dimensional freezing problems have attracted some attention in a recent past \cite{cf:CK,cf:CK2}, some amount of work is required to 
establish the existence and the unicity of a solution to \eqref{sstef}
up to a maximal time.

\section{Model and results}

\subsection{A simple model for interface motion with no constraint}\label{cornerflip}

To introduce our reader to polymer dynamics,
we first introduce the simplest version of the model where no substrate is present: this is the so-called corner-flip dynamics.
Let $\gO=\gO_{L}$ denote the set of all lattice 
paths (polymers) starting at $0$ and ending at $0$ after $2L$ steps 

\begin{equation}
\gO^{0}_L :=
\{\eta\in\bbZ^{2L+1} \ | \
\eta_{-L}=\eta_L=0\,,\;|\eta_{x+1}-\eta_x|= 1, \;x=-L,\dots,L-1\}\,.
\end{equation}
The stochastic dynamics is defined by the natural spin-flip continuous 
time Markov chain with state space $\gO^0_L$. Namely, sites $x=-L,\dots,L$ 
are equipped with independent Poisson clocks which ring with rate one: when a clock rings at $x$
the path $\eta$ is replaced by $\eta^{(x)}$, defined by $\eta_y^{(x)}=\eta_x^{(x)}$ for all $y\ne x$ and
\begin{equation}
 \eta_x^{(x)}:=
\begin{cases}
 \eta_x+2 \text{ if }  \eta_{x\pm 1}=\eta_x+1,\\
 \eta_x-2 \text{ if }  \eta_{x\pm 1}=\eta_x-1,\\
 \eta_x \text{ if } |\eta_{x+1}-\eta_{x-1}|=2.
\end{cases}
\end{equation}
One denotes by ($\tilde \eta(\cdot,t)$, $t\ge 0$), the trajectory of the Markov chain. By doing linear interpolation between the integer values of $x$, 
one can consider $\tilde \eta(\cdot,t)$ as a function of the real variable $x\in [-L,L]$.

The unique invariant measure for this dynamics is the uniform measure on $\gO^{0}_L$, and thus, from standard properties of the random walk,
when the system is at equilibrium, the rescaled interface  $(\eta(Lx)/L)_{x\in[-1,1]}$ is macroscopically flat 
($\eta(x)$ has fluctuation of order $\sqrt{L}$).
For this model, relaxation to equilibrium is well understood both in terms of mixing-time (see Wilson \cite{cf:W}) or scaling limits
(see \cite[Theorem 3.2]{cf:LST}, weaker versions of these result had been known before, using connection with the one dimensional simple 
symmetric exclusion process).

We cite in full the result concerning the scaling limit for two reasons: 
it gives some point of reference to better understand results in presence of a substrate; and we 
use it as a fundamental building brick for the proof of our new results.

\medskip

Given $f_0$ a Lipshitz function in $[-1,1]$, with $f_0(-1)=f_0(1)=0$,
set $\tilde f$ defined on $[-1,1]\times [0,\infty)$ to be the solution of the heat equation with Dirichlet boundary condition

\begin{equation}\label{ssatef}
\begin{cases}
\partial_t \tilde f- \tilde f_{xx}&=0,\\
\tilde f(\cdot, 0)&=f_0,\\
\tilde f(-1,t)&=\tilde f(1,t)=0, \quad \forall t>0.
\end{cases}
\end{equation}

\begin{theorem}[\cite{cf:LST} Theorem 3.2] \label{frefer}
\label{dynsanw}
Let $\tilde \eta^L$ be the dynamic described above, starting from a sequence of initial condition
$\eta^L_0$ that satisfies,
\begin{equation} \label{condinitcond}
\eta^L_0(x)=L(f_0(x/L)+o(1)) \text{ uniformly in $x$ when $L\to \infty$ }.
\end{equation}
Then, under diffusive scaling, $\tilde \eta^L$ converges to $\tilde f$ in law for the uniform topology: for any $T>0$, in probability,
\begin{equation}
\lim_{L\to \infty} \sup_{x\in[-1,1],t\in [0,T]} \left|\frac{1}{L}\tilde \eta^L(Lx,L^2t)-\tilde f(x,t)\right|=0.
\end{equation}

\end{theorem}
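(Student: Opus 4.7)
The plan is to reduce the statement to the classical hydrodynamic limit for the Symmetric Simple Exclusion Process (SSEP) via the standard height/gradient correspondence. For each $\eta\in\gO^0_L$, set
\begin{equation}
\xi_x := (\eta_x - \eta_{x-1} + 1)/2 \in \{0,1\}, \qquad x = -L+1, \ldots, L.
\end{equation}
The boundary constraint $\eta_{-L}=\eta_L=0$ is equivalent to $\sum_x \xi_x = L$, and a corner-flip at site $x$ amounts precisely to swapping $\xi_x$ and $\xi_{x+1}$, so $(\xi_\cdot(t))_{t\ge 0}$ evolves as SSEP on $2L$ sites with no exchange through the boundary. The height profile is recovered by
\begin{equation}
\frac{1}{L}\tilde \eta^L(Lx, s) = \frac{1}{L}\sum_{y=-L+1}^{\lfloor Lx \rfloor}(2\xi_y(s) - 1),
\end{equation}
and the assumption \eqref{condinitcond} translates into convergence of the initial density profile to $(1 + f_0')/2$ in the weak sense.

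Next, I would invoke the hydrodynamic limit for finite-volume SSEP with reflecting boundaries: under the diffusive rescaling $s = L^2 t$, the empirical density converges in probability, uniformly in $(x,t)\in[-1,1]\times[0,T]$, to the unique classical solution $\rho$ of $\partial_t \rho = \partial_{xx}\rho$ on $(-1,1)$ with Neumann boundary condition $\partial_x \rho(\pm 1, t) = 0$ and initial datum $(1 + f_0')/2$. The cleanest route, exploiting the linearity peculiar to SSEP, is via duality: the expectations $u_L(x,s) := \bE[\xi_x(s)]$ satisfy a discrete heat equation on $\{-L+1,\ldots,L\}$ with reflection, whose solutions converge under diffusive scaling to $\rho$ by a standard semi-discrete heat-kernel estimate. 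A variance bound (using negative association of SSEP, or equivalently a second-moment computation based on two-particle duality) then upgrades convergence of means to convergence in probability.

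To close the proof, define $\tilde g(x,t) := \int_{-1}^x (2\rho(y,t) - 1)\dd y$. Integrating the density equation gives $\partial_t \tilde g = \partial_{xx} \tilde g$ with $\tilde g(\cdot, 0) = f_0$; the boundary value $\tilde g(-1, t) = 0$ holds by construction, and $\tilde g(1, t) = \int_{-1}^1 (2\rho - 1)\dd y = f_0(1)-f_0(-1) = 0$ by global mass conservation, so $\tilde g$ coincides with the solution $\tilde f$ of \eqref{ssatef}. To lift convergence from a dense set of $(x,t)$ to the uniform topology, I would use that $x \mapsto \tilde \eta^L(Lx, s)/L$ is deterministically $1$-Lipschitz (since $|\eta_{x+1}-\eta_x|=1$), together with standard tightness of the SSEP empirical measure on the diffusive time scale; combined with the analogous regularity of $\tilde f$, this yields the claimed uniform in-probability convergence.

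The main obstacle is the hydrodynamic limit step itself, in particular making the convergence uniform in $(x,t)$ rather than merely in an integrated sense. Any of the classical approaches (the relative entropy method of Guo-Papanicolaou-Varadhan, the Kipnis-Landim martingale method, or the direct duality argument sketched above) can be made to work, but some care is required to accommodate the reflecting boundary and the only-Lipschitz regularity of $f_0$. In the SSEP setting this is essentially automatic, since duality reduces the whole problem to a purely deterministic convergence of discrete heat kernels for which quantitative rates are available.
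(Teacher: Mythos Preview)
The paper does not prove this theorem: it is quoted verbatim from \cite{cf:LST} (see the attribution in the theorem header) and used as a black box throughout, notably in the proof of Proposition~\ref{coucc} and in Section~\ref{mainressec}. So there is no ``paper's own proof'' to compare your proposal against.

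That said, your route is the natural one and is essentially how such results are established. The gradient map $\xi_x=(\eta_x-\eta_{x-1}+1)/2$ sending corner-flip dynamics to SSEP is standard, and for SSEP the duality/moment method you outline is the cleanest way to get quantitative convergence of the one-point function to the discrete heat kernel, with a two-particle duality controlling the variance. Two small points worth tightening in a full writeup: first, check the diffusion constant carefully (a rate-$1$ clock at each \emph{site}, flipping when possible, corresponds to bond-exchange rate $1$ in SSEP, and one should verify this matches the normalisation $\partial_t\tilde f=\tilde f_{xx}$ in \eqref{ssatef}); second, the upgrade from convergence at a mesh of space-time points to uniform convergence on $[-1,1]\times[0,T]$ needs, beyond the $1$-Lipschitz property in $x$ that you mention, some equicontinuity in $t$. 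For the height profile this follows e.g.\ from a direct bound on $\bE|\eta_x(L^2 t)-\eta_x(L^2 s)|$ via the discrete heat kernel, or from a Kolmogorov-type tightness criterion. Your identification of this as the only genuine obstacle is accurate.
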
 

The notation in Equation \eqref{condinitcond} means that there exists a function $\gep_L$ tending to zero such that for all $x$ and $L$
\begin{equation}
 \left|\frac{\eta^L_0(x)-L(f_0(x/L))}{L}\right|\le \gep_L.
\end{equation}
We keep  this notation for the rest of the paper.

\subsection{Dynamical polymers interacting with a substrate}

Let us now define precisely the model that is the object of study of this paper.
Our aim is to understand how the pattern of relaxation to equilibrium given by Theorem \ref{frefer} is modified (or not modified) when 
the dynamics has additional constraints. 
We focus on the case of an interface interacting with a solid substrate.
This brings us to consider a dynamics with the following modifications:
\begin{itemize}
\item We consider that a solid wall fills the entire bottom half-plane so 
that our trajectories have to stay in the positive half-plane
($\eta_x\ge 0$, $\forall x \in \{-L,\dots, L\}$). 
\item The wall interacts with the interface $\eta$ so that the rates of the transitions that modifies the number of contacts with the wall are changed.
\end{itemize}
More precisely one starts from a trajectory that lies entirely above the wall, i.e.\ which belongs to the following subset of $\gO_L^0$:
\begin{equation}
\gO_L :=
\{\eta\in\bbZ^{2L+1}\ | \;
\eta_{-L}=\eta_L=0\,; \  \forall x\in\{-L,\dots,L-1\},\ |\eta_{x+1}-\eta_x|= 1, \eta_x\ge 0 \}\,.
\end{equation}

The rates of the transitions from $\eta$ to $\eta^{(x)}$ are not uniformly equal to $1$ as in the previous section but they are given by

\begin{equation}\label{jumpr}
c(\eta,\eta^{(x)})=
\begin{cases} 
0 \text{ if } \eta_x^{(x)}=-1 \quad \text{(interdiction to go through the wall)},\\
\frac{2\gl}{1+\gl} \text{ if } \eta_x=2 \text{ and } \eta_x^{(x)}=0,\\
\frac{2}{1+\gl} \text{ if } \eta_x=0 \text{ and } \eta_x^{(x)}=2,\\
1 \text{ in every other cases }.
\end{cases}
\end{equation}

The generator $\mathcal L=\mathcal L^\gl_L$ of the Markov process is given by

\begin{equation}\label{generator}
 \mathcal L(f)=\sum_{x=-L+1}^{L-1} c(\eta,\eta^{(x)})(f(\eta^{(x)})-f(\eta).
\end{equation}

The value of the parameter $\gl\in [0,\infty]$ determines the nature of the interaction with the wall.
If $\gl>1$, the transitions adding a contact are favored, which means that the wall is attractive, whereas if $\gl<1$ 
the wall is repulsive.

\medskip

The process defined above is the {\em heat-bath} dynamics for the {\em polymer pinning model}, 
 with equilibrium measure $\pi=\pi_{2L}^\gl$ on $\gO_L$  defined by
 \begin{equation}\label{equ}
  \pi^{\gl}_{L}(\eta)
  = \frac{\gl^{N(\eta)}}{Z_{2L}^\gl}\,,
 \end{equation}
 where $$N(\eta) = \#
\{x\in[-L+1,L-1]\,:\;\eta_x=0\}$$ denotes the number of zeros in the path $\eta\in\gO$ and 
\begin{equation}
Z_{2L}^\gl := \sum_{\eta'\in \gO_L}
\gl^{N(\eta')}
\end{equation}
is the partition function, which is the renormalization factor that makes $\pi_{L}^\gl$ a probability.

For every $\gl>0$, $L\in\bbN$, $\pi_{L}^\gl$  is the unique reversible 
 invariant measure for the Markov chain. For any value of $\gl$, the 
rescaled version of $\eta$ at equilibrium (that is, under the measure $\pi_{L}^\gl$) is flat, 
but the microscopic properties of $\eta$ vary with the value of $\gl$:

\begin{itemize}
 \item [(i)] When $\gl\in [0,2)$, the interface is repelled by the wall (i.e.\ when $\gl\in (1,2)$, the entropic repulsion wins against the 
energetic attraction of the wall) 
and typical paths have a number of contacts with the wall which stays bounded when $L$ tends to infinity (the sequence of the laws of $N(\eta)$ is tight).
 \item [(ii)] When $\gl \in (2,\infty]$, the interface is pinned to the wall, and typical paths have a number of contacts with the wall which 
is of order $L$.
 \item[(iii)] When $\gl=2$, $\eta$ has a lot of contact with the wall (order $\sqrt{L}$) but the longest excursion away from the wall has length of order $L$.
\end{itemize}
For more precise statements and proofs, we refer to Chapter 2 in \cite{cf:GB}.
These three cases are respectively referred to as the depinned or unpinned phase, the pinned phase, and the critical point (or phase transition point).

\begin{rem} \rm
The case $\gl=\infty$, that will be also considered in this paper is a bit particular. 
Indeed, as seen in \eqref{jumpr}, when $\eta_x=0$ (when $\eta$ touches the wall at $x$),
it sticks to it forever, so that $\eta(\cdot,t)$ stops moving once it has reached 
the minimal configuration
$ \eta^{\min} $ defined by
\begin{equation}\label{etamin}
 \eta^{\min}_x:=\begin{cases}
                0 \text{ if } x+L \text{ is even },\\
1 \text{ if } x+L \text{ is odd },
               \end{cases}
\end{equation}
which is the configuration with the maximal number of contact point with the wall.
In that case, the unique invariance probability measure is the 
Dirac mass on $\eta^{\min}$. A question of interest is then to compute the time at 
which $\eta(\cdot,t)$ stops to move: the hitting time of $\eta^{\min}$.

\end{rem}

\medskip

Our aim is to get a result similar to Theorem \ref{dynsanw}, describing how, starting from a non-flat profile, the system relaxes to equilibrium.
We are able to deduce results in two cases:

\begin{itemize}
 \item in the depinned phase, when $\gl\in[0,2)$: in that case the scaling limit is the same one as for the model without wall.
 The result can be obtained with rather soft comparison arguments when $\gl\le 1$ but requires some additional work when $\gl\in(1,2)$.
 \item when the wall is sticky, $\gl=\infty$: in that case, the attraction of the wall can be seen ot the macroscopic level,
and the scaling limit is given by the solution of a partial differential equation with moving boundary: the free bounary problem \eqref{sstef}.
\end{itemize}

We can understand what happens when $\gl=2$ (the critical point) and when $\gl\in(2,\infty)$ (the pinned phase) at a heuristic level, and formulate 
this as conjectures (see Section \ref{secconj}). There are a lot of technical reasons why bringing these conjectures to rigorous ground cannot be done only we the ideas exposed in 
this paper. We might address this issue in future work.

\subsection{Scaling limit in the repulsive case}

Our first result is an analog of Theorem \ref{frefer} for the dynamic in the depinned phase.

\begin{theorem}\label{repcase}
Let $\eta=\eta^{L,\gl}$ be the dynamic on $\gO_L$ with generator $\mathcal L$ described above, with the parameter $\gl\in [0,2)$ 
and starting from a sequence of initial condition
$\eta^L_0$ satisfying 
\begin{equation} \label{assum}
\eta^L_0(x)=Lf_0(x/L)(1+o(1)) \text{ uniformly in $x$ when $L\to \infty$ },
\end{equation}
where $f_0$ is a $1$-Lipshitz non-negative function.

\medskip

Then $\eta^{L,\gl}$ converges to $\tilde f$ defined by \eqref{ssatef} in law for the uniform topology in the sense that for any $T>0$,
\begin{equation}
\lim_{L\to \infty} \sup_{x\in[-1,1],t\in [0,T]} \left|\frac{1}{L}\eta^L(Lx,L^2t)-\tilde f(x,t)\right|=0,
\end{equation}
in probability.
\end{theorem}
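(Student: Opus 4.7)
The plan is to deduce Theorem \ref{repcase} from Theorem \ref{frefer} via monotone couplings between $\eta^{L,\gl}$ and the free (wall-less) dynamics $\tilde\eta^L$. The guiding principle is that $\gl<2$ places the system in the depinned phase, where equilibrium supports only $O(1)$ contacts with the wall, so the wall should be macroscopically invisible and the limit must agree with $\tilde f$ from \eqref{ssatef}.

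For $\gl\in[0,1]$ I would use a graphical construction in which both $\eta^{L,\gl}$ and $\tilde\eta^L$ run from the same initial condition $\eta_0$ with shared Poisson clocks and shared uniforms driving the heat-bath updates. The wall conditional distribution dominates the free one site by site: at neighbors $(1,1)$ because $\gl/(1+\gl)\leq 1/2$, and at neighbors $(0,0)$ because the wall forces $\eta_x=1$ where the free update allows $-1$; elsewhere the two conditional laws coincide. Attractiveness of the heat-bath then preserves this ordering, giving $\eta^{L,\gl}\geq\tilde\eta^L$ pointwise at all times. Theorem \ref{frefer} together with $\tilde f\geq 0$ then yields the lower bound $\eta^{L,\gl}/L\geq\tilde f-o(1)$ uniformly. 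For the matching upper bound I would start the wall dynamics instead from a suitably chosen lattice path $\eta_0^+$ lying $c_L$ above $\eta_0$ in the bulk, with $c_L\to\infty$ and $c_L=o(L)$ (for instance $c_L=L^{3/4}$). Since the fluctuations of the free dynamics from $\eta_0^+$ are of order $\sqrt{L}\ll c_L$ on $[0,L^2T]$, with high probability it never touches the wall on that window and hence coincides with the wall dynamics from $\eta_0^+$; attractiveness of the heat-bath in the initial condition then gives $\eta^{L,\gl}\leq\eta^{L,\gl,+}$, and the right-hand side converges to $\tilde f$ by Theorem \ref{frefer} since the lift $c_L/L$ is macroscopically invisible.

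For $\gl\in(1,2)$ the wall becomes attractive and the coupling above fails in one direction, so a softer argument is required. Since the system is still subcritical, equilibrium estimates (Chapter 2 of \cite{cf:GB}) give tightness of the contact number. I would first upgrade this to a dynamical bound via a reversibility/entropy estimate, and then run a censoring argument suppressing updates at contact events, reducing the analysis back to the repulsive case modulo a perturbation of order $o(L)$ on the height profile. The main obstacle is precisely this subcritical attractive regime: absent any direct stochastic ordering between $\eta^{L,\gl}$ and $\tilde\eta^L$, one has to quantify the microscopic time spent by the polymer at the wall and propagate this control to the diffusive scale. The difficulty is compounded by the fact that $f_0$ is only assumed $1$-Lipschitz and may vanish on intervals, so the polymer may start with a dense set of microscopic contacts whose potential macroscopic effect on the profile must be ruled out.
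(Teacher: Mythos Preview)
For $\lambda \in [0,1]$ your approach is essentially the paper's: the lower bound comes from the stochastic domination $\eta^{L,\lambda} \ge \tilde\eta^L$ (Section~\ref{monoton}), and the upper bound from a lifted comparison dynamics that stays away from the wall. One technical point: your claim that the free dynamics from $\eta_0^+$ ``never touches the wall and hence coincides with the wall dynamics'' needs care near the endpoints $\pm L$, where the path is pinned to $0$ and where the wall rates differ from the free ones already at height $2$ (the transition $2\to 0$ has rate $2\lambda/(1+\lambda)$, not $1$). The paper handles this by freezing all updates within distance $L^{3/4}$ of the boundary (Proposition~\ref{coucc}), which cleanly decouples the bulk corner-flip dynamics from boundary effects; your lifted construction would need a similar modification to be correct.

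For $\lambda \in (1,2)$ there is a genuine gap. Your proposed ``reversibility/entropy estimate'' followed by a ``censoring argument suppressing updates at contact events'' is not a proof: equilibrium tightness of the contact number does not by itself control how much time the \emph{dynamics} spends near the wall when started far from equilibrium, and it is unclear how censoring contacts would yield a useful comparison (Peres--Winkler censoring controls mixing, not scaling limits, and suppressing contact transitions destroys reversibility). The paper's argument is entirely different and does not try to reduce to the repulsive case. It exploits the fact that $\Phi(\eta) = \sum_x \cos(\pi x/(2L))\,\eta_x$ is nearly an eigenfunction of $\mathcal L$ with eigenvalue $-\kappa_L \approx -\pi^2/(4L^2)$: the error terms in $\mathcal L \Phi$ are supported on contact events $\{\eta_{x\pm 1}=0\}$ and $\{\eta_{x\pm 1}=1\}$, whose probabilities are bounded via mixing-time estimates for the dynamics on short subintervals (Lemma~\ref{contact}). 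This gives $|\mathbb E[\Phi(\eta(t))] - e^{-\kappa_L t}\Phi(\eta_0)| \le L^{7/4}$, and a stopping-time argument upgrades it to convergence in probability of the first Fourier coefficient of $\bar\eta$ to that of $\tilde f$, uniformly on $[0,T]$ (Proposition~\ref{downar}). Combining this single Fourier mode with the uniform \emph{upper} bound from Proposition~\ref{coucc} and the $2$-Lipschitz property of $\bar\eta - \tilde f$ then forces uniform convergence. The key idea you are missing is this spectral trick, which substitutes a one-mode $L^1$-type estimate for the unavailable monotone coupling.
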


This result is not much of a surprise. For this range of $\gl$, the wall is pushing the trajectory
 $\eta$ away, so that for most of the time $\eta(t)$ lies in the wall-free zone. This is the reason why 
the effect of the wall does not appear in the scaling limit.
We believe that this also to be the case for $\gl=2$, but the fact that $\sqrt{L}$ contact with the wall
can appear at equilibrium instead of $O(1)$ brings an additional  technical challenge.

\subsection{Toward the scaling limit for pinning on a sticky substrate}

We move now to the case $\gl=\infty$. In that case (recall \eqref{jumpr}), the corners on the interface flip with rate $1$ if it does not change the number of contact with the wall, with rate $2$ if it adds one contact, and the contacts with the wall cannot be removed and stay forever.
In that case, it is known that with large probability after a time $L^2$ the dynamics ends up with a path completely stuck 
to the substrate (with probability tending to $1$), (see \cite[Proposition 5.6]{cf:CMT} or Lemma \ref{drifta} below). This implies in particular that 
the scaling limit in this case cannot 
be given by \eqref{ssatef}.

\medskip

Let us try to give some heuristic justification for the PDE problem that rules the evolution of scaling limit $f$. 
We suppose that the polymer path consists of a pinned region where it sticks to the wall and
$f \equiv 0$ and an unpinned region which corresponds to an interval $[Ll(t),Lr(t)]$ 
(i.e.\ $(l(t),r(t))$ for the rescaled picture) so that $f(t,l(t))=f(t,r(t))=0$.
In the unpinned region the wall has no influence and thus Theorem \ref{frefer} indicates that one should have $\partial_t f-f_{xx}=0$.
What is left to be determined is the speed at which the boundary of the pinned region moves (value of of the time derivative $l'(t)$ and $r'(t)$) 
and/or the boundary condition that $f$ has to satisfy at the boundary of $(l(t),r(t))$.

\medskip

\begin{rem} \rm
 With the boundary condition that one considers, $f$ is not space derivable at the extremities of $[l(t),r(t)]$. 
In what follows, when one talk about the space derivatives of $f$ at point $l(t)$, resp.\ $r(t)$, we refer to right resp.\ left derivatives.
\end{rem}

\medskip

What we want to justify here is that the slope of the scaling limit at the left (resp. right) boundary of the pinned region,
given by $f_x(l(t),t)$ (resp. $f_x(r(t),t)$) has to be equal to $+1$ (resp. $-1$).
The reason for this is that, as the scaling is diffusive, the mean-speed of the left-boundary of 
the pinned zone for the non-rescaled dynamics has to be of order $1/L$. 
This can be achieved only if the density of down-steps near the left boundary is vanishing, and hence if
 $f_x(l(t),t)=1$.

\medskip

Combining this boundary condition with $\partial_t f-f_{xx}$, and doing some trigonometry it implies (at least at the heuristic level) that one must also have $l'(t)=-f_{xx}(t,l(t))$ and $r'(t)= f_{xx}(t,l(t))$. Thus, $f$ should be the solution of 
\eqref{sstef}.

\subsection{Solving the Stefan problem}

The problem \eqref{sstef} is slightly overdetermined but this obstacle vanishes if one considers the derivative problem,

\begin{equation}\label{stef}
\begin{cases}
\partial_t \rho- \rho_{xx} \text{ in }   \quad \left(l(t);r(t)\right),\\
\rho(l(t),t)\equiv 1 \text{ on } [-1,l(t)], \quad   \rho(r(t),t)=-1 \text{ on } [l(t),1]\\
l'(t)=- \rho_x(l(t),t), \quad  r'(t)=\rho_x(r(t),t),\\
\rho(\cdot,0)=\rho_0 \text{ on } (r_0,l_0). 
\end{cases}
\end{equation}

A problem very similar to \eqref{stef} has been considered by Chayes and Kim in \cite{cf:CK} but with the third line replaced by
$$l'(t)=- \rho_x(l(t),t)/2 \text{ and } r'(t)=\rho_x(r(t),t)/2.$$ 
Note that the formulation in \cite{cf:CK} is slightly differs but this is what one finds after appropriate rescaling).
This small change has big consequences on the behavior of the solution.
Whereas for the problem considered by \cite{cf:CK}, the solution exists until a maximal time where $r(t)=l(t)$ for all 
reasonable initial condition $\rho_0$, our problem might show some degeneracy when $l$ and $r$ are still well apart (see Section \ref{discuss}).

\medskip

What we show in this paper is that this kind of complication does not occur for the initial conditions we are interested in.
Furthermore, we establish regularity and further additional properties of interest.

\begin{theorem}\label{existence}
Suppose that $f_0$ is a $1$-Lipshitz function positive and smooth on $(l_0,r_0)$ and satisfies the boundary condition 
$f'(l_0)=-f'(r_0)=1$.
Then the free boundary problem \eqref{sstef} has a unique classical solution up to time 
$$T^*=\frac{1}{2}\int^{r_0}_{l_0} f(x)\dd x,$$
at which the area below the curve vanishes.

\medskip

Furthermore:
\begin{itemize}
\item [(i)] $r(t)$ and $l(t)$ are $C^\infty$ on on the interval $(0,T^*)$,
\item [(ii)]$f$ becomes concave on $(l(t),r(t))$ before time $T^*$,
\item[(iii)] $\lim_{t\to T^*}( r(t)-l(t))=0$.
\end{itemize}

Equivalently if $\rho_0$ is the derivative of a function $f_0$ that has the above properties, then \eqref{stef}
has a solution until time $T^*$ where the two boundary meets.

\end{theorem}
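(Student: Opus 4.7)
My plan is to work primarily with the derivative formulation (\ref{stef}) for $\rho = f_x$, which is the well-posed version of the problem. The strategy is: (a) straighten the free boundary to get a fixed-domain problem, (b) obtain local existence via contraction, (c) derive global-in-time a priori bounds from the maximum/Hopf principles, (d) extend up to $T^*$ by continuation together with the explicit area identity, and finally (e) address smoothness of $l,r$ and the concavity claim.

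For local existence, I introduce the change of variable $y=(x-l(t))/(r(t)-l(t))$ to obtain a quasilinear parabolic equation on the fixed interval $[0,1]$ for $\rho(y,t)$ coupled to the ODE system $l'(t)=-\rho_x(l(t),t),\ r'(t)=\rho_x(r(t),t)$. The hypotheses $\rho_0(l_0)=1$, $\rho_0(r_0)=-1$ and smoothness of $f_0$ guarantee the needed compatibility so that Schauder estimates apply; a standard contraction mapping argument in a parabolic Hölder space yields a unique classical solution $(\rho,l,r)$ on a short time interval $[0,\tau]$. For global extension, the key ingredients are the bounds $|\rho|\leq 1$ (maximum principle, consistent with the $\pm 1$ boundary data coming from $f_0$ being 1-Lipschitz) and, since $\rho$ attains its maximum value $1$ at $x=l(t)$, the Hopf lemma gives $\rho_x(l(t),t)\leq 0$, hence $l'(t)\geq 0$; symmetrically $r'(t)\leq 0$. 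Thus the unpinned zone shrinks monotonically. A direct computation using the boundary conditions on $f$ and $f_x$ gives the crucial area identity
\begin{equation*}
\frac{d}{dt}\int_{l(t)}^{r(t)} f(x,t)\,dx = f_x(r(t),t)-f_x(l(t),t) = -2,
\end{equation*}
which pins down $T^*$ as the time at which the area vanishes.

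To extend the solution up to $T^*$, I use continuation. As long as $r(t)-l(t)>0$ and $\rho$ remains uniformly Hölder, the local existence can be iterated. The former follows from the lower bound $r(t)-l(t)\geq (2(T^*-t))^{1/2}\cdot\sqrt{2}$, obtained by combining the area identity with the geometric inequality $f(x,t)\leq \min(x-l(t),r(t)-x)$, which itself is a consequence of $f\geq 0$ (maximum principle), $f=0$ at the moving boundary, and $|f_x|\leq 1$; this inequality gives $\int f\leq (r(t)-l(t))^2/2$, and so $(r(t)-l(t))^2\geq 4(T^*-t)$. The Hölder control on $\rho$ follows from parabolic regularity theory on the straightened domain, whose coefficients remain bounded away from degeneracy as long as $r(t)-l(t)$ is bounded below. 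Uniqueness on $[0,T^*)$ is obtained either by comparing two solutions within the contraction framework on small overlapping intervals, or by an energy argument for the difference on the straightened domain. Together with the lower bound above, we get $r(t)-l(t)\to 0$ as $t\to T^*$, establishing (iii). Smoothness (i) follows by a parabolic bootstrap: once $\rho\in C^{2+\alpha,1+\alpha/2}$, the ODE $l'(t)=-\rho_x(l(t),t)$ gives $l\in C^{1+\alpha/2}$, which feeds back into Schauder estimates on the straightened equation and improves the regularity; iterating yields $l,r\in C^\infty(0,T^*)$.

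The most delicate point is (ii). I consider $\phi:=f_{xx}$, which by differentiating the heat equation twice satisfies $\phi_t-\phi_{xx}=0$ on the moving domain, with boundary values $\phi(l(t),t)=-l'(t)\leq 0$ and $\phi(r(t),t)=r'(t)\leq 0$ by the Stefan condition. Thus $\phi_+$ is a subsolution of the heat equation on $(l(t),r(t))$ with vanishing trace on the moving boundary. The first Dirichlet eigenvalue on this interval is $\pi^2/(r(t)-l(t))^2$, which blows up as $t\to T^*$ since $r(t)-l(t)\to 0$. A Poincaré/spectral gap argument on the shrinking domain therefore yields exponential decay of $\|\phi_+\|_{L^2}$ at an accelerating rate governed by $\exp\!\bigl(-\pi^2\!\int_0^t (r(s)-l(s))^{-2}\,ds\bigr)$, and this integral diverges as $t\to T^*$; promoting to an $L^\infty$ bound via parabolic regularization, the strong maximum principle then forces $\phi\leq 0$ strictly on $(l(t),r(t))$ for all $t$ sufficiently close to $T^*$, which is exactly the concavity statement. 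The main obstacle is precisely this concavity result: unlike the other items, it is not immediate from maximum principles alone and requires exploiting the quantitative collapse of the domain.
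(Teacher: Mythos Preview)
Your proposal has several genuine gaps, and the overall logical structure is inverted relative to what the problem actually requires.

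\textbf{The continuation argument is circular.} You claim that Hölder control on $\rho$ ``follows from parabolic regularity theory on the straightened domain, whose coefficients remain bounded away from degeneracy as long as $r(t)-l(t)$ is bounded below.'' But after the change of variables $y=(x-l)/(r-l)$, the equation for $\rho$ has drift coefficients proportional to $l'(t)$ and $r'(t)$, and these are precisely $-\rho_x(l(t),t)$ and $\rho_x(r(t),t)$. Schauder estimates require these drift terms to be bounded (at least), yet that is exactly the quantity you are trying to control. No independent a priori bound on $\|f_{xx}\|_\infty$ (equivalently on $l',r'$) is provided. The paper confronts this head-on: it first shows the solution extends until $\|f_{xx}\|_\infty$ blows up at some time $t_1$, and then rules out blow-up before $T^*$ by controlling the entropy-type quantity $\int k\log k\,dx$ for $k=-f_{xx}$. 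The crucial boundary identity $k_x(l(t),t)=-k^2(l(t),t)$ (obtained by differentiating $f_x(l(t),t)=1$) produces cancellations in $\partial_t\int k\log k$, and an Agmon-type inequality closes the estimate. Nothing in your outline replaces this mechanism.

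\textbf{The argument for (iii) is in the wrong direction.} Your inequality $\int_{l}^{r} f \le (r-l)^2/4$ combined with the area identity gives a \emph{lower} bound $(r-l)^2\ge 8(T^*-t)$. This is exactly what is needed to say the domain does not collapse prematurely, but it says nothing about $r(t)-l(t)\to 0$; the interval could remain wide while $f$ becomes uniformly small. The paper obtains $r-l\to 0$ only \emph{after} establishing concavity, using that $\min_x k(\cdot,t)$ is increasing so eventually $k\ge\eta>0$, which together with $\max f\le\sqrt{2(T^*-t)}$ forces $r-l\le C(T^*-t)^{1/4}$.

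\textbf{The concavity argument (ii) does not close.} First, the divergence of $\int_0^t(r-l)^{-2}ds$ is not established: even the paper's upper bound $r-l\le C(T^*-t)^{1/4}$ only gives $(r-l)^{-2}\ge c(T^*-t)^{-1/2}$, which is integrable. Second, even if that integral diverged, your spectral-gap estimate would only yield $\|\phi_+\|\to 0$ as $t\to T^*$, not $\phi_+\equiv 0$ at some finite $t_2<T^*$; the strong maximum principle does not bridge that gap. In the paper, the order is reversed: concavity before $t_1$ is proved \emph{first} (by tracking inflection points of $f$ and showing, again via $\int k\log k$ on arcs between inflection points, that they must all disappear before $\|f_{xx}\|_\infty$ blows up), and concavity is then \emph{used} to push $t_1$ up to $T^*$ and to establish (iii). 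Your attempt to derive (ii) as a consequence of (iii) and global existence cannot work because both of the latter depend on (ii).
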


The proof of short-time existence and regularity of the boundary motion strongly relies on the work and Kim and Chayes \cite{cf:CK2}, and does not require
$\rho_0$ to be the derivative of positive function, but just reasonable regularity assumption (we suppose it Lipshitz but this could be relaxed).

\medskip

The proof of existence until a maximal time uses ideas and is inspired by the work of Grayson concerning the shrinking of
curves by curvature flow \cite{cf:grayson}.

\subsection{The scaling limit for $\gl=\infty$}

We are now ready to state our result concerning the dynamics with a wall and $\gl=\infty$.

\begin{theorem}\label{mainres}
Let $\eta^{L,\infty}$ denote the dynamic with wall and $\gl=\infty$, 
and starting from a sequence of initial condition
$\eta^L_0$ which satisfies
\begin{equation} 
\eta^L_0(x)=Lf_0(x/L)(1+o(1)) \text{ uniformly in $x$ when $L\to \infty$ },
\end{equation}
where $f_0$ satisfies the assumption of Theorem \ref{existence}.
Set $f$ to be the solution of \eqref{sstef}.
\medskip
Then $\eta^{L,\infty}$ converges to $f$ in law for the uniform topology in the sense that
\begin{equation}\label{trouille}
\lim_{L\to \infty} \sup_{x\in[-1,1],t>0} \left|\frac{1}{L}\eta^{L,\infty}(Lx,L^2t)-f(x,t)\right|=0,
\end{equation}
in probability.
\medskip
Moreover, one can precisely estimate the time at which the dynamics terminates
\begin{equation} \label{ouichtrucfacile}
\mathcal T:=\inf\{ t\ge 0 \ | \ \eta(\cdot,t)=\eta^{\min}\}.
\end{equation}
We have that in probability
\begin{equation}
\lim_{L\to \infty} \frac{\mathcal T}{L^2}=\int_0^1 f_0(x)\dd x.
\end{equation}
\end{theorem}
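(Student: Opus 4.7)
The plan is to prove \eqref{trouille} by a sandwich argument combining monotonicity of the heat-bath pinning dynamics with the free scaling limit of Theorem \ref{frefer}, and then deduce the terminal time convergence from an area balance.

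\textbf{Monotone coupling and piecewise linear sandwich.} Heat-bath dynamics with rates \eqref{jumpr} admits a standard monotone grand coupling: two copies started from ordered initial data stay ordered under a common Poisson realization. Fix small $\Delta, \delta > 0$ and a time grid $t_k = k\Delta$ for $0 \leq k \leq \lfloor (T^* - \epsilon)/\Delta \rfloor$. Using the smoothness of $l, r$ on $(0, T^*)$ and the $1$-Lipshitz character of $f$ (both from Theorem \ref{existence}), I build two families of piecewise linear target profiles $f^{+}_{k,\delta}, f^{-}_{k,\delta}$ satisfying $f^{-}_{k,\delta} \leq f(\cdot, t_k) \leq f^{+}_{k,\delta}$ with error $O(\delta)$, whose positive-support intervals $[l^{\pm}_{k,\delta}, r^{\pm}_{k,\delta}]$ are nested respectively inside $[l(t_k + \Delta), r(t_k + \Delta)]$ and around $[l(t_k - \Delta), r(t_k - \Delta)]$. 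The integer-valued approximations $\eta^{\pm, L}_{k,\delta} \in \gO_L$ sandwich $\eta^{L,\infty}(\cdot, L^2 t_k)$ up to additive error $o(L)$.

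\textbf{One-step propagation.} On each interval $[t_k L^2, t_{k+1} L^2]$, I run two sticky dynamics $\bar\eta^{\pm}$ started from $\eta^{\pm, L}_{k,\delta}$, coupled monotonically with $\eta^{L,\infty}$. In the pinned portion $\bar\eta^+$ is frozen at $\eta^{\min}$; inside its open region it evolves locally as the free corner-flip dynamics with Dirichlet boundary at the fixed endpoints $L l^{+}_{k,\delta}, L r^{+}_{k,\delta}$, until a new interior pinning point is created. Because the endpoints are kept bounded away from the true boundary $(l, r)$ by a margin of order $\delta$ and $f^{+}_{k,\delta}$ is bounded away from zero on the interior of its support, a new interior pinning event has vanishing probability over a window $\Delta L^2$. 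Theorem \ref{frefer} applied on $[L l^{+}_{k,\delta}, L r^{+}_{k,\delta}]$ then shows that $\bar\eta^+(\cdot, L^2 t_{k+1})/L$ is uniformly close to the heat solution with initial data $f^{+}_{k,\delta}$, which by a short-time continuity estimate for \eqref{sstef} lies within $O(\delta)$ of $f^{+}_{k+1, \delta}$, restoring the sandwich at time $t_{k+1}$. The construction for $\bar\eta^-$ is symmetric.

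\textbf{Iteration and terminal time.} Concatenating over the $O(1/\Delta)$ steps and sending $L \to \infty$ first, then $\delta \to 0$, $\Delta \to 0$, and finally $\epsilon \to 0$, yields \eqref{trouille} on $[0, T^*)$. For $t \geq T^*$ the sandwich pins $\eta^{L,\infty}/L$ between $0$ and a triangular profile of maximum height $\max f(\cdot, T^* - \epsilon) \to 0$, giving \eqref{trouille} on $[0, \infty)$. For the terminal time, the upper bound $\mathcal T /L^2 \leq T^* + C\epsilon$ follows by combining the sandwich with the classical fact (e.g.\ \cite[Proposition 5.6]{cf:CMT}) that a non-negative polymer of maximal height $o(L)$ collapses to $\eta^{\min}$ within time $o(L^2)$, while the matching lower bound is immediate from \eqref{trouille} at $t = T^* - \epsilon$. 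A consistency check is provided by the area supermartingale $A(t) := L^{-2} \sum_x \eta^{L,\infty}(x, L^2 t)$, whose asymptotic decay rate equals $-\frac{d}{dt}\int f = f_x(r) - f_x(l) = -2$ from \eqref{sstef}, confirming that $T^*$ is exactly the rescaled area of $f_0$ divided by $2$.

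\textbf{Main obstacle.} The delicate point is controlling the creation of new pinning contacts just inside the moving boundary. The slope condition $f_x(l(t), t) = 1$ forces the polymer to be a near-perfect descending staircase just to the right of $l(t)$, a configuration in which new pinning events are highly likely; their \emph{rate} must reproduce $l'(t) = -f_{xx}(l(t), t)$. The sandwich sidesteps a direct microscopic analysis by absorbing the boundary motion into the $O(\delta)$ nesting margin, but this only works if $f^{\pm}_{k,\delta}$ can be consistently constructed with uniform slack across the time grid. This is where the $C^\infty$ regularity of $l, r$ on $(0, T^*)$ and the eventual strict concavity of $f(\cdot, t)$ from Theorem \ref{existence} become essential, together with a quantitative short-time continuity estimate for \eqref{sstef} uniform on $[\epsilon, T^* - \epsilon]$. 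Making the argument work uniformly down to $t = 0$, where $f_0$ is only assumed $1$-Lipshitz and smooth on $(l_0, r_0)$, is an additional technical step handled by a separate initial-layer analysis based on the positivity of $f_0$ on $(l_0, r_0)$.
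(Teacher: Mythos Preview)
Your overall architecture—time-slicing, monotone coupling, and reduction to the free corner-flip limit of Theorem \ref{frefer}—is the same as the paper's. But the proposal has a genuine gap at precisely the point you flag as the main obstacle, and the paper's resolution uses two concrete ideas that your sketch does not contain.

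The unjustified step is the claim that ``a new interior pinning event has vanishing probability over a window $\Delta L^2$''. This is needed for the \emph{lower} barrier $\bar\eta^-$ (for the upper barrier, new contacts only help), and there your profile $f^-_{k,\delta}$ vanishes at the endpoints of its support with slope close to $\pm 1$. The sticky dynamics pinned at those endpoints \emph{will} create new contacts in a neighbourhood of the boundary; the whole question is how far inward they penetrate in time $\Delta L^2$. Saying that the $O(\delta)$ nesting margin absorbs this penetration is not an argument—nothing in the sketch bounds the penetration depth, and the required bound is exactly the content of the theorem. The regularity of $l,r$ and concavity of $f$ from Theorem \ref{existence} are facts about the limiting PDE; they do not by themselves control the microscopic boundary of $\bar\eta^-$.

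The paper supplies two ingredients you are missing. First, the \emph{pedestal construction} (Proposition \ref{infdelta} and \eqref{clep}): one compares $\bar\eta$ not with $f$ but with $f^\delta$, obtained by lifting $f$ by a height $\bar\delta(t)$ and appending exact slope-$\pm 1$ ramps on the sides. This gives a uniform positive clearance $\bar\delta$ above the wall along the entire unpinned region, and makes the extremities locally a perfect staircase. Second, the \emph{area contradiction} (Proposition \ref{samere}, via Lemmata \ref{ksksks} and \ref{compra}): if a contact were created at an interior point $X_{\tau'}$ at time $\tau'\le\gep$, then the Lipschitz constraint forces $\bar\eta(\cdot,\tau')\le |\cdot-X_{\tau'}|$, while Proposition \ref{coucc} forces $\bar\eta(\cdot,\tau')\le \tilde f^\delta(\cdot,\tau')+\alpha$; the area under the minimum of these is shown to be at most $a(f^\delta_0)-2\tau'-c\gep^2\bar\delta$. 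But a martingale estimate on the drift of the area (Lemma \ref{ksksks}, using that until $\tau'$ there is exactly one excursion) gives $a(\bar\eta(\tau'))\ge a(f^\delta_0)-2\tau'-o(1)$, a contradiction. The pedestal height $\bar\delta$ is what makes the area deficit quantitative; without it the argument collapses.

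A secondary point: the paper does not run a symmetric sandwich. The upper bound $\bar\eta\le f+\gep$ is \emph{deduced} from the lower bound together with the global area inequality $a(\bar\eta(t))\le a(f(\cdot,t))+o(1)$ of Lemma \ref{drifta} and $2$-Lipschitzianity of $\bar\eta-f$ (see the proof of Theorem \ref{mainres} from Proposition \ref{lb}). Your upper barrier would instead require the fixed-interval heat solution on $[l^+_{k,\delta},r^+_{k,\delta}]$ to lie below $f^+_{k+1,\delta}$ after time $\Delta$, with errors that do not accumulate over $O(1/\Delta)$ steps; this is a separate PDE comparison that you have not carried out.
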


The second part of the result \eqref{ouichtrucfacile} can be compared to the result obtained by Caputo \textit{et al.} \cite[Theorem 5.5 and Proposition 5.6]{cf:CMT}, where is proved that the
mixing time for this dynamics is of order $L^2$.

\begin{rem}\rm 
 The assumption that $f$ is strictly positive in $(l_0,r_0)$ is necessary and if $f$ cancels in the middle of the interval, the scaling limit
 depends on the microscopic details of the initial condition and the scaling limit might be a random object even for a deterministic initial condition.
\end{rem}

\subsection{Discussion on the scaling limit in the attractive case for $\gl \in (2,\infty)$}\label{secconj}

Although we are quite far from being able to prove it, we do believe that Theorem \ref{mainres} extends in some way to the whole localized phase $\gl\in (2,\infty)$.
We summarize here the full conjecture and give an idea of the technical difficulties that arises when trying to prove it.

\medskip

What we believe is that the polymer consists of one (or several) unpinned region situated a macroscopic distance of the interface, and pinned regions
where the polymers looks locally at equilibrium (recall that when $\gl>2$ the polymer has a density of contacts for $\pi^\gl_L$.
As before it is natural to say that in the unpinned region, the rescaled polymer must satisfy $\partial f-f_{xx}=0$.
However, the argument giving the slope of $f$ at the boundary of the unpinned region in the $\gl=\infty$ case is not valid when $\gl$ is finite, because
the boundary can now microscopically move in both direction as unpinning is allowed.

\medskip
To guess the value of the slope at the boundary  
$\partial_x f(t,l(t))$, $\partial_x f(t,r(t))$,
we assume that the system close to the phase separation must be in a state of local-equilibrium.

\medskip

From the equilibrium results on polymer pinning with elevated boundary condition proved in \cite{cf:LT}, 
one can infer that the equilibrium slope of a polymer at the boundary of a pinned and an unpinned phase is 
$$d_\gl:=1-\frac{2}{\gl}.$$
and that one must have 
$$ f_x(t,l(t))=-f_x (t,r(t))=d_{\gl}.$$

\medskip

Hence, the scaling limit $f$ of $\eta^{L,\gl}(x,t)$ with $\gl>2$ must satisfy the following free boundary problem

\begin{equation}\label{wcb}
\begin{cases} 
 \partial_t f- f_{xx}=0 \quad \text{on }  (l(t),r(t)),\\
f(\cdot,t)\equiv 0 \quad \text{on } [-1,1]\setminus (l(t),r(t)),\\
f_x(l(t),t)=- f_x(r(t),t)=d_{\gl},\\
l'(t)=- f_{xx}(l(t),t)/d_{\gl},\quad  r'(t)=f_{xx}(r(t),t)/d_{\gl},\\
f(\cdot,0)=f_0,\ l(0)=l_0,\ r_0.
\end{cases}
\end{equation}

The problem \eqref{wcb} presents additional technical difficulties when compared to \eqref{sstef}, even though the difference between them is just a scaling factor.
The reason for this is comes from the kind of initial condition that one wants to consider.
If one starts from an initial condition that is positive in $(l_0,r_0)$ and Lipshitz with Lipshitz constant $d_\gl$,
then Theorem \ref{existence} ensures that the solution to \eqref{wcb} exists and is well behaved until the time when the pinned region has vanished.
However, if one consider a $1$-Lipshitz boundary condition, $f_{xx}(l(t),t)$ can be positive for some times, and thus the boundary are not necessarily
contracting... which makes the problem more difficult to solve.

\medskip

When the wall is sticky, there is no loss of generality in considering that there is only one unpinned region, because two regions separated by a contact 
with the wall stay separated and behave independently.
When $\gl<\infty$ the situation is different: if one starts with two distinct unpinned regions $(l_1,r_1)$ and  $(l_2,r_2)$ with $l_2>r_1$, 
this is possible that the two region merge at a positive time.

\medskip

Besides these obstacles on the analytical side, there are many reasons why the proof of Theorem \ref{mainres}
cannot easily be adapted to the case $\gl\in(2,\infty)$. Indeed a part of the strategy relies on the control of the area below $\eta$ (see for instance Lemma \ref{drifta})
and this kind of argument seems very difficult to adapt when the polymer is allowed to detach itself from the wall.

%
%

\subsection{Stefan problem and statistical mechanics}

Free boundary problems similar to \eqref{stef} appear naturally  in thermodynamics to describe the 
motion of phase boundary in a multiphase medium (e.g.\ water in solid and liquid state), and for this reason have been the object of extensive studies
(see the seminal paper of Stefan \cite{cf:Stef} and \cite{cf:Stef2} for a survey the subject).

\medskip

There has been then some recent efforts in the area of statistical mechanics in order to prove to show that Stefan problem can be obtained as the 
limiting equation of evolution for particle systems whose microscopic behavior is random. 
Among these work we can cite
\cite{cf:CS}, where Chayes and Swindle have exhibited a particle system whose hydrodynamic limit is given by a Stefan problem,
and \cite{cf:LV} where Landim and Valle proposed a microscopic modeling of Stefan freezing/melting problem and 
proved the weak convergence of the particle density to the solution of a free boundary equation 
(for a more complete bibliography we refer to the monograph \cite{cf:KL}).

\medskip

 The main difference of the problem we study here when compared to the one considered e.g.\ in \cite{cf:CS} and \cite{cf:LV} 
 is that, microscopically,
the motion of the phase boundary does not depend only on the state of the system close to the boundary, but on the whole 
configuration (a contact to the wall can be added anywhere and not only near the boundary). This makes control of the boundary motion more difficult,
and for this reason we did not use an approach based on weak convergence like in most of the literature, but something
based on a classical interpretation of the partial differential equations.  
This is the reason why we have to prove the existence of a solution of \eqref{stef} first: we cannot prove a convergence result a priori without knowing about 
the existence of 
a classical (and sufficiently regular) solution.

\medskip

After a preliminary version of this paper was published, De Masi \textit{et al.} \cite{cf:DeMa} proved a convergence result for a special version of 
the simple exclusion process with moving sources and sinks at the boundary. They conjecture that the scaling limit they obtain is a solution of a problem similar to \eqref{sstef},
but for which there exists a stationary solution.

\subsection{Organization of the paper}

In Section \ref{stefan} we prove Theorem \ref{existence}. In Section \ref{prelim}, we introduce a few result to that will be of use both for the proof of Theorem 
\ref{repcase} and \ref{mainres}. In Section \ref{repphase} we prove Theorem \ref{repcase}.
Finally Section \ref{mainressec} contains the proof of Theorem \ref{mainres}

\section{Proof of Theorem \ref{existence}}\label{stefan}

\subsection{Decomposition of the proof}

To prove the result we proceed in three steps.
Firstly, we use results and ideas from 
\cite{cf:CK2} to prove the existence and the unicity of a $C^\infty$ solution for a short-time, 
and to extend this solution until a time where the second derivative of $f$ becomes unbounded.

\medskip

\begin{proposition}\label{shorttimes}
If $f_0$ is smooth and Lipshitz on $[l_0,r_0]$ and satisfies the boundary condition 
$f'(l_0)=-f'(r_0)=1$. Then there exists $t_1$ dor which  the problem \eqref{sstef}
has a classical solution on $[0,t_1)$ which satisfies 
\begin{equation}\label{blowup}
\lim_{t\to t_1} \sup_{x\in (l(t),r(t))} |f_{xx}(x,t)|=\infty.
\end{equation}
Furthermore $l(t)$ and r$(t)$ are $C_{\infty}$ on $(0,t_1)$.
\end{proposition}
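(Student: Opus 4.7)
The plan is to build on the short-time existence theory developed by Chayes and Kim in \cite{cf:CK2}, which treats a structurally analogous freezing problem with one moving boundary. The proof proceeds in three steps: local-in-time existence of a classical solution via a fixed-point argument in parabolic Hölder spaces; $C^\infty$-smoothness of the free boundaries via a bootstrap using Schauder estimates together with the Stefan conditions; and a continuation argument identifying the maximal existence time with the blow-up of $f_{xx}$.

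For the first step, I would flatten the moving domain via the affine change of variable $y = (2x - l(t) - r(t))/(r(t) - l(t))$, mapping $(l(t), r(t))$ onto $(-1, 1)$. Setting $g(y, t) := f(x, t)$, one obtains a quasilinear parabolic equation for $g$ on the fixed interval $(-1, 1)$ with coefficients depending on $(l, r, l', r')$, Dirichlet data $g(\pm 1, t) = 0$, and Neumann-type boundary data $g_y(-1, t) = (r-l)/2 = -g_y(1, t)$ coming from $f_x(l(t),t)=-f_x(r(t),t)=1$. The compatibility assumptions $f_0'(l_0) = -f_0'(r_0) = 1$ together with the smoothness of $f_0$ eliminate corner singularities at $(\pm 1, 0)$. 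One then sets up a contraction mapping on a small time interval $[0, \tau]$, acting on the space of pairs $(l, r) \in C^{1+\alpha/2}([0,\tau])$: each pair is sent to the solution $g$ of the corresponding linear heat-type problem on $(-1,1)$, and then to the new boundaries $(\tilde{l}, \tilde{r})$ obtained by integrating the Stefan conditions $\tilde{l}'= -f_{xx}(l(t),t)$ and $\tilde{r}' = f_{xx}(r(t),t)$ expressed in terms of $g_{yy}(\mp 1,t)$. Parabolic Schauder estimates provide the $C^{2+\alpha, 1+\alpha/2}$-control on $g$ needed to make the map a contraction for $\tau$ sufficiently small. This is essentially the argument of \cite{cf:CK2}; having two boundaries rather than one does not alter the structure, since the Stefan contributions at each endpoint are local.

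Once a short-time classical solution is obtained, interior $C^\infty$-regularity of $f$ for $t > 0$ is immediate from standard parabolic regularity for the heat equation on the open region $\{(x,t) : l(t) < x < r(t),\ t > 0\}$. The smoothness of the free boundaries is then established by a bootstrap: if $l, r \in C^{k+\alpha/2}$, applying Schauder estimates to the straightened problem (whose coefficients inherit the same regularity) yields $g \in C^{2k+\alpha, k+\alpha/2}$ up to the boundary, so that $f_{xx}$ evaluated at $l(t)$ and $r(t)$ lies in $C^{k+\alpha/2}$; the Stefan relations then promote $l, r$ to $C^{k+1+\alpha/2}$. Iterating this yields $l, r \in C^\infty$ on $(0, t_1)$.

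The final step is a continuation argument. Define $t_1 \in (0, \infty]$ as the supremum of times up to which a classical solution exists, and suppose $t_1 < \infty$ together with $\sup_{t < t_1,\, x \in (l(t),r(t))} |f_{xx}(x,t)| < \infty$. The Stefan conditions then force $l', r'$ to remain bounded on $[0, t_1)$, so $(l, r)$ extends continuously to $t_1$ with $l(t_1) < r(t_1)$. Combined with the $f_{xx}$ bound and parabolic estimates on the straightened problem, this gives uniform Hölder control on $g$ and hence on $f(\cdot, t_1)$, which satisfies the compatibility conditions (derivatives $\pm 1$ at the endpoints are preserved by the boundary condition of \eqref{sstef} itself). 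One may then restart the short-time argument at time $t_1$, contradicting maximality of $t_1$. Therefore $t_1$ is characterized by \eqref{blowup}. The main technical obstacle, and the place where the adaptation of \cite{cf:CK2} needs care, is making the last bound quantitative: showing that a pointwise bound on $f_{xx}$ in the interior is enough to control its Hölder norm up to the boundary uniformly on $[0, t_1)$, so that the restart argument can be applied without loss of regularity at $t = t_1$.
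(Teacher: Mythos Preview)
Your approach is a legitimate route to short-time existence, but it differs structurally from the paper's in two ways. First, the paper passes to the derivative problem \eqref{stef} for $\rho=f_x$ rather than working with $f$ directly. Second, and more substantially, instead of flattening the domain and running a fixed point on the pair $(l,r)$, the paper \emph{decomposes} the two-sided problem into two one-sided Stefan problems on small intervals $[l_0,l_0+2\bar l]$ and $[r_0-2\bar l,r_0]$ (each handled by \cite{cf:CK}), coupled through a heat equation on the full contracting domain; the fixed point is taken on the pair of Dirichlet values $(\phi_1,\phi_2)$ at the interior cut points $l_0+2\bar l$ and $r_0-2\bar l$. The $C^\infty$ regularity of $l,r$ is then inherited directly from the one-sided theory of \cite{cf:CK}, bypassing the Schauder bootstrap you describe.

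The main payoff of the paper's decomposition is in the continuation step: their short-time existence interval $t_0$ depends \emph{only} on $\|\rho'_0\|_\infty=\|f_{xx}(\cdot,0)\|_\infty$, so a uniform bound on $|f_{xx}|$ up to $t_1$ immediately allows restarting from any $t_2<t_1$ with a fixed increment $t_0(K)$, forcing \eqref{blowup}. In your setup the restart time a priori depends on the $C^{2+\alpha}$ H\"older norm of the data, and you correctly flag that upgrading a pointwise $f_{xx}$ bound to uniform-in-time H\"older control up to the moving boundary is the nontrivial step. This is not a fatal gap, but it is exactly the place where the paper's approach is cleaner; if you pursue your route you would need to close this loop, for instance by showing that the existence time in your contraction depends only on $\|f_{xx}\|_\infty$ and the domain width, or by invoking parabolic smoothing to trade a small amount of time for the needed H\"older regularity before restarting.
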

Then we show that if $f_0$ restricted to $(l_0,r_0)$ is a concave function, 
the solution exists up to a maximal time where the boundaries $l$ and $r$ meet.

\begin{proposition}\label{neeting}
If $f_0$ is concave on $(l_0,r_0)$, then $$t_1=T^*=\int^{r_0}_{l_0} f_0(x)\dd x/2$$ 
and
$$\lim_{t\to T^*}(r(t)-l(t))=0.$$
\end{proposition}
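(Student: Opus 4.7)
The plan is to organise the proof in three stages, the last of which is the main obstacle.

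\emph{Stage 1 (concavity propagation and area identity).} I would first introduce $g(x,t):=f_{xx}(x,t)$. In the interior it satisfies the heat equation $g_t=g_{xx}$ (differentiate $f_t=f_{xx}$ twice in $x$); differentiating the boundary conditions $f(l(t),t)=0$ and $f_x(l(t),t)=1$ in time yields the identities
\[
g(l(t),t)=-l'(t),\qquad g_x(l(t),t)=(l'(t))^2=g(l(t),t)^2,
\]
and the symmetric ones at $r(t)$. To conclude $g\le 0$ on $[0,t_1)$ I would argue by contradiction. Let $t_0$ be the first time the maximum of $g$ on the parabolic cylinder reaches $0$. An interior maximum forces $g\equiv 0$ on $[0,t_0]$ by the strong maximum principle, contradicting $f_0$ being non-affine (forced by $f_0(l_0)=f_0(r_0)=0$ together with $f_0'(l_0)=1$); a lateral maximum, say at $(l(t_0),t_0)$, would give $g_x(l(t_0),t_0)<0$ by Hopf's lemma, contradicting the boundary identity $g_x=g^2=0$. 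Hence $f$ is concave throughout, so $l'\ge 0$, $r'\le 0$, and $L(t):=r(t)-l(t)$ is non-increasing. Simultaneously
\[
A'(t)=\int_{l(t)}^{r(t)}f_{xx}(x,t)\,dx=f_x(r(t),t)-f_x(l(t),t)=-2,
\]
so $A(t)=A_0-2t$ and $A(T^*)=0$.

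\emph{Stage 2 (the boundaries meet at $T^*$ -- the main obstacle).} Since $L$ is non-increasing, $L^*:=\lim_{t\to T^*}L(t)$ exists; the task is to rule out $L^*>0$. Concavity combined with $|f_x|\le 1$ (maximum principle on $f_x$) gives $f\le \min(x-l,r-x)$ and hence $A\le L^2/4$, i.e.\ $L\ge 2\sqrt{A}$, which is the wrong direction. A purely static argument cannot succeed: a trapezoidal concave profile with unit boundary slopes can have arbitrarily small area while keeping $L$ fixed. I must therefore exploit the dynamics; the heuristic is that a flat sliver would be rapidly rounded off by the heat equation, increasing $|g|$ near the boundary and accelerating boundary contraction. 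Following the curve-shortening strategy of Grayson~\cite{cf:grayson}, my plan is to establish the scale-invariant estimate $L(t)^2\le C\,A(t)$ along the flow, either by finding a monotone/bounded invariant such as $L^2/A$ via a differential inequality (which requires showing $|g(l(t),t)|+|g(r(t),t)|$ dominates $L/A$ on the dynamical trajectory), or by comparing $f$ to the one-parameter family of self-similar shrinkers $\sqrt{T^*-t}\,\phi(x/\sqrt{T^*-t})$, whose profile $\phi$ solves a classical ODE boundary value problem with $\phi(\pm\xi_*)=0$, $\phi'(\pm\xi_*)=\mp 1$; such shrinkers satisfy $L(t)\sim\sqrt{T^*-t}$, and a comparison principle adapted to the free boundary would transfer this rate to $f$. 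With $L^2\le CA$, Stage 1 immediately gives $L^*=0$.

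\emph{Stage 3 (the solution extends up to $T^*$).} Once Stage 2 is in hand, the identity $t_1=T^*$ is short. If $t_1<T^*$, then $A(t_1)>0$ together with the concavity bound $L\ge 2\sqrt{A}$ forces $L(t_1)>0$; the boundaries remain well separated at $t_1$. Combining this with the $C^\infty$ regularity of $l,r$ on $(0,t_1)$ from Proposition~\ref{shorttimes}, the a priori bound $\int|g|\,dx=2$, and concavity ($g\le 0$), a parabolic Schauder bootstrap on $g$ produces a uniform $L^\infty$ bound up to $t_1$, contradicting the blow-up criterion~\eqref{blowup}. The main difficulty throughout is Stage 2: the maximum principle applied to $g$ is self-referential, its boundary values being determined by $g$ itself, so ruling out the degenerate $L^*>0$ scenario requires genuinely combining concavity, the unit boundary slopes, and the heat-equation evolution in the spirit of Grayson's analysis.
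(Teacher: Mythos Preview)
Your Stage~1 is fine and matches the paper's setup (the paper writes $k=-f_{xx}$ and derives the same boundary identity $k_x(l(t),t)=-k(l(t),t)^2$ in Lemma~\ref{orreur}). The real problem is that you have the difficulty located in the wrong stage, and the argument you give for the genuinely hard part does not work.

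\medskip

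\textbf{The gap is in Stage~3.} You write that, assuming $t_1<T^*$, ``a parabolic Schauder bootstrap on $g$ produces a uniform $L^\infty$ bound up to $t_1$''. But Schauder estimates require control of the boundary data, and the boundary values of $g$ are precisely $g(l(t),t)=-l'(t)$ and $g(r(t),t)=r'(t)$, i.e.\ the boundary velocities themselves. These are not bounded a priori; indeed, the paper's Lemma~\ref{ilestborn} shows that interior extrema of $k=-g$ only decrease, so if $\|g\|_\infty$ blows up it does so \emph{at} the moving boundary, i.e.\ $l'$ or $r'$ blows up. The $L^1$ bound $\int|g|\,dx=2$ is not enough to prevent this, and the $C^\infty$ regularity of $l,r$ from Proposition~\ref{shorttimes} is only on the open interval $(0,t_1)$, with no uniformity as $t\to t_1$. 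So the bootstrap is circular: you would need a bound on $l',r'$ to run Schauder, and a bound on $l',r'$ is exactly what you are trying to prove.

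The paper's argument for this step is the core of the proof and is not a soft regularity argument. One computes (equation~\eqref{marmouse})
\[
\partial_t\int_{l(t)}^{r(t)} k\log k\,dx = k(l(t),t)^2+k(r(t),t)^2-\int_{l(t)}^{r(t)}\frac{k_x^2}{k}\,dx,
\]
using the boundary identity to cancel the dangerous terms. The positive boundary terms are then controlled by the negative integral via Agmon's inequality (Lemma~\ref{agmon}) applied to $\sqrt{k}$, together with a geometric lemma (Lemma~\ref{akk}) saying that if $k$ stays large on an arc of total curvature close to~$1$ near the boundary, then the area under $f$ must already be small. Under the hypothesis $t_1\le T^*-\gep$ the area is bounded below, so this forces $\partial_t\int k\log k$ to be bounded, contradicting Lemma~\ref{ilestborn} which shows $\int k\log k\to\infty$ at blow-up.

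\medskip

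\textbf{Stage~2 is easier than you think.} Once $t_1=T^*$ is established, the paper's argument for $r(t)-l(t)\to 0$ is short and does not need self-similar comparison or an $L^2/A$ invariant. Since $k$ solves the heat equation in the interior, $\min_x k(\cdot,t)$ is non-decreasing, so $k\ge\eta>0$ uniformly for $t\ge T^*/2$. Combined with $\max_x f(\cdot,t)\le\sqrt{2(T^*-t)}$ (from area plus Lipschitz), the uniform concavity $f_{xx}\le-\eta$ forces $r(t)-l(t)\le C(T^*-t)^{1/4}$. Your proposed self-similar or monotone-quantity route may also work, but it is unnecessary here.
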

Finally, we show that if $f_0$ is positive, then it becomes concave 
before $t_1$.

\begin{proposition}\label{concavification}
If $f_0$ is positive, then there exists a time $t_2<t_1$ such that 
$f(\cdot,t)$ is concave for $t\in[t_2,t_1)$.
\end{proposition}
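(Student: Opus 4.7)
The plan is to study the evolution of $g := f_{xx}$. Differentiating the heat equation $f_t = f_{xx}$ twice in $x$ shows that $g$ satisfies $g_t = g_{xx}$ on the moving domain $(l(t), r(t))$ for $t \in [0, t_1)$. Differentiating the boundary relation $f(l(t),t) \equiv 0$ in $t$ and using $f_x(l(t),t) = 1$ together with $f_t = g$ gives $g(l(t),t) = -l'(t)$, and symmetrically $g(r(t),t) = r'(t)$. A crucial observation is that these boundary values are non-positive. Indeed, $h := f_x$ itself satisfies the heat equation with boundary values $\pm 1$ and $1$-Lipshitz initial data, so by the maximum principle $|h| \leq 1$; since $h$ attains its maximum $1$ at $x = l(t)$, we must have $g(l(t),t) = h_x(l(t),t) \leq 0$, and analogously at the right boundary. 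Moreover, $\int_{l(t)}^{r(t)} g\, dx = f_x(r(t),t) - f_x(l(t),t) = -2$ is conserved in time.

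To show $g(\cdot, t) \leq 0$ for some $t_2 < t_1$, I would track the positive part $g_+ := \max(g,0)$ via the energy $A(t) := \int g_+(x,t)^2\, dx$. A direct integration by parts --- exploiting the vanishing of $g_+$ both at the boundary of its support inside the domain and the fact that $g \leq 0$ on $\partial(l(t), r(t))$ --- yields $A'(t) = -2 \int g_x^2 \mathbf{1}_{\{g > 0\}}\, dx$. Applying Poincaré's inequality on each connected component of $\{g > 0\}$ (on which $g$ vanishes at both endpoints) gives $A'(t) \leq -2\pi^2 A(t)/(r(t)-l(t))^2$, so $A$ decays exponentially at a rate that accelerates as the domain contracts. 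Combined with the uniform $L^\infty$-bound $\sup g(\cdot, t) \leq \max(0, \sup g_0)$ coming from the maximum principle (the boundary values of $g$ are non-positive), standard interpolation forces $\|g_+\|_\infty \to 0$.

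The \textbf{main obstacle} is to upgrade these decay estimates into strict finite-time extinction of $g_+$ before $t_1$, rather than mere asymptotic convergence to zero. This is precisely where the positivity assumption on $f_0$ enters. Under that assumption, I would argue that the blow-up at $t_1$ from Proposition \ref{shorttimes} is driven by $\inf g \to -\infty$ (accumulation of concavity), while $\sup g$ stays bounded uniformly in $t$. I would then complement the energy estimate with the Sturmian oscillation theorem for the heat equation, which asserts that the number of zeros of $g(\cdot, t)$ in $(l(t),r(t))$ is non-increasing in time, so the connected components of $\{g > 0\}$ can only merge or disappear. A strict comparison argument against a supersolution of the form $\varepsilon \sin\bigl(\pi (x - l(t))/(r(t)-l(t))\bigr) - \delta$, combined with the incompatibility between persistence of positive regions of $g$ up to $t_1$ and the $L^\infty$-blow-up of $\inf g$ forced by the conservation $\int g = -2$ on the shrinking domain, then produces the desired $t_2 < t_1$ at which $g_+$ vanishes identically; the maximum principle applied from time $t_2$ onward preserves $g \leq 0$ for all $t \in [t_2, t_1)$, completing the proof.
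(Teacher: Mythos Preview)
Your energy approach via $A(t)=\int g_+^2$ and Poincar\'e's inequality is sound as far as it goes, and it is genuinely different from the paper's method, which tracks the entropy $\int k\log k\,\ind_{k\ge 0}$ with $k=-g$ and uses Agmon's inequality to control boundary contributions. Your observation that $g\le 0$ on the moving boundary, hence $\sup g$ stays bounded by the maximum principle and the blow-up at $t_1$ must come from $\inf g\to -\infty$, is also correct and is used by the paper.

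The gap is exactly where you locate it, and your final paragraph does not close it. Exponential decay of $A(t)$ does not give finite-time extinction of $g_+$ before $t_1$. The claimed ``incompatibility'' between persistence of $\{g>0\}$ and $\inf g\to -\infty$ is not a contradiction: a tiny positive bump can coexist with an arbitrarily deep negative spike. Your comparison function $\gep\sin\bigl(\pi(x-l)/(r-l)\bigr)-\delta$ is not obviously a supersolution on the moving domain (its time derivative involves $l',r'$), and even if it were, matching it to $g$ on the boundary would require $g(l(t),t)\le -\delta$, i.e.\ $l'(t)\ge\delta$, which you do not know. Most importantly, the positivity of $f_0$ never enters your argument concretely; the paper's counterexamples in its discussion section show this hypothesis is essential.

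The paper's mechanism for using positivity is quite different. Assuming inflection points $i_1<\dots<i_{2p}$ persist up to $t_1$, it shows $\int k\log k\,\ind_{k\ge 0}$ stays bounded, contradicting the blow-up criterion. Interior pieces $\int_{i_{2j}}^{i_{2j+1}} k\log k$ are monotone; for the boundary piece $\int_{l}^{i_1} k\log k$, Agmon handles the case $\int_{l}^{i_1} k<1$, while if $\int_{l}^{i_1} k\ge 1$ the analysis forces the leftmost local maximum $f(i_0(t),t)\to 0$. Positivity enters here: local minima of $f$ increase and were initially strictly positive, so $i_0$ must eventually be the global maximum with no other local extrema; a further argument near the right boundary then yields the contradiction. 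If you want to salvage your $L^2$ framework, you will need an equally concrete use of $f_0>0$ to rule out the persistent-bump scenario.
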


Theorem \ref{existence} is obtained by combining the three statements together.

\subsection{Discussion on the case of general $f_0$}\label{discuss}
Before going into the proofs, let us first discuss about why the positivity of $f$ is required for the existence of a solution until $T^*$.
We show in this section with simple examples that if this condition is violated, the solution might degenerate and the boundary condition
can stop being satisfied before $l(t)$ and $r(t)$ meet.

\medskip

First note that the fact that the area below the curve vanishes at $T^*$ is a simple consequence of 

\begin{equation}
\partial_t \left(\int_{l(t)}^{r(t)} f(x,t) \dd x\right)=\int_{l(t)}^{r(t)} f_{xx}(x,t) \dd x
=f_x(l(t),t)-f_x(r(t),t)=-2.
\end{equation}
so that for all $t\in(0,t_1)$ for which the solution is defined
\begin{equation}\label{area}
\int_{l(t)}^{r(t)} f(x,t) \dd x=\int_{l(t)}^{r(t)} f_0(x) \dd x-2t.
\end{equation}

Thus if $f_0$ is such that $\int_{l(t)}^{r(t)} f_0(x)<0$ and satisfies the assumption of Proposition \ref{shorttimes}
the signed area $\int_{l(t)}^{r(t)} f(x,t) \dd x$ is bounded away from zero uniformly in time, and it implies that the solution must degenerate before 
$l(t)$ collides with $r(t)$. If it were not the case, the fact that $f$ is Lipshitz would imply 
$$\lim_{t\to t_1} \int_{l(t)}^{r(t)} f(x,t) \dd x=0$$ which is impossible.

\medskip

In fact, even if $\int_{l(t)}^{r(t)} f(x,t) \dd x$ is initially positive, there is no guarantee that the solution exists until $l$ meets $r$.
We give a simple counter-example here:
set $l_0=-3\pi/2$, $r_0=3\pi/2$ (we choose $[l_0,r_0]$ to be the interval of definition instead of $[-1,1]$), and set
$$f_0(x):=-\cos(x).$$
The function $f_0$ satisfies the assumption of Proposition \ref{shorttimes} which implies that
a solution to \eqref{sstef} exists until a positive time $t_1$ for which the second derivative explodes.

\medskip

A quick analysis of the problem shows that as long as $f$ is negative somewhere, the graph of $f$ consists of two positive bumps that frame a negative one (see Figure \ref{dbumps}). We call the unique interval where $f$ is negative$(z_1(t),z_2(t))$.

\medskip

The geometric area of the negative bump satisfies

\begin{equation}
 \partial_t\left(\int_{z_1(t)}^{z_2(t)}|f(x,t)|\dd x\right)=f_x(z_1(t),t)-f_x(z_2(t),t)> -2,
\end{equation}
where the last inequality comes from the fact that $|f_x|<1$ in $(r(t),l(t))$ for all positive time.
Hence, if the solution continues to exists, $f$ stays negative somewhere at least until a time $t'>\left(\int_{-\pi/2}^{\pi/2}|f_0(x)|\dd x\right)/2=1$.
However, from Equation \eqref{area}, for $t=t'$ 
\begin{equation}
 \int_{l(t^*)}^{r(t^*)} f(x,t^*)\dd x =2-2t^*<0,
\end{equation}
meaning that $l$ and $r$ cannot meet for $t\ge t^*$.

\begin{figure}[hlt]
 \begin{center}
 \leavevmode 
 \epsfxsize =8.5 cm
 \psfragscanon
 \psfrag{t=0}{\tiny $t=0$}
 \psfrag{t=t1}{\tiny $t=t_1$}
\psfrag{z1}{}
\psfrag{z2}{}
\psfrag{l0}{\tiny $-3\pi/2$}
\psfrag{r0}{\tiny $3\pi/2$}
 \epsfbox{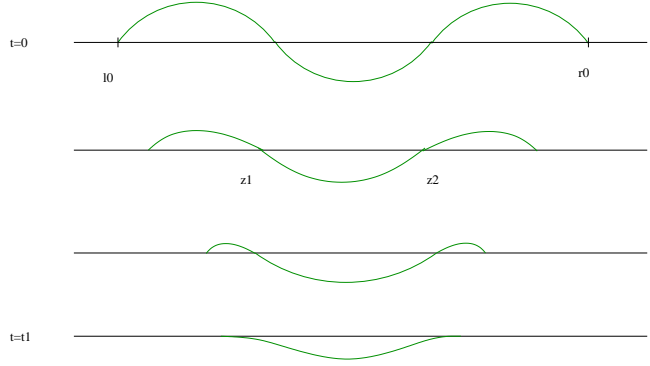}
 \end{center}
 \caption{\label{dbumps} A simple example of an initial condition for which the solution stops to exits before the boundary meets.}
 \end{figure}

\medskip

In fact a closer inspection to the proof of Proposition \ref{neeting} and \ref{concavification} reveals that 
at time $t_1$, when the second derivative $\| f_{xx}\|_{\infty}$ explodes, the two external positive bumps disappear and $f$ becomes completely negative.
As the boundary condition for $f_x$ is not satisfied anymore, it is impossible to define any reasonable notion of solution for $t\ge t_1$.

\medskip

We refer the reader to \cite[Section 4.3.]{cf:CK2} for some additional discussion on what can occur after $t_1$ for the derivative problem \eqref{stef} with different boundary conditions.

\subsection{Proof of the short-time existence: Proposition \ref{shorttimes}}
Instead of solving \eqref{sstef} directly, we solve the corresponding derivative problem \eqref{stef}.

Given a solution $(\rho,l,r)$ to \eqref{stef} with initial condition $\rho_0=f'_0$,
the triplet $(f,l,r)$ with $f$ 
defined by
\begin{equation}
 f(x,t)=\int_{l(t)}^{r(t)}\rho(x,t)\dd x,
\end{equation}
is a solution of \eqref{sstef}. 
Indeed, the initial condition is satisfied and we have
\begin{equation}\begin{split}
\partial_t f(r(t),t)&=-r'(t)+l'(t)+\int_{l(t)}^{r(t)}\rho_{xx}(x,t)\dd x=0,\\
\partial_t f(x,t)&=-r'(t)+\int_{l(t)}^{x}\rho_{xx}(x,t)\dd x=\rho_x(x,t)=f_{xx}(x,t) \ \text{for} \ x\in(l(t),r(t)).
\end{split}\end{equation}

To solve \eqref{stef}, we adapt a method developed in \cite{cf:CK2} for a similar contracting one dimensional Stefan problem.
In what follows we denote by $\rho_0$ the initial condition $f_0$.

\medskip

Let us consider two (fixed) continuous functions $\tilde l(t)$ and $\tilde r(t)$, $\tilde l$ increasing, $\tilde r$ decreasing, 
that satisfy $\tilde l(0)=l_0$ and $\tilde r(0)=r_0$. 
We consider the heat equation on the contracting domain 
$$(\tilde l,\tilde r):=\cup_{t\le T} (\tilde l(t),\tilde r(t))\times \{ t \}$$
with fixed boundary condition $+1$ on the left and $-1$ on the right,

 \begin{equation}\label{contract}
\begin{cases}
\partial_t \rho -\rho_{xx}=0  \text{ on }   \quad \left(\tilde l(t); \tilde r(t)\right),\\
\rho(\tilde l(t),t)=-\rho(\tilde r(t),t)=1,\
\rho(0,\cdot)=\rho_0.
\end{cases}
\end{equation}

We need the following technical result.

\begin{lemma}\label{lebarel} 

Given $\rho_0$ derivable that satisfies the boundary condition, and setting $\bar l=(4\|\rho'_0\|)^{-1}$.
If $\tau$ is such that 
\begin{equation}
\forall t<\tau,\  \tilde l(t)\le l_0+2\bar l, \quad \text{and} \quad  \tilde r(t)\ge r_0-2\bar l,
\end{equation}
 then the solution $\rho$ of \eqref{contract} satisfies
 \begin{equation}\begin{split}
 \rho(t,x)&\ge 0, \quad \forall t\le \tau,\ \forall x\in[\tilde l(t),l_0+2\bar l], \\
  \rho(t,x)&\le 0, \quad forall t\le \tau,\ \forall x\in[r_0-2\bar l,\tilde r(t)] .  
  \end{split}
 \end{equation}
 \end{lemma}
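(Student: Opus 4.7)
The plan is to establish $\rho\ge 0$ on $[\tilde l(t),l_0+2\bar l]$ by comparing $\rho$ with a carefully chosen stationary linear barrier on the full moving domain, and then obtain the symmetric estimate by the same argument applied to $-\rho$ after reflecting across the midpoint of $[l_0,r_0]$. The key structural fact I would exploit is that the Lipshitz bound $\|\rho'_0\|$ combined with the boundary values $\rho_0(l_0)=1$, $\rho_0(r_0)=-1$ forces
\[ r_0-l_0\ \ge\ 2/\|\rho'_0\|\ =\ 8\bar l, \]
so the initial interval is never too small relative to the Lipshitz constant.

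Concretely I would set
\[ \alpha\ :=\ \max\!\left(\|\rho'_0\|,\ \tfrac{2}{r_0-l_0-2\bar l}\right),\qquad \phi(x)\ :=\ 1-\alpha(x-l_0), \]
so that $\phi$ is a classical solution of the heat equation, and then verify the three parabolic-boundary comparisons on $D:=\{(x,t):\tilde l(t)\le x\le \tilde r(t),\ t\in[0,\tau]\}$. (i) At $t=0$: the bound $\alpha\ge \|\rho'_0\|$ together with $\rho_0(l_0)=1$ give, by Lipshitzness, $\rho_0(x)\ge 1-\|\rho'_0\|(x-l_0)\ge \phi(x)$ wherever $\phi(x)\ge -1$, while $\phi(x)<-1\le \rho_0(x)$ on the complement. (ii) At $x=\tilde l(t)\ge l_0$: $\phi(\tilde l(t))\le 1=\rho$. (iii) At $x=\tilde r(t)\ge r_0-2\bar l$: the choice of $\alpha$ yields $\alpha(\tilde r(t)-l_0)\ge \alpha(r_0-l_0-2\bar l)\ge 2$, hence $\phi(\tilde r(t))\le -1=\rho$. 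Since $\phi-\rho$ is a classical heat solution and is $\le 0$ on the parabolic boundary of $D$, the maximum principle gives $\phi\le \rho$ on all of $D$, and therefore $\rho(x,t)\ge \phi(l_0+2\bar l)=1-2\bar l\alpha$ for every $x\in[\tilde l(t),l_0+2\bar l]$ by monotonicity of $\phi$.

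A short case check finally gives $1-2\bar l\alpha>0$: when $\alpha=\|\rho'_0\|=1/(4\bar l)$ the bound equals $1/2$, and in the complementary range $r_0-l_0\in[8\bar l,10\bar l)$ where the second entry dominates we have $\alpha\le 1/(3\bar l)$, so $1-2\bar l\alpha\ge 1/3$. The main obstacle lies precisely in this choice of $\alpha$: the natural value $\|\rho'_0\|$ is insufficient to guarantee $\phi(\tilde r(t))\le -1$ unless $r_0-l_0\ge 10\bar l$, yet enlarging $\alpha$ erodes the positivity margin at $l_0+2\bar l$. It is the structural lower bound $r_0-l_0\ge 8\bar l$, forced by $\rho_0$'s own Lipshitz condition together with $\rho_0(l_0)-\rho_0(r_0)=2$, that makes the two requirements simultaneously achievable.
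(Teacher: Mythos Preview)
Your proof is correct and uses essentially the same idea as the paper: a linear stationary sub-solution combined with the parabolic maximum principle. The only difference is in the choice of comparison domain. The paper restricts attention to the strip $\{\tilde l(t)\le x\le l_0+4\bar l\}$, where the fixed barrier $\rho_{\min}(x)=1-2\|\rho'_0\|(x-l_0)$ already hits $-1$ at the right edge (using $\rho\ge -1$ there), so no case analysis is needed. You instead keep the full moving domain $[\tilde l(t),\tilde r(t)]$ and steepen the barrier via $\alpha=\max(\|\rho'_0\|,\,2/(r_0-l_0-2\bar l))$ to force $\phi(\tilde r(t))\le -1$; this costs you the short case check that $1-2\bar l\alpha>0$, which you handle cleanly using $r_0-l_0\ge 8\bar l$. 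Both routes rely on the same structural ingredient you correctly isolate: the Lipschitz bound and the boundary values $\rho_0(l_0)=1$, $\rho_0(r_0)=-1$ together force $r_0-l_0\ge 8\bar l$, which is exactly what makes the barrier geometry work.
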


Of course $\bar l$ depends on the length the interval $[l_0,r_0]$, but, as $\rho_0$ satisfies the prescribed
boundary condition, one has $$r_0-l_0\ge 2(\|\rho'_0\|_{\infty})^{-1}=8\bar l.$$ 
\begin{proof}
By symmetry we can restrict ourselves to the first statement.
Because of the boundary condition, the solution that we have to consider is larger than the solution of the heat-equation on $[l_0,l_0+4\bar l]$
with initial condition $\rho_0$ and Dirichlet boundary condition $+1$ at $l_0$ and $-1$ on $l_0+4\bar l$. 
Then we notice that 
$$\rho_0(x)\ge 1-2(x-l_0)\|\rho'_0\|_{\infty}=\rho_{\min}(x), \quad \forall x\in [l_0,l_0+4\bar l].$$
Finally, we remark that $\rho_{\min}(x)\ge 0$ on $[l_0,l_0+2\bar l]$ and that as $\rho_{\min}$ is a stationary solution of the heat-equation on $[l_0,l_0+4\bar l]$ mentioned above, the inequality remains valid for all further time.

\end{proof}

Given $\tau$ that satisfies 
the assumption of Lemma \ref{lebarel} and  let us consider
\begin{equation}\label{lesphi}
\mathcal J:=\left\{  (\phi_1,\phi_2) \in \left(L^\infty([0,t_0] \right)^2 \ | \ \forall t\in[0,t_0],\ 
\phi_1(t)\in [0,1], \ \phi_2(t)\in[-1,0]\right\}.
\end{equation}

We are going to construct an application
$$\Phi: \mathcal J\to \mathcal J,$$
in two steps.

\medskip

First, given $(\phi_1,\phi_2)\in \mathcal J$, we define two one-sided contracting Stefan problems, with respective initial domains $[l_0,l_0+2\bar l]$ and 
$[r_0-2\bar l,r_0]$, and an imposed boundary condition on the non moving side: $l_0+2\bar l$ and $r_0-2\bar l$ respectively, given by $\phi_1$ and $\phi_2$.

\begin{equation}\label{grimoire}\begin{split}
\begin{cases}
\partial_t \rho^{(1)}- \rho^{(1)}_{xx}=0 \quad  \text{ on }   \left(l(t); l_0+2\bar l\right),\\
\rho^{(1)}(l(t),t)=1,  \quad  l'(t)=-\rho^{(1)}_x(l(t),t),\\
\rho^{(1)}(l_0+2\bar l,t)=\phi_1(t), \\
\rho^{(1)}(x,0)=\rho_0(x) \text{ on } [l_0,l_0+2\bar l]
\end{cases}
\\
\begin{cases}
\partial_t \rho^{(2)}-\rho^{(2)}_{xx}=0 \quad \text{ on }    \left(r_0-2\bar l,r(t)\right),\\
\rho^{(2)}(r(t),t)=-1,  \quad  r'(t)=\rho^{(2)}_x(r(t),t),\\
\rho^{(2)}(r_0-2\bar l, t)=\phi_2(t), \\
\rho^{(2)}(x,0)=\rho_0(x) \text{ on } [r_0-2\bar l,r_0].
\end{cases}
\end{split}
\end{equation}

According to \cite[Theorem 2.2]{cf:CK}, these two problem have a solution until the time when $l(t)$ meets $l_0+2\bar l$ or $r(t)$ meets $r_0-2\bar l$ respectively
(more precisely, to fit exactly the setup of \cite{cf:CK} where $\rho\equiv 0$ on the moving boundary, one must consider the problems
solved by $1-\rho^{(1)}$ and $1+\rho^{(2)}$), and the boundary $r$ and $l$ are $C^\infty$ on $(0,T)$ where $T$ is the time where the solution ceases to exist. 

\medskip

For technical purpose we want to guarantee that for any choice of $(\phi_1,\phi_2)$, the boundaries $l$ and $r$ need some time to come 
half-way towards the fixed boundary.

\begin{lemma}\label{away}
Let $(\rho^{(1)},l)$ and $(\rho^{(2)},r)$ be solutions of \eqref{grimoire} with boundary condition $(\phi_1,\phi_2)\in \mathcal J$, and initial condition $\rho_0$.
There exists a universal constant $c$ such that for all $t\le c \|\rho'_0\|^2$,

\begin{equation}
l(t)\le l_0+\bar l \quad  \text{and} \quad r(t)\ge r_0-\bar l.
\end{equation}

\end{lemma}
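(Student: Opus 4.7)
The plan is to establish a uniform upper bound of order $\|\rho_0'\|_\infty$ on the boundary velocity $l'(t)$ and to integrate; by the left-right symmetry between the two problems in \eqref{grimoire} it suffices to handle $l$. Writing $M := \|\rho_0'\|_\infty$ so that $\bar l = 1/(4M)$, the goal is $l(t) - l_0 \le \bar l$ for $t$ up to order $M^{-2}$.

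The preliminary step is a maximum principle. Since $\rho_0(l_0) = 1$ and $\rho_0$ is $M$-Lipschitz, one has $\rho_0(x) \ge 1 - M(x - l_0) \ge 1/2$ on $[l_0, l_0 + 2\bar l]$, hence $\rho_0 \in [0, 1]$ there. Combined with the boundary data $\rho^{(1)}(l(t),t) = 1$ and $\phi_1(t) \in [0,1]$, the parabolic maximum principle on the moving domain gives $\rho^{(1)} \in [0,1]$ throughout. In particular $\rho^{(1)}_x(l(t), t) \le 0$, so $l'(t) \ge 0$, consistent with the fact that the pinned region is only allowed to grow.

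The main step is a comparison with an explicit self-similar reference solution. I would introduce an auxiliary one-phase Stefan problem $(\rho^\star, l^\star)$ on the half-line $[l^\star(t), +\infty)$, with initial data $\rho^\star_0(x) := \max(0, 1 - 2M(x - l_0))$ and $l^\star(0) = l_0$. Under the rescaling $y := 2M(x - l_0)$, $s := 4M^2 t$, this reduces to the classical Neumann self-similar Stefan problem, so that $l^\star(t) - l_0 = \alpha \sqrt{t}$ for a universal constant $\alpha > 0$ (independent of $M$, $\rho_0$, and $\phi_1$). On the common initial strip $[l_0, l_0 + 2\bar l]$ one has $\rho_0(x) \ge 1 - M(x - l_0) \ge \rho^\star_0(x)$, and both problems carry the value $1$ on their moving boundaries. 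The comparison principle for one-phase contracting Stefan problems, developed in \cite{cf:CK, cf:CK2}, then yields $l(t) \le l^\star(t)$, because a smaller $\rho$ forces a steeper inward gradient at the moving boundary and hence a faster moving boundary. Choosing $c := (16 \alpha^2)^{-1}$ gives $l(t) - l_0 \le \alpha \sqrt{t} \le \bar l$ whenever $t \le c\|\rho_0'\|^{-2}$.

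The delicate point will be to justify this comparison rigorously: $\rho^{(1)}$ lives on a bounded strip with a possibly irregular right boundary datum $\phi_1 \in L^\infty$, whereas $\rho^\star$ lives on an unbounded half-line. The cleanest workaround is to localize the reference problem to $[l_0, l_0 + 3\bar l]$ with a $0$-Dirichlet datum on the right, which is compatible with $\phi_1 \ge 0$ and hence preserves the comparison, and to verify via standard heat-kernel estimates that this truncation perturbs $l^\star(t)$ by at most $o(\bar l)$ on the window $t \le c\|\rho_0'\|^{-2}$, since then $\sqrt{t} \ll 3\bar l$ and the influence of the fixed right endpoint on the moving left boundary is exponentially small.
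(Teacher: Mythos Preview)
Your approach is essentially the same as the paper's: compare with a worst-case reference problem built from the linear initial profile $1-2M(x-l_0)$, then use diffusive scaling to conclude that the time for the boundary to advance by $\bar l$ is $cM^{-2}$. The paper carries this out slightly differently by passing to the integrated quantity $f^{(1)}(x,t)=\int_{l(t)}^x\rho^{(1)}$, for which monotonicity in the data $(f_0,\phi_1)$ is immediate; it then takes the reference problem directly on the strip $[l_0,l_0+2\bar l]$ with $\phi_1\equiv 0$ and $\rho_0=\rho_{\min}=1-2M(x-l_0)$, and invokes diffusive scaling to get $\inf\{t:l_{\min}=\bar l\}=c\|\rho_0'\|_\infty^{-2}$. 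This sidesteps your half-line / localization step entirely.

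One point to correct: your reference problem is \emph{not} the Neumann self-similar Stefan problem, since after rescaling the initial datum is $\max(0,1-y)$ rather than a constant, so $l^\star(t)-l_0=\alpha\sqrt{t}$ is not exact. Fortunately you do not need this. What the rescaling $y=2M(x-l_0)$, $s=4M^2t$ actually gives is that the reference problem becomes a fixed, $M$-independent problem; hence the time $s_0$ for its boundary to reach $y=1/2$ is a universal constant, and undoing the scaling yields $t_0=s_0/(4M^2)=c\|\rho_0'\|_\infty^{-2}$. Replace the self-similar claim by this scaling observation and your argument goes through.
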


\begin{proof}
By symmetry it is sufficient to perform the proof only for $l(t)$. We suppose also that $l_0=0$.
Set for $x\in(l(t),2\bar l]$,
$$f^{(1)}(x,t)=\int_{l(t)}^{x}\rho^{(1)}(x,t)\dd x.$$

The reader can check that $(f^{(1)},l)$ is a solution of 
\begin{equation}\label{integration}
\begin{cases}
\partial_t f^{(1)}- f^{(1)}_{xx}=0 \quad  \text{ on }   \left(l(t); l_0+2\bar l\right),\\
f^{(1)}(l(t),t)=0,  \quad  l'(t)=-f^{(1)}_{xx}(l(t),t),\\
f_x^{(1)}(l_0+2\bar l,t)=\phi_1(t), \quad f_x^{(1)}(l(t),t)=1,\\
f^{(1)}(x,0)=f_0(x)  \text{ on } [l_0,l_0+2\bar l].
\end{cases}
\end{equation}
The solution $f^{(1)}$ is monotone in $f_0$ and $\phi_1$: it decreases if $f_0$ and/or $\phi_1$ are decreased.
Thus $l(t)$ is smaller than $l_{\min}(t)$ which is obtained by taking $\phi_1\equiv 0$ and $f^{(1)}(\cdot,0)$ to be the integral of $\rho_{\min}$ 
\begin{equation}
f^{(1)}(\cdot,0)=x-\| \rho'_0\|x^2.
\end{equation}
By diffusive scaling 
$$\inf\{ t\  | \  l_{\min}=\bar l \}=c\| \rho'_0\|^{-2}_{\infty}.$$
\end{proof}

Given $\phi_1$ and $\phi_2$, we generate $r(t)$ and $l(t)$ that are given by the solutions of problems \eqref{grimoire} with initial condition 
$\rho_0$.
Then we define $\bar \rho$ to be the solution of the heat equation on the contracting domain 
$$(l,r):=\cup_{t\le T} (l(t),r(t))\times \{ t \}$$ with $+1$ $-1$ boundary condition and set 
$$\Phi(\phi_1,\phi_2)(t):=\left(\bar \rho(l_0+2\bar l,t), \bar \rho(r_0-2\bar l,t)\right).$$

The fundamental building brick of our proof is the following adaptation of \cite[Proposition 3.1]{cf:CK2}.
\begin{proposition}\label{letzero}
There exists $t_0(\|\rho_x\|_{\infty})$ such that the function $\Phi: \mathcal J \to \mathcal J$ is a contraction map 
for the $l_\infty$ norm. 
In other words, there exists $m(t_0)<1$, such that 
for every $(\phi_1,\phi_2), (\bar \phi_1,\bar \phi_2)\in \mathcal J$,

\begin{equation}
\sup_{t \in [0,t_0]} | \Phi(\phi_1,\phi_2)(t)-\Phi(\bar \phi_1,\bar \phi_2)(t)|\le m(t_0)|(\phi_1,\phi_2)(t)-(\bar \phi_1,\bar \phi_2)(t)|
\end{equation}
where $|\cdot|$ stands for the $l_\infty$ norm in $\bbR^2$.
\end{proposition}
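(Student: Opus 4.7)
The plan is to treat the two coordinates of $\Phi$ separately and to exploit that the construction splits at small times: the left problem in \eqref{grimoire} depends only on $\phi_1$, the right only on $\phi_2$, and the coupling enters only through the global heat-equation step defining $\bar\rho$. Lemma \ref{away} plays the central quantitative role. If $t_0<c\|\rho'_0\|_\infty^{-2}$, then for every $(\phi_1,\phi_2)\in\mathcal J$ we have $l(t)\le l_0+\bar l$ and $r(t)\ge r_0-\bar l$ on $[0,t_0]$, so that the two evaluation points $l_0+2\bar l$ and $r_0-2\bar l$ always sit at distance at least $\bar l$ from the corresponding moving boundary, which is what makes parabolic regularization effective. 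By symmetry it suffices to bound the first coordinate of $\Phi(\phi_1,\phi_2)-\Phi(\bar\phi_1,\bar\phi_2)$ in terms of $\|\phi_1-\bar\phi_1\|_{L^\infty[0,t_0]}$, plus an analogous right-side estimate.

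\textbf{Step 1: sensitivity of the free boundary.} Let $(\bar\rho^{(1)},\bar l)$ be the solution of the left problem in \eqref{grimoire} associated with $\bar\phi_1$. Following the strategy of \cite{cf:CK2}, I would flatten the moving boundary by the change of variable $y=(x-l(t))/(l_0+2\bar l-l(t))$, which turns the one-sided Stefan problem into a nonlinear parabolic equation on the fixed strip $[0,1]\times[0,t_0]$, with $\phi_1$ entering linearly at $y=1$ and the Stefan condition $l'=-\rho^{(1)}_x(l(t),t)$ reading, after rectification, as a differential relation for $l(t)$ driven by the $y$-derivative at $y=0$. Writing the equation satisfied by the difference $\rho^{(1)}-\bar\rho^{(1)}$ in rectified coordinates, applying Duhamel together with the interior Hopf bound (both legitimate because the initial data is smooth away from the boundary and the rectified equation is uniformly parabolic for $t\le t_0$), one obtains
\begin{equation}\label{eq:sens-bd}
\sup_{t\in[0,t_0]}|l(t)-\bar l(t)|\le C\,t_0^{\alpha}\,\|\phi_1-\bar\phi_1\|_{L^\infty[0,t_0]}
\end{equation}
for some $\alpha>0$ and $C=C(\|\rho'_0\|_\infty)$, and the symmetric bound for $r(t)-\tilde r(t)$.

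\textbf{Step 2 and conclusion.} With $(l,r)$ and $(\bar l,\tilde r)$ produced in this way, let $\bar\rho$ and $\tilde\rho$ denote the corresponding solutions of the heat equation on the contracting domains $(l,r)$ and $(\bar l,\tilde r)$ with Dirichlet data $\pm 1$. Extending by $\pm 1$ past the moving boundaries, their difference solves a linear heat equation on a common subdomain with zero initial data and boundary data supported on the thin region where the two free boundaries differ; on that region the data are bounded by a multiple of $\sup_{s\le t}|l(s)-\bar l(s)|$ via the uniform slope bound $|\bar\rho_x|\le 1$ (itself a consequence of Lemma \ref{lebarel} and the maximum principle). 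The evaluation point $l_0+2\bar l$ sits at distance at least $\bar l$ from the perturbed region, and the parabolic Poisson kernel therefore yields
\[
|\bar\rho(l_0+2\bar l,t)-\tilde\rho(l_0+2\bar l,t)|\le C'\sup_{s\in[0,t]}|l(s)-\bar l(s)|,
\]
with $C'$ independent of $t_0\le c\|\rho'_0\|_\infty^{-2}$. Combining this with \eqref{eq:sens-bd} and its symmetric counterpart gives the contraction constant $m(t_0)\le 2CC't_0^{\alpha}$, which is strictly less than $1$ once $t_0$ is small enough in terms of $\|\rho'_0\|_\infty$.

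\textbf{Main obstacle.} I expect the bulk of the effort to be in Step 1. The Stefan condition is nonlinear, and naive parabolic estimates would control $l(t)-\bar l(t)$ only in terms of a $C^{1,\alpha}$-type norm of $\rho^{(1)}-\bar\rho^{(1)}$ on the fixed side, not in terms of $\|\phi_1-\bar\phi_1\|_{L^\infty}$ alone. The rectified-coordinate formulation borrowed from \cite{cf:CK2}, together with the Hopf-type control on the $y$-derivative at $y=0$, is what brings the estimate down to a pure $L^\infty$ dependence on $\phi_1-\bar\phi_1$; carrying this through with the smooth-but-only-Lipschitz initial data at hand is the technical heart of the proof.
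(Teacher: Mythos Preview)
Your outline is correct and follows precisely the Chayes--Kim argument that the paper invokes; indeed the paper's own proof consists solely of a reference to \cite[Proposition~3.1 and Theorem~3.4]{cf:CK2}, together with the observation that Lemma~\ref{away} furnishes the uniform separation of $l(t),r(t)$ from the fixed evaluation points $l_0+2\bar l$, $r_0-2\bar l$---exactly the role you assign it.

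One small inaccuracy worth flagging: the bound $|\bar\rho_x|\le 1$ in Step~2 does not follow from Lemma~\ref{lebarel} and the maximum principle alone (those yield $|\bar\rho|\le 1$ and the sign of $\bar\rho$ near the boundary, not a gradient bound). What you actually need is a uniform-in-$(\phi_1,\phi_2)$ Lipschitz bound for $\bar\rho$ up to the moving boundary, which comes from parabolic regularity once the free boundaries produced in Step~1 are known to be uniformly smooth on $[0,t_0]$; this is part of the content of \cite[Theorem~2.2]{cf:CK2} applied to \eqref{grimoire}. With that correction the argument goes through, and your identification of Step~1 as the technical heart is accurate.
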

\begin{proof}
See  \cite[Proof of Proposition 3.1 and of Theorem 3.4.]{cf:CK2}.
Lemma \ref{away} is needed as one need that for small time $l(t)$, and $r(t)$ stay at a positive distance from the fixed boundary
uniformly in  $(\bar \phi_1,\bar \phi_2)\in \mathcal J$
Note that in Proposition 3.1, the time $t_0$ is said to depend of four different quantities, but the reader can check that they can all be expressed in term of $\|\rho'_0\|_{\infty}$.
\end{proof}

The solution of \eqref{stef} until time $t_0$ 
choose $l(t)$ and $r(t)$ to be moving boundary condition generated by the problems \eqref{grimoire} with initial condition $\rho_0$ 
and boundary condition $(\phi_1,\phi_2)$ given by the unique fixed point of $\Phi$.  Then from the definition of $\Phi$, the solution of the heat equation in the contracting domain
$(l,r)$ satisfies the boundary condition of \eqref{stef}.
The solution is unique because of unicity of the fixed point of the contraction and is smooth for positive times because as stated in
\cite[Theorem 2.2]{cf:CK} the one sided problems \eqref{grimoire} generate boundaries that are $C_\infty$ for positive time.

\medskip

Let us now show that the solution can be extended until the derivative explodes.
Suppose that the solution exists and is smooth until a time $t_2$ and that 

$$\sup_{t< t_2} \max_{x\in (l(t),r(t))} |\rho_{x}(x,t)|\le K<\infty$$

Then taking the $t_0$ corresponding to $K$ in Lemma \ref{letzero}, we can take the solution at time $t_2-(t_0/2)$
and extend it until time $t_2+(t_0/2)$. Smoothness of the motion of the boundary at the time $t_2-t_0/2$ is guaranteed by unicity of short time solutions.
Iterating this procedure, we see that the maximal time for which a smooth solution exists must satisfies \eqref{blowup}.
\qed
\medskip

\subsection{The convex case: proof of Proposition \ref{neeting} when $f$ is concave}

We introduce the notation 
$$k(x,t):=-f_{xx}(t).$$
When $f$ is concave $k$ is positive but as in this section we prove some technical results that are also valid also when $k$ is allowed to be negative, we will 
mention to the reader when we suppose that $k$ is positive.

\medskip

The line of the proof is to show first  that around time $t_1$, we must have 
$$\lim_{t \to T_1} \int_{l(t)}^{r(t)} k\log k=\infty,$$
and then to show that $ \int_{l(t)}^{r(t)} k\log k \dd x$ can be large only if the area below the graph of $f$ is small.
Recalling Equation \eqref{area}, the area is small only if $t$ is close to $T^*$. The combination of these two statements 
implies that one must have $t_1=T^*$.
The inspiration for many ingredients of this proof comes from \cite{cf:grayson}.

\medskip

The first point can be stated as follows

\begin{lemma}\label{ilestborn}
At any positive time $t<t_1$, we have
\begin{equation}
 \|k(\cdot,t)\|_{\infty}\le \max\left(  \|k(\cdot,0)\|_{\infty}, k(t,r(t)), k(t,l(t))\right).
\end{equation}
and also
\begin{equation}
\max_{s\le t} \|k_x(\cdot,t)\|_{\infty}\le  \max_{s\le t}\left(  \|k_x(\cdot,0)\|_{\infty}, k(s,r(s))^2, k(s,l(s))^2\right).
\end{equation}
As a consequence
\begin{equation}
\limsup_{t\to t_1} \int^{r(t)}_{l(t)} (k\log k)\ind_{k\ge 0} \dd x=\infty.
\end{equation}
\end{lemma}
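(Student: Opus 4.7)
The key observation is that $k=-f_{xx}$ and its derivative $k_x=-f_{xxx}$ both satisfy the one-dimensional heat equation on the moving interior $(l(t),r(t))$: differentiating $\partial_t f=f_{xx}$ in $x$ twice yields $\partial_t k=k_{xx}$ and $\partial_t k_x=(k_x)_{xx}$. The two displayed bounds are therefore just the parabolic maximum principle on the bounded space-time domain $\bigcup_{s\le t}(l(s),r(s))\times\{s\}$, applied to $\pm k$ and to $\pm k_x$ respectively. What remains is to identify the values of $k$ and $k_x$ on the two lateral boundaries $s\mapsto l(s)$ and $s\mapsto r(s)$.

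To compute these boundary values I would differentiate the identities $f(l(s),s)=0$ and $f_x(l(s),s)=1$ in $s$. The first gives $l'(s)+f_{xx}(l(s),s)=0$, reproducing the Stefan relation $k(l(s),s)=l'(s)$. Differentiating the second gives $f_{xx}(l(s),s)\,l'(s)+f_{xt}(l(s),s)=0$, hence $f_{xxx}(l(s),s)=k(l(s),s)^2$, so $k_x(l(s),s)=-k(l(s),s)^2$. The symmetric computation at the right boundary gives $k(r(s),s)=-r'(s)$ and $k_x(r(s),s)=k(r(s),s)^2$. Feeding these identifications into the maximum principle on $[\varepsilon,t]$ and letting $\varepsilon\downarrow 0$, using the $C^\infty$ regularity of $l,r$ on $(0,t_1)$ from Proposition~\ref{shorttimes}, yields the first two inequalities.

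For the blow-up statement, Proposition~\ref{shorttimes} gives $\|k(\cdot,t)\|_{\infty}\to\infty$ as $t\to t_1$, so the first inequality forces a subsequence $s_n\to t_1$ along which, by symmetry, $K_n:=k(l(s_n),s_n)\to\infty$; one may moreover choose each $s_n$ so that $K_n$ dominates all lateral boundary values of $k$ on $[0,s_n]$. At time $s_n$ the second inequality then gives $\|k_x(\cdot,s_n)\|_{\infty}\le K_n^2$ for $n$ large, and hence
\begin{equation*}
k(x,s_n)\ge K_n-K_n^2(x-l(s_n))\ge \frac{K_n}{2}>0,\qquad x\in\Big[l(s_n),l(s_n)+\tfrac{1}{2K_n}\Big].
\end{equation*}
Restricting the integral to this subinterval yields
\begin{equation*}
\int_{l(s_n)}^{r(s_n)} (k\log k)\,\ind_{k\ge 0}\dd x \ge \frac{1}{2K_n}\cdot\frac{K_n}{2}\log\frac{K_n}{2}=\frac{1}{4}\log\frac{K_n}{2}\longrightarrow\infty.
\end{equation*}

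The main technical obstacle I anticipate is the rigorous application of the parabolic maximum principle on a domain whose lateral boundary is only smooth for positive times and whose boundary values themselves blow up; the $\varepsilon$-approximation above together with Proposition~\ref{shorttimes} handles this, but one must check that the boundary computations of $k$ and $k_x$ extend continuously up to the lateral boundary on each slice $s\ge\varepsilon$.
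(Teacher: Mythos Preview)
Your approach matches the paper's: both use that $k$ and $k_x$ solve the heat equation inside the moving domain, both compute $k_x$ on the lateral boundary by differentiating $f_x(l(s),s)=1$ (this is exactly the paper's Lemma~\ref{orreur}), and both finish by using the linear lower bound $k(x,s_n)\ge K_n-K_n^2(x-l(s_n))$ near the boundary to show the integral diverges.

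There is one genuine discrepancy, however. The plain parabolic maximum principle on $\bigcup_{s\le t}(l(s),r(s))\times\{s\}$ only gives
\[
\|k(\cdot,t)\|_\infty\le \max\Bigl(\|k(\cdot,0)\|_\infty,\ \sup_{s\le t}|k(l(s),s)|,\ \sup_{s\le t}|k(r(s),s)|\Bigr),
\]
with a \emph{supremum over $s\le t$} on the lateral terms, not just the time-$t$ boundary values as the lemma states. The paper gets the sharper form by a different mechanism: because $k_x=\mp k^2$ never vanishes on the boundary, no new interior local extrema of $k$ can be born from the boundary; interior extrema therefore trace back to time~$0$ and are dominated by $\|k(\cdot,0)\|_\infty$, so any larger value must sit on the boundary \emph{at time $t$}. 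Your weaker version is still enough for the final blow-up step (you already build the $\sup_{s\le s_n}$ into the choice of $s_n$), so this only matters if you want the first inequality exactly as stated.

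One small omission in the last step: writing ``restricting the integral to this subinterval'' as a lower bound is not quite right, since $(k\log k)\ind_{k\ge 0}$ is negative on $\{0<k<1\}$. You need to add the harmless correction $-e^{-1}(r_0-l_0)$ for the complementary part of the interval, as the paper does.
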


To prove the result, we notice that 
the boundary constraint for our problem yields a simple relation between $k$ and its derivative at the boundary.

\begin{lemma}\label{orreur}
We have for all $t\in(0,t_1)$.
\begin{equation}\label{relationone}
k_x(l(t),t)=-k^2(l(t),t) \text{ and } k_x(r(t),t)=k^2(r(t),t).
\end{equation}
\end{lemma}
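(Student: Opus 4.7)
The plan is to derive the claimed identities by differentiating the boundary conditions $f_x(l(t),t) = 1$ and $f_x(r(t),t) = -1$ along the moving boundaries, and then invoking the heat equation to trade a time derivative for a third space derivative. Since Proposition \ref{shorttimes} provides $C^\infty$ regularity of $l$ and $r$ on $(0,t_1)$, and since $f$ is smooth up to the boundary, all derivatives appearing below are well-defined.

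First, differentiating the identity $f_x(l(t),t) \equiv 1$ in $t$ yields
\begin{equation*}
f_{xx}(l(t),t)\, l'(t) + f_{xt}(l(t),t) = 0.
\end{equation*}
Inside the unpinned region we have $f_t = f_{xx}$, and differentiating in $x$ gives $f_{xt} = f_{xxx}$, an identity that extends by continuity to the boundary. Substituting $l'(t) = -f_{xx}(l(t),t)$ from \eqref{sstef}, we obtain
\begin{equation*}
-\bigl(f_{xx}(l(t),t)\bigr)^2 + f_{xxx}(l(t),t) = 0.
\end{equation*}
Translating via $k = -f_{xx}$ (so $k_x = -f_{xxx}$), this rearranges to $k_x(l(t),t) = -k(l(t),t)^2$.

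The computation at the right boundary is the mirror image: differentiating $f_x(r(t),t) \equiv -1$ in $t$ and using $r'(t) = f_{xx}(r(t),t) = -k(r(t),t)$ gives
\begin{equation*}
-k(r(t),t) \cdot \bigl(-k(r(t),t)\bigr) + f_{xxx}(r(t),t) = 0,
\end{equation*}
so $f_{xxx}(r(t),t) = -k(r(t),t)^2$, and therefore $k_x(r(t),t) = k(r(t),t)^2$.

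The only delicate point is justifying the chain-rule differentiation of $f_x$ along the boundary curve and the extension of $f_{xt} = f_{xxx}$ up to that boundary, but both are immediate consequences of the $C^\infty$ regularity already obtained in Proposition \ref{shorttimes}. There are no genuine obstacles here; the lemma is essentially a compatibility identity forced by the parabolic equation together with the Dirichlet-type condition on $f_x$ along the free boundary.
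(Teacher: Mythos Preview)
Your proof is correct and follows essentially the same approach as the paper: both differentiate the boundary identity $f_x(l(t),t)\equiv 1$ in $t$, apply the chain rule together with the heat equation (to replace $f_{xt}$ by $f_{xxx}=-k_x$) and the boundary motion law $l'(t)=-f_{xx}(l(t),t)=k(l(t),t)$, and then read off $k_x=-k^2$ at $l(t)$. Your write-up is slightly more detailed about regularity, but the argument is identical.
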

\begin{proof}
The result is obtained by derivating the equality $f_x(t,l(t))=1$ which gives
\begin{equation}
\partial t \left(f_x(l(t),t)\right)= -l'(t) k(l(t),t)- k_x(l(t),t)=-(k_x+k^2)(l(t),t)=0.
\end{equation}
The proof for $r(t)$ is similar.
\end{proof}

\begin{proof}[Proof of Lemma \ref{ilestborn}]
Inside the interval $(l,r)$, $k$ evolves according to the heat equation. This implies that its local maxima decrease and its local minima increase.
Furthermore, equation \eqref{relationone} guarantees that $k_x$ never vanishes on the boundary at positive 
times so that local extrema cannot be created from the boundary. Hence at all times, local minima and maxima of $k$ in $(l,r)$ 
are smaller than $ \|k(\cdot,0)\|_{\infty}$ in absolute value. Thus, if the overall maximum is larger than  
$\|k(\cdot,0)\|_{\infty}$,
it must be reached on the boundary.

\medskip

For the derivative $k_x$ there is no easy argument that prevents 
the creation of new local maxima $k_x$ from the boundary (although we do not believe it can occur). 
However if $t$ is such that 

\begin{equation}
\|k_x(\cdot,t)\|_{\infty}=\max_{s\le t} \|k_x(\cdot,s)\|_{\infty},
\end{equation}
the fact that local maxima of $k_x$ in $(l,r)$ decrease and local minima increase implies that 
the overall maximum of $k_x$ is realized on the boundary. This yields the second result.

\medskip

Finally, consider $K\ge \max( \|k(\cdot,0)\|_{\infty}, \sqrt{\|k_x(\cdot,0)\|_{\infty}})$,
and $t_K$ the first time where $\|k(\cdot,t)\|_{\infty}=K$. From the two first points,
the maximum of $k$ is reached on the boundary of $[l,r]$, and thus $k_x$ is bounded by $K^2$.
It yields the following inequality

\begin{multline}
\int^{r(t_k)}_{l(t_k)} (k\log k)\ind_{k\ge 0} \dd x\ge e^{-1}(r_0-l_0)
\\
+\int_{l(t_K)}^{l(t_K)+\frac{1}{K}}(K-K^2(x-l(t_K)))\log (K-K^2(x-l(t_K)))\dd x.
\end{multline}
 After a change of variable the reader can check that second term is equal to $\frac{1}{2}\left(\log K-1/2\right)$, 
 which allows us to conclude by letting $K$ go to infinity.

\end{proof}

Now are ready to prove that $t_1=T^*$ when $f_0$ is concave. To do so, we suppose that $t_1\le T^*-\gep$ for some positive $\gep$ and show that this implies 
\begin{equation}\label{boundedness}
\int^{r(t)}_{l(t)} (k\log k) \dd x  \text{ is uniformly bounded on } [0,t_1),
\end{equation}
which contradicts Lemma \ref{ilestborn}.

\medskip

Our strategy for the proof is to show that until time $t_1$ the time derivative of $\int^{r(t)}_{l(t)} (k\log k)\dd x$ is uniformly bounded.

\begin{multline}\label{bypartitis}
\partial_t\left(\int^{r(t)}_{l(t)} k \log k \dd x \right)\\
= -l'(t)(k\log k)(l(t),t)+r'(t)(k\log k)(r(t),t)+ \int^{r(t)}_{l(t)}\partial_t(k \log k)\dd x
\\=-(k^2\log k)(l(t),t)-(k^2\log k)(r(t),t)+
\int^{r(t)}_{l(t)} k_{xx} (\log k+1)\dd x \\
= -(k^2\log k)(l(t),t)-(k^2\log k)(r(t))+[k_x(\log k+1)]_{l(t)}^{r(t)}-\int^{r(t)}_{l(t)} \frac{k^2_{x}}{k}\dd x ,
\end{multline}
where the third inequality is obtained by using integration by parts.
The relation \ref{relationone} between $k$ and its derivative makes some of the terms cancel each other and 
we end up with

\begin{equation}\label{marmouse}
\partial_t\left(\int^{r(t)}_{l(t)} k \log k \dd x\right)=k^2(l(t),t)+k^2(r(t),t)-\int^{r(t)}_{l(t)} \frac{k^2_{x}}{k}\dd x.
\end{equation}

In order to show that the r.h.s.\ is uniformly finite, 
we first show that it is not possible to have, on the graph of $f$, $k$ large  on an arc whose total curvature is close to
$\pi/4$ near one of the boundaries. If it were not the case, the concavity of $f$ would imply that the area below the graph of $f$ is small.
We need to introduce some definitions. Set 

\begin{equation}\label{ak}\begin{split}
a(K,t)&:=\inf\{x\ge l(t) \ | \  k(x,t) \le K^2\}, \\
b(K,t)&:=\sup\{x\le r(t) \ | \  k(x,t) \le K^2\}.
\end{split}
\end{equation}

\begin{lemma}\label{akk}
We have for all $t$,
$$f(a(K,t),t)\le K^{-2}.$$
In addition, if
$\int_{l(t)}^{a(K,t)} k \dd x\ge 1-\gd$
and $f$ is concave,
\begin{equation}
\label{gijoe}
\int_{l(t)}^{r(t)} f \dd x\le \left(K^{-2}+\gd(r_0-l_0)\right)(r_0-l_0).
\end{equation}
 If $\int_{b(K,t)}^{r(t)} k \dd x \le 1-\gd$ then $f(B(K,t),t)\le K^{-2}$ and the inequality \eqref{gijoe} also holds.
\end{lemma}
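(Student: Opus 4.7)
The plan is to combine the boundary data $f(l(t),t)=0$, $f_x(l(t),t)=1$ with the pointwise lower bound $k(y,t)\ge K^2$ on $(l(t),a(K,t))$, which is immediate from \eqref{ak}. Integrating $f_{xx}=-k$ twice against the boundary values at $l(t)$ gives, for every $x\in[l(t),a(K,t)]$, the Taylor-type identity
\begin{equation}
f(x,t)=(x-l(t))-\int_{l(t)}^{x}(x-y)\,k(y,t)\dd y.
\end{equation}
Replacing $k(y,t)$ by the lower bound $K^2$ and then optimizing the resulting quadratic $(x-l(t))-\tfrac{K^2}{2}(x-l(t))^2$ over the variable $x-l(t)\ge 0$ yields $f(x,t)\le\tfrac{1}{2}K^{-2}\le K^{-2}$ uniformly on $[l(t),a(K,t)]$; in particular, $f(a(K,t),t)\le K^{-2}$.

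For the second assertion I would first convert the curvature hypothesis into a slope bound at $a(K,t)$ by integrating $f_{xx}=-k$ only once:
\begin{equation}
f_x(a(K,t),t)=1-\int_{l(t)}^{a(K,t)}k(y,t)\dd y\le\delta.
\end{equation}
Concavity of $f$ on $(l(t),r(t))$ then makes $f_x(\cdot,t)$ non-increasing on this interval, so the inequality $f_x\le\delta$ propagates to the whole of $[a(K,t),r(t)]$. Starting from $f(a(K,t),t)\le K^{-2}$ and integrating this slope bound to the right gives $f(x,t)\le K^{-2}+\delta(r_0-l_0)$ for every $x\in[a(K,t),r(t)]$, while on the complementary subinterval $[l(t),a(K,t)]$ the uniform bound $f(x,t)\le K^{-2}$ from the first step is already in hand. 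Integrating these two estimates against $\dd x$ on $[l(t),r(t)]$ and using the contracting bound $r(t)-l(t)\le r_0-l_0$ yields exactly \eqref{gijoe}.

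The analogous statement for $b(K,t)$ is obtained by the reflection $x\mapsto -x$, which swaps the boundary conditions $f_x(l(t),t)=1$ and $f_x(r(t),t)=-1$ and turns the right-hand integral hypothesis into the mirror of the left-hand one. The proof contains no substantial obstacle; the only delicate point is that concavity must be invoked \emph{only} on the right-hand piece $[a(K,t),r(t)]$, because concavity combined with $f(l(t),t)=0$ does not in itself give an upper bound on $f$ over $[l(t),a(K,t)]$ (it rather gives a lower bound via the secant), which is precisely why the direct quadratic argument of the first step is needed there.
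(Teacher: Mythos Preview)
Your proof is correct and follows essentially the same strategy as the paper: bound $f$ at $a(K,t)$ using the curvature lower bound $k\ge K^2$ on $(l(t),a(K,t))$, convert the integral hypothesis into the slope estimate $f_x(a(K,t),t)\le\delta$, propagate this via concavity, and integrate. The only genuine difference is your treatment of the first assertion: the paper simply bounds the length of the interval, writing $f(a(K,t),t)\le a(K,t)-l(t)\le K^{-2}\int_{l(t)}^{a(K,t)}k\,\dd x\le 2K^{-2}$, whereas your Taylor identity plus quadratic optimization yields the sharper $f\le\tfrac{1}{2}K^{-2}$ uniformly on $[l(t),a(K,t)]$; your constant actually matches the stated bound $K^{-2}$, while the paper's own proof only produces $2K^{-2}$.
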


As a consequence, we get that if $K$ and $\delta$ are sufficiently large resp.\ small 
so that $$ \left(K^{-2}+\gd(r_0-l_0)\right)(r_0-l_0)\le 2\gep,$$
then for all $t< t_1\le T^*-\gep$,
\begin{equation}\label{grosarc}
\int_{l(t)}^{a(K,t)} k \dd x< 1-\gd \quad \text{ and } \quad \int_{b(K,t)}^{r(t)} k \dd x < 1-\gd,
\end{equation}
because if not, the conclusion of Lemma \ref{akk}  would contradict \eqref{area}.

We are going to use this information to bound the r.h.s.\ of \eqref{marmouse} from above, 
by using the following functional inequality sometimes referred to as Agmon's inequality.
We include its proof at the end of the section for the sake of completeness.

\begin{lemma}\label{agmon}
Let $\gamma$ be a function in $L_2(\bbR_+)$ whose derivative is in  $L_2(\bbR_+)$.
Then
\begin{equation}\label{laborn}
\| \gamma \|^4_{\infty}\le 4 \left(\int_{\bbR_+} \gamma^2 \dd x\right)  \left(\int_{\bbR_+}\gamma_x^2 \dd x\right).
\end{equation} 

\end{lemma}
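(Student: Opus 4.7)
The plan is to bound $\gamma(x)^2$ pointwise by writing it as an integral of its derivative from $x$ to $\infty$, then applying Cauchy--Schwarz.

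First I would observe that $\gamma$ decays at infinity. Since $\gamma\in L_2(\bbR_+)$ and $\gamma_x\in L_2(\bbR_+)$, the function $\gamma^2$ is absolutely continuous on $\bbR_+$ with derivative $2\gamma\gamma_x\in L_1(\bbR_+)$ (by Cauchy--Schwarz), hence $\gamma^2$ has a limit at $+\infty$; since $\gamma\in L_2(\bbR_+)$ this limit must be $0$. Consequently, for every $x\geq 0$,
\begin{equation}
\gamma(x)^2=-\int_x^{\infty}\frac{\dd}{\dd y}\gamma(y)^2\dd y=-2\int_x^{\infty}\gamma(y)\gamma_x(y)\dd y.
\end{equation}

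Next I would apply the Cauchy--Schwarz inequality to the right-hand side and then extend the integration from $[x,\infty)$ to $\bbR_+$, which only increases the bound:
\begin{equation}
\gamma(x)^2\leq 2\left(\int_x^{\infty}\gamma(y)^2\dd y\right)^{1/2}\left(\int_x^{\infty}\gamma_x(y)^2\dd y\right)^{1/2}\leq 2\left(\int_{\bbR_+}\gamma^2\dd y\right)^{1/2}\left(\int_{\bbR_+}\gamma_x^2\dd y\right)^{1/2}.
\end{equation}
Squaring both sides and taking the supremum over $x\geq 0$ yields the desired inequality \eqref{laborn}.

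There is no real obstacle here; the only point that requires a moment of care is justifying the decay $\gamma(x)\to 0$ as $x\to\infty$, which is needed to make the identity $\gamma(x)^2=-2\int_x^\infty\gamma\gamma_x\dd y$ valid. Everything else is a direct one-line application of Cauchy--Schwarz. Note that the constant $4$ is sharp and corresponds to this elementary derivation; no integration by parts or Sobolev embedding machinery is needed.
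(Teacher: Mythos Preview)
Your proof is correct. Both your argument and the paper's rest on the same identity $\gamma(x_0)^2 = -2\int_{x_0}^{\infty}\gamma\gamma_x\,\dd y$ (which uses $\gamma\to 0$ at infinity), followed by the elementary bound $2|\!\int\gamma\gamma_x| \le 2\|\gamma\|_2\|\gamma_x\|_2$. You apply Cauchy--Schwarz directly and pointwise, whereas the paper first reduces to the case where the maximum is at the origin, normalizes by scaling so that $\|\gamma\|_2=\|\gamma_x\|_2=1$, and then expands $\int(\gamma+\gamma_x)^2\ge 0$ to reach the same conclusion. The underlying inequality is identical; your route is slightly more direct since it avoids the translation and scaling reductions, while the paper's version makes the role of the scaling invariance of the inequality explicit.
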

We apply the inequality to $\gamma$ defined as
\begin{equation}\label{reptile}
\gamma(x)=
\begin{cases}
\sqrt{k(x+l(t),t)}-K &\text{ if }  x\le a(K,t)-l(t),\\
0 &\text{ if }  x\ge a(K,t)-l(t).
\end{cases}
\end{equation}
With this definition, \eqref{grosarc} reads

$$\int_{\bbR_+} \gamma^2 \dd x\le \int_{l(t)}^{a(K,t)} k\dd x \le 1-\delta.$$
Then we obtain that 
\begin{equation}
\int^{a(K,t)}_{l(t)} \frac{k^2_{x}}{k}\dd x=4\int_{l(t)}^{a(K,t)} \gamma^2_{x}\dd x\ge \frac{\| \gamma \|^4_{\infty}}{\int_{\bbR_+} \gamma^2 \dd x}
\ge \frac{\left(k(l(t))^{1/2}-K\right)^4}{1-\gd}.
\end{equation}
If $k(l(t))\ge 16 K^2/\gd^2$ this is larger than $k^2(l(t))$, and in any case it is non-negative.
Hence 
\begin{equation}\label{croconodile}
 k^2(l(t))-\int^{a(K,t)}_{l(t)} \frac{k^2_{x}}{k}\dd x\le 256 K^4/\gd^4.
\end{equation}
Symmetrically
\begin{equation}
 k^2(r(t))-\int^{r(t)}_{b(K,t)} \frac{k^2_{x}}{k}\dd x\le 256 K^4/\gd^4,
\end{equation}
and hence, combining these inequalities with \eqref{marmouse}, we have

\begin{equation}
\partial_t\left(\int^{r(t)}_{l(t)} k \log k\right)\le 512 K^4/\gd^4.
\end{equation}
This implies that $\int^{r(t)}_{l(t)} k \log k$ remains bounded, and gives a contradiction to Lemma \ref{ilestborn}.

\medskip

To finish the proof of Proposition \ref{neeting}, we show now that, when $f_0$ is concave, $$\lim_{t\to T^*} r(t)-l(t)=0.$$
Because of \eqref{area} and the fact that $f$ is a Lipshitz function, 
we have
\begin{equation}\label{lemax}
 \max_{x\in[l,r]} f(x,t)\le \sqrt{2(T^*-t)}.
\end{equation}

Since $\min_{x\in[l,r]} k(\cdot,t)$ is an increasing function of time, for $t\ge T^*/2$, 
$k$ is uniformly bounded away from zero, say $k\ge \eta>0$. This combined with with \eqref{lemax} implies that
\begin{equation}
 (r-l)(t)\le 2 \left(\frac{8(T^*-t)}{\eta}\right)^{1/4}.
\end{equation}

\qed

We end this section with the proof of Lemma \ref{akk} and Lemma \ref{agmon}

\begin{proof}[Proof of Lemma \ref{akk}]
As $f$ is a Lipshitz function $\int^{a(K,t)}_{l(t)} k \dd x =1-f_x(a(K,t),t)\le 2$.
Hence, using again Lipshitzianity and the definition of $a(K,t)$,
\begin{equation}
 f(a(K,t),t)\le a(K,t)-l(t)\le K^{-2}\int^{a(K,t)}_{l(t)} k \dd x  \le 2K^{-2}.
\end{equation}
If $f$ is concave and $\int^{a(K,t)}_{l(t)} k \dd x\ge 1-\delta$, 
then  
$$\forall x\ge a(K,t),\ f_x(x,t) \le f_x(a(K,t),t)=\delta,$$
 and hence for all $x\ge (a(K,t))$ 
 \begin{equation}
  f(x,t)\le f_x(a(K,t),t)+\delta(x-a(K,t))\le 2K^{-2}+\delta(r_0-l_0).
 \end{equation}
As the bound also holds for $x\le a(K,t)$, we can integrate the inequality over $[r(t),l(t)]$ to conclude.

\end{proof}

\begin{proof}[Proof of Lemma \ref{agmon}]
 It is sufficient to prove the result when the maximum of $\gamma$ is attained at $0$ (if it is attained at a positive value $x_0$, we can then consider 
 $\gamma( \cdot -x_0)$ restricted to $\bbR_+$ which has the effect of making the r.h.s.\ of \eqref{laborn} smaller).
 As the inequality is invariant by the scalings $\gamma\to \gl \gamma$ and $\gamma\to \gamma( \gl \ \cdot)$ with $\gl\in(0,\infty)$,
 we can also assume that $\int \gamma^2=\int \gamma_x^2=1$.
Then we have
 
 \begin{equation}
  0\le \int (\gamma-\gamma_x)^2\dd x=\int \gamma^2\dd x+ \int \gamma_x^2\dd x -\gamma(0)^2,
 \end{equation}
and hence $\gamma(0)^4\le 4$.

\end{proof}

\subsection{The concavification of positive initial condition: proof of Proposition \ref{concavification}}

Let us now move to the non-convex case.
We consider the inflection points on the graph of $f$, that is, the points around which $k$ changes sign. When $t>0$ there are only finitely many of them,
and furthermore, their number is decreasing in time and they move continuously.
We want to show that the last inflection point disappears before $t_1$. 

\medskip

We suppose that this is not the case, and then we show that $\int^{r(t)}_{l(t)} k \log k\ind_{k\ge 0}$ remains bounded
when $t$ approaches $t_1$. The first thing we do is to place ourselves in a neighborhood of $t_1$ where the number of inflection points is constant 
(it has to be even because they are the extremities of arcs where $k$ is negative).
Let $i_1,\dots,i_{2p}$ denote the abscissa of these inflection points. Then 
\begin{equation}
\int^{r(t)}_{l(t)} k \log k\ind_{k\ge 0}\dd x= \int^{r(t)}_{i_1(t)} k \log k\dd x +\sum_{j=1}^{p-1}   \int_{i_{2j}(t)}^{i_{2j+1}(t)} k \log k \dd x+\int_{i_{2p}}^{r(t)} k \log k \dd x.
\end{equation}
First notice that using integration by part, we have

\begin{equation}\label{bolide}
 \partial_t\left( \int^{i_{2j}(t)}_{i_{2j+1}(t)} k \log k \dd x\right)= -\int^{i_{2j}(t)}_{i_{2j+1}(t)} \frac{k^2_x}{k} \dd x \le 0.
\end{equation}
Hence to prove that $\int^{r(t)}_{l(t)} k \log k\ind_{k\ge 0} \dd x$ remains bounded we just have to check that the extremal terms in equation \eqref{bolide} do not explode.
By symmetry we can concentrate on $\int_{l(t)}^{i_1(t)} k \log k \dd x$.
Similarly to \eqref{bypartitis}, we have
\begin{equation}
\partial_t\left(\int^{i_1(t)}_{l(t)} k \log k \dd x \right)=k^2(l(t),t)-\int^{i_1(t)}_{l(t)} \frac{k^2_{x}}{k}\dd x.
\end{equation}

Note that $\int_{l(t)}^{i_1(t)} k(l(t),t)\dd x=1-f_x(i_1(t))$ is decreasing, because $f_x(i_1(t))$ is a local minimum (thus increases).
Suppose that 
$$\lim_{t\to t_1} \int_{l(t)}^{i_1(t)} k \dd x< 1.$$
Then for $t$ close to $t_1$ can use Lemma \ref{agmon} with
\begin{equation}
\gamma(x)=
 \begin{cases}
\sqrt{k(x+l(t),t)} &\text{ if }  x\le i_1(t)-l(t),\\
0 &\text{ if }  x\ge i_1(t)-l(t).
\end{cases}
\end{equation}
and obtain that
\begin{equation}\label{arcogive}
\int^{i_1(t)}_{l(t)} \frac{k^2_{x}}{k}\dd x< k^2(l(t),t).
\end{equation}
and hence that 
 $\int^{i_1(t)}_{l(t)} k \log k$ is decreasing in a neighborhood of $t_1$.
 
 \medskip
 
 Hence $\int^{i_1(t)}_{l(t)} k \log k$ can only explode if 
 \begin{equation}\label{arcoussin}
  \lim_{t\to t_1} \int_{l(t)}^{i_1(t)} k\dd x\ge 1.
\end{equation}
In \eqref{arcoussin} holds, let us consider $i_0(t)< i_1(t)$ such that
$$\int_{l(t)}^{i_0(t)} k(l(t),t)=1.$$ 
The point $i_0(t)$ is the point at which the leftest local maximum of $f$ is attained.
Let us first show that if $\int^{i_1(t)}_{l(t)} k \log k \dd x$ is unbounded, then in a neighborhood of $t_1$
$f$ must reach its maximum at $i_0(t)$ and $f(i_0,t)$ is small.

\medskip

Let us define $a(K,t)$ like in \eqref{ak}. Then if $\int_{l(t)}^{a(K,t)} k \dd x\le 1-\gd$
then the computation of the previous section \eqref{reptile} to \eqref{croconodile} are still valid and thus
\begin{equation}
 \partial_t\left(\int^{i_1(t)}_{l(t)} k \log k\right)\le \frac{256 K^4}{\delta^4}.
\end{equation}
This implies that if $\int^{i_1(t)}_{l(t)} k \log k \dd x$ is unbounded then for every $K$ and $\delta$,
 there is some $t\le t_1$ such that 

\begin{equation}\label{chfleur}
\int_{l(t)}^{a(K,t)} k(x,t) \dd x\ge 1-\gd.
\end{equation}
If \eqref{chfleur} holds, then Lemma \ref{akk} and the concavity of $f$ restricted to $[0,i_1(t)]$ imply that
\begin{equation}
 f(i_0(t),t)\le \delta(i_0(t)-a(K,t))+f(a(K,t),t)\le K^{-2}+\delta(r_0-l_0).
\end{equation}
If $f(i_0(t),t)$ is not the overall maximum of $f$ then it means that $f$ admits a local minimum which is smaller than
$K^{-2}+\delta(r_0-l_0)$. 
As local minima of $f$ are increasing, 
they must all be higher than local minima of $f_0$ at all time, and thus cannot be smaller than 
$K^{-2}+\delta(r_0-l_0)$, if $K$ is large enough and $\delta$ small enough.
Let us write the conclusion of this reasoning as a lemma.

\begin{lemma}
 If
 $$\limsup_{t\to t_1} \int^{i_1(t)}_{l(t)} k \log k\dd x= \infty,$$
 then in a neighborhood of $t_1$, $f(\cdot,t)$ has its maximum at
 $i_0\in (l(t),i_1(t))$, and $f$ has no other local extremum.
 Furthermore 
 \begin{equation}\label{minimax}
 \lim_{t\to t_1} f(i_0(t),t)=0.
 \end{equation}
\end{lemma}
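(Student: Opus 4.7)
The plan is to combine three ingredients: the one-point smallness estimate on $f(i_0(t),t)$ implicit in the paragraph leading to \eqref{chfleur}, the monotonicity of values at local extrema under the heat equation, and the positivity of the initial condition $f_0$.

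First I would extract \eqref{minimax}. The derivation of \eqref{chfleur} in fact shows that whenever $\int_{l(t)}^{a(K,t)} k\,\dd x \le 1-\gd$ one has the pointwise bound $\partial_t \int^{i_1(t)}_{l(t)} k\log k\,\dd x \le 256 K^4/\gd^4$; under the standing hypothesis $\int^{i_1(t)}_{l(t)} k\log k\,\dd x \to \infty$, this forces, for every large $K$ and every small $\gd$, the reverse inequality \eqref{chfleur} to hold at times $t$ arbitrarily close to $t_1$, and hence (by Lemma \ref{akk}) $f(i_0(t),t)\le K^{-2}+\gd(r_0-l_0)$ at such $t$. Moreover $t\mapsto f(i_0(t),t)$ is non-increasing: differentiating along the smooth trajectory of a local maximum of a heat-equation solution gives $\tfrac{d}{dt}f(i_0(t),t)=f_{xx}(i_0(t),t)\le 0$. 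Combining monotonicity with the arbitrarily-small-value estimate yields \eqref{minimax}.

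Next I would show that every local minimum of $f(\cdot,t)$ stays uniformly bounded away from zero. Since $f_0$ is smooth and positive on $(l_0,r_0)$ with $f_0(l_0)=f_0(r_0)=0$, any local minimum of $f_0$ has strictly positive value; let $m_0>0$ denote the smallest such value (with $m_0:=+\infty$ if $f_0$ has no local minimum). By the same parabolic-maximum-principle argument, the value at any local minimum of $f(\cdot,t)$ is non-decreasing along its trajectory, and a Sturm-type intersection argument applied to $f_x$ shows that the number of critical points of $f(\cdot,t)$ does not increase in time, so no new local minima can be created. Hence every local minimum of $f(\cdot,t)$ has value $\ge m_0$ for all $t\in(0,t_1)$.

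To conclude, fix $t$ close enough to $t_1$ so that $f(i_0(t),t)<m_0/2$. The boundary data $f_x(l(t),t)=1$ and $f_x(r(t),t)=-1$ force the critical points of $f(\cdot,t)$ in $(l(t),r(t))$ to alternate local max and local min, starting and ending with a local max. In particular, the existence of any critical point besides $i_0$ would yield at least one additional local maximum and a local minimum $j$ lying between $i_0$ and this extra maximum, for which the inequality $f(j,t)\le f(i_0(t),t)<m_0$ would contradict the lower bound just established. Thus $i_0$ is the unique critical point, and the boundary conditions $f=0$ at $l(t),r(t)$ make it the global maximum. I expect the main subtlety to be a fully rigorous justification of the Sturm-type non-creation of critical points on the moving domain, though this should follow from the standard parabolic intersection theorem applied to $f_x$.
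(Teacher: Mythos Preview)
Your proposal is correct and follows essentially the same line as the paper: use the bound $\partial_t\int_{l}^{i_1}k\log k\le 256K^4/\gd^4$ to force \eqref{chfleur} at times near $t_1$, deduce $f(i_0(t),t)\le K^{-2}+\gd(r_0-l_0)$ at those times, and then invoke the fact that local minima of $f$ stay above the strictly positive value $m_0$ to rule out any other extremum. The one point to tighten is the order of the argument: your monotonicity claim for $t\mapsto f(i_0(t),t)$ is only safe \emph{after} you know $i_0$ is the unique critical point (otherwise the leftmost maximum could merge with the adjacent minimum and jump to a higher local maximum), so you should first use a single small-value time together with the $m_0$ bound to conclude uniqueness, observe that uniqueness persists since zeros of $f_x$ cannot be created, and \emph{then} use monotonicity to obtain \eqref{minimax}.
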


 Our last task is to show that this is impossible, and thus that $f$ should become convex before $t_1$.
 
 As  for $t$ sufficiently large $i_0(t)$ is the only local maximum of $f$
\begin{equation}\label{pti}
 \lim_{t\to t_1} \int_{i_{2p}(t)}^{r(t)} k \dd x= \lim_{t\to t_1} (1+f_x(i_{2p}(t),t))<1.
\end{equation}
Indeed for sufficiently large $t$, $f_x(i_{2p}(t),t)\le 0$ because there is no local maximum of $f$ in $[i_{2p}(t),r(t)]$ ; and $f_x(i_{2p}(t),t)$ is a local maximum of $f_x$ and thus decreases strictly in time.

\medskip

Note also that 
\begin{equation}
 \int_{i_{2p}(t)}^{r(t)}k \dd x >  \int_{i_{2p-1}(t)}^{r(t)}k \dd x=1+f_x(i_{2p-1}(t),t).
\end{equation}
The r.h.s.\ in the above equation is positive and
 strictly increasing in $t$, because  $f_x(i_{2p-1}(t),t)$ is a local maximum of $f_x$
 
 \begin{equation}\label{gra}
 \lim_{t\to t_1} \int_{i_{2p}(t)}^{r(t)} k \dd x= \alpha\in(0,1).
\end{equation}

For $x\in(i_{2p}(t),r(t))$,
$f_x\in [-1,-1+\alpha]$, and thus
$$f(i_{2p}(t),t)\ge (1-\alpha)(r(t)-i_{2p}(t)).$$
As $f(i_{2p}(t),t)\le f(i_{0}(t),t)$, equation \eqref{minimax} implies that 
$$\lim_{t \to t_1} r(t)-i_{2p}(t)=0.$$ 
Thus the mean value of $k$ on the interval $[i_{2p}(t),r(t)]$ explodes:
\begin{equation}
\lim_{t\to t_1} \bar k(t):= \lim_{t\to t_1} \frac{\int_{i_{2p}(t)}^{r(t)} k\dd x }{(r(t)-i_{2p}(t))}=\infty.
\end{equation}
By Jensen's inequality, we have
\begin{equation}
 \int_{i_{2p}(t)}^{r(t)} k\log k \dd x \ge (r(t)-i_{2p}(t))(\bar k\log \bar k)(t)\ge \alpha \log \bar k (t),
\end{equation}
and thus $$\lim_{t\to t_1}\int_{i_{2p}(t)}^{r(t)} k\log k \dd x=\infty.$$

However, with the same argument used to obtain Equation \eqref{arcogive}, the inequality \eqref{pti} implies that
$$\int_{i_{2p}(t)}^{r(t)} k\log k \dd x \quad \text{ is uniformly bounded,}$$
yielding a contradiction.
\qed

\section{Preliminaries}\label{prelim}

\subsection{Stochastic domination and monotonicity in $\gl$/boundary condition}\label{monoton}

Our dynamics has quite enjoyable monotonicity properties that can be proved by standard coupling argument using the so-called {\sl graphical construction}.
First introduce a natural order on $\gO_0^L$. Given for two elements $\xi$ and $\xi'$ we say that $\xi\ge \xi'$ if $\xi_x\ge \xi'_x$ for every $x\in[-L,L]$.
We say that a dynamic $\eta$ dominates stochastically $\eta'$ if one can couple the two dynamic on the same probability space and have with probability one

$$\eta(\cdot,t)\ge \eta'(\cdot,t), \quad  \forall t>0$$
\medskip

We give some examples of monotonicity that we may use in what follows:
\begin{itemize}
 \item The dynamic with a wall and $\gl=1$ dominates the one without wall.
 \item If $\gl<\gl'$ the dynamic with parameter $\gl$ dominates the one with parameter $\gl'$.
\end{itemize}

\medskip

For the construction of the  coupling, we refer to  \cite[Section 2.1.1]{cf:CMT} where these things are very well
explained.

\subsection{A general upper-bound}

Using monotonicity, we prove here that the solution of \eqref{ssatef} is a general upper-bound for the scaling limit.
This provides half of Theorem \ref{repcase}, and will be of use for the proof of Theorem \ref{mainres}.
Here and in what follows we say that an event $A_L$ (or more properly, a sequence of event) occurs \textsl{with high probability} (we may also write w.h.p.) 
if the probability of $A_L$ tends to one when $L$ tends to infinity. 

\begin{proposition}\label{coucc}
 For all choices of $\gl\in [0,\infty]$, the dynamic starting with initial condition satisfying
\begin{equation} \label{arfs}
\eta^L_0(x)=Lf_0(x/L)(1+o(1)) \text{ uniformly in $x$ when $L\to \infty$ },
\end{equation}
is such that for any given $\gep>0$ and $T>0$, w.h.p
\begin{equation}
\frac{1}{L}\eta^L(Lx,L^2t)\le \tilde f(x,t)+\gep, \forall x\in[-1,1],t\in [0,T].
\end{equation}

\end{proposition}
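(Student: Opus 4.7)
The plan is to dominate $\eta^{L,\gl}$ by a corner-flip dynamic without wall on a slightly enlarged segment, started from an initial profile sitting above $\eta^{L}_0$, and then invoke Theorem~\ref{frefer}.

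Fix $\epsilon>0$ and $T>0$, set $K=K(L)=\lceil \epsilon L/4\rceil$, and construct an initial path $\bar\eta^{L}_0\in\gO^0_{L+K}$ on $\{-L-K,\dots,L+K\}$ such that $\bar\eta^{L}_0(\pm(L+K))=0$ and $\bar\eta^{L}_0(x)\ge\eta^{L}_0(x)+2$ for every $x\in[-L,L]$. For $L$ large this is possible: shift $\eta^{L}_0$ upward by an even integer of order $\epsilon L/8$ on $[-L,L]$ and taper down with $\pm 1$ increments to $0$ at $\pm(L+K)$, which fits since $K$ exceeds the required drop. Denote by $\bar\eta^{L}(\cdot,t)$ the corner-flip dynamic without wall on $\gO^0_{L+K}$ started from $\bar\eta^{L}_0$. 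Theorem~\ref{frefer} applied to $\bar\eta^{L}$, combined with continuity of the heat flow in the spatial domain and the initial condition, yields that $L^{-1}\bar\eta^{L}(Lx,L^2t)$ converges in probability, uniformly on $[-1,1]\times[0,T]$, to some $\bar g$ satisfying $\bar g(x,t)\le\tilde f(x,t)+\epsilon/2$, provided the vertical shift is small enough.

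The technical heart is a graphical coupling ensuring $\bar\eta^{L}(x,t)\ge\eta^{L}(x,t)$ on $[-L,L]$ for all $t\ge 0$: I would attach to each site and each flip type (``up from a local minimum'' and ``down from a local maximum'') a common Poisson clock of rate equal to the supremum of the rates appearing in either dynamic, and accept each proposed transition independently with the appropriate state-dependent probability. The standard monotone coupling of two unconstrained corner-flip dynamics preserves the componentwise order, and in the present setting the added ingredients are the wall constraint on $\eta^{L}$ (which can only suppress downward moves, and hence is compatible with $\bar\eta^{L}\ge\eta^{L}$) and the $\gl$-dependent rate modifications at contact points. Combining the coupling with the scaling limit above gives, with high probability and uniformly on $[-1,1]\times[0,T]$,
\begin{equation*}
\frac{\eta^{L}(Lx,L^2t)}{L}\le\frac{\bar\eta^{L}(Lx,L^2t)}{L}\le\bar g(x,t)+\epsilon/2\le\tilde f(x,t)+\epsilon,
\end{equation*}
which is the proposition.

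The main obstacle is the verification of the monotone coupling in the repulsive range $\gl\in[0,1)$, where wall-at-$\gl$ has a strictly larger up-rate at contact than the no-wall dynamic, so that the naive graphical coupling could in principle produce states where an ``extra'' up-jump of $\eta^{L}_x$ from $0$ to $2$ occurs while $\bar\eta^{L}_x$ stays at $0$. Ruling this out is a deterministic local check: using the $\pm 1$-step constraint on lattice paths and the initial buffer $\bar\eta^{L}-\eta^{L}\ge 2$, one has to show that any configuration allowing such a crossing forces the gap $\bar\eta^{L}-\eta^{L}$ to have been negative at a neighboring site already before the contested event, so that the inequality $\bar\eta^{L}_t\ge\eta^{L}_t$ is preserved by induction on events.
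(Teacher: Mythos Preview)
Your coupling has the sign backwards. You write that ``the wall constraint on $\eta^{L}$ (which can only suppress downward moves, and hence is compatible with $\bar\eta^{L}\ge\eta^{L}$)'', but suppressing downward moves of $\eta^{L}$ makes $\eta^{L}$ \emph{larger}, which works \emph{against} the inequality $\bar\eta^{L}\ge\eta^{L}$, not for it. Concretely: suppose at some time $\bar\eta^{L}_x=\eta^{L}_x=1$ with $\bar\eta^{L}_{x\pm1}=\eta^{L}_{x\pm1}=0$. Then the no-wall dynamics sends $\bar\eta^{L}_x$ to $-1$ at rate $1$, while the wall dynamics forbids $\eta^{L}_x\to-1$, so after the flip $\bar\eta^{L}_x=-1<1=\eta^{L}_x$. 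This is independent of $\gl$ and of any rate-tuning at contact points; it is the hard wall itself that breaks the domination. Indeed the paper's Section~\ref{monoton} records exactly the opposite comparison: the wall dynamics with $\gl=1$ dominates the corner-flip dynamics, and this is what is used for the \emph{lower} bound in Section~\ref{repphase}, not for the upper bound you are after. Your buffer $\bar\eta^{L}_0-\eta^{L}_0\ge2$ (or even of order $\gep L$) does not rescue the argument, because nothing in your coupling prevents the gap from shrinking to zero, and once it does the scenario above produces a crossing with no prior violation.

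The paper's proof sidesteps this by never attempting to couple a no-wall dynamics above a wall dynamics. Instead it builds an auxiliary process $\hat\eta$ with the \emph{same} $\gl$-wall rates as $\eta$ (so monotonicity in the initial condition within the class gives $\hat\eta\ge\eta$ for free), but with the boundary strip of width $L^{3/4}$ frozen at maximal slope and the initial profile raised by $2L^{3/4}$ in the bulk. The point of the raise is then a separate probabilistic estimate (Lemma~\ref{mousis}): with high probability $\hat\eta$ never touches the wall on $[0,L^2T]$, so on that event its restriction to the interior coincides with a genuine corner-flip dynamics and Theorem~\ref{frefer} applies. If you want to repair your argument, this is the missing idea: the comparison dynamics must live in the same monotone class as $\eta$, and the reduction to the wall-free case must go through a ``stays away from the wall'' event, not through a direct cross-class coupling.
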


\begin{proof}
We construct an alternative dynamics $\hat \eta$ that constitutes an upper bound for $\eta$.
 The dynamics $\hat \eta$ has the same transition rates as $\eta$ except that transition $\eta\to \eta^{(x)}$ for $x$ at a distance smaller than $L^{3/4}$ from the boundary are rejected. We also modify the initial condition slightly so that
\begin{itemize}
\item $\hat \eta^L_0(x)=x+L$, for $x\in [-L,-L+2L^{3/4}]$, $\hat \eta^L_0(x)=L-x$, for $x\in [L-2L^{3/4},L]$,
\item $\hat \eta^L_0\ge \eta^L_0$ and $\hat \eta^L_0(x)\ge 2L^{3/4}$, $\forall x\in [-L+2L^{3/4},L-2L^{3/4}]$,
\item $\hat \eta^L_0$ satisfies \eqref{arfs}.
\end{itemize}

From its initial condition and constraint it follows that $\hat \eta$ is an upper bound for $\eta$.
Moreover, up to the first time of contact with the wall, in $[-L+L^{3/4},L-L^{3/4}]$, $\hat \eta$ 
coincides with a corner-flip dynamics, and as seem in Lemma \ref{mousis} in the appendix, with large probability $\hat \eta$ does 
not touch the wall before time $L^2T$. 
Thus we can apply Theorem \ref{dynsanw} to the corner-flip dynamics on the segment $[-L+L^{3/4},L-L^{3/4}]$ to get the result.

\end{proof}

\section{Proof of Theorem \ref{repcase}}\label{repphase}

\subsection{The case $\gl\in[0,1]$}

The Proposition \ref{coucc} already provides the upper-bound part of the convergence, so what remains to do is to prove that 
for every $\gep$ with high probability
$$ \forall x\in [-1,1], \forall t\in [0,T], \quad  \frac{1}{L}\eta^L(Lx,L^2t)\ge \tilde f(x,t)-\gep.$$
From Theorem \ref{frefer} the above inequality is satisfied, when $\eta$ is replaced by the dynamics without wall $\tilde \eta$. 
Moreover from Section \ref{monoton}, one can couple the two dynamics such that $\tilde \eta(t) \le \eta(t)$ for all $t$, and hence the result follows.

\subsection{The case $\gl\in(1,2)$}

When $\gl>1$, there is no simple stochastic comparison available and one must work harder to obtain the result.
The idea we use  (which is also present in \cite[Section 5]{cf:CMT}, but used to prove bounds on the mixing time), is that when $\gl\le 2$
the function $\Phi$ defined on the set of paths $\gO$ as
\begin{equation}
\Phi(\eta):=\sum_{x=-L}^L g(x)\eta(x),
\end{equation}
where
$$g(x):= \cos\left(\frac{x\pi}{2L}\right),$$ 
is close to be an eigenfunction of the generator of our Markov chain.

\medskip

In the remainder of the paper, for notional convenience we write $\eta(t)$ for $\eta(\cdot, t)$.

\begin{proposition}\label{downar}
When $L$ is large enough,
for any $\eta_0\in \gO_L$ for all $t\le L^{2+\gep}$

\begin{equation}\label{enproba}
|\bbE[\Phi(\eta(t))]-
\exp\left(-t\pi^2/(2L)^2\right)\Phi(\eta_0)|\le L^{7/4}.
\end{equation}
Furthermore, if $\eta_0$ satisfies \eqref{assum} for a given $f_0$, then
\begin{equation}\label{gos}
\lim_{L\to \infty}\max_{t\in[0,T]}\left|\frac{1}{L^2}\Phi(\eta(L^2t))-\exp\left(-t\pi^2/4\right)\int_{-1}^1 f_0(x)\cos(\pi x/2)\dd x\right|=0.
\end{equation}
\end{proposition}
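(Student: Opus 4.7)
The plan is to show that $\Phi$ is an "almost eigenfunction" for the generator $\mathcal L$ with eigenvalue close to $-\pi^2/(2L)^2$: identify the exact eigenfunction identity for the free (wall-less) dynamics, isolate the correction coming from the wall, and control this correction in expectation via Duhamel.

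Let $\mathcal L_{\mathrm{free}}$ denote the generator of the corner-flip dynamics of Section \ref{cornerflip} (on $\gO_L^0$, no wall). Since $\mathcal L_{\mathrm{free}}\eta_x = \eta_{x-1}+\eta_{x+1}-2\eta_x$ pointwise, summation by parts (using $g(\pm L)=\eta_{\pm L}=0$) gives exactly
\begin{equation*}
\mathcal L_{\mathrm{free}}\Phi(\eta)=\sum_x \eta_x\bigl(g(x-1)+g(x+1)-2g(x)\bigr)=-\mu_L\,\Phi(\eta),
\end{equation*}
with $\mu_L:=2(1-\cos(\pi/(2L)))=\pi^2/(2L)^2+O(L^{-4})$; the discrepancy between $e^{-\mu_L t}$ and $e^{-\pi^2 t/(2L)^2}$ contributes at most $O(L^\gep)$ in \eqref{enproba} and is negligible. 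Comparing the rates \eqref{jumpr} to the uniform rate $1$ of $\mathcal L_{\mathrm{free}}$, the two generators differ only at three types of wall-adjacent corners, yielding
\begin{equation*}
R(\eta):=(\mathcal L-\mathcal L_{\mathrm{free}})\Phi(\eta)=\tfrac{2(1-\gl)}{1+\gl}\Bigl[\sum_{\eta_x=0} g(x)+\sum_{\substack{\eta_x=2\\ \eta_{x\pm 1}=1}} g(x)\Bigr]+2\sum_{\substack{\eta_x=1\\ \eta_{x\pm 1}=0}} g(x),
\end{equation*}
where the three terms come respectively from contact sites (rate $\tfrac{2}{1+\gl}$ vs $1$), height-$2$ peaks adjacent to the wall (rate $\tfrac{2\gl}{1+\gl}$ vs $1$), and the wall-forbidden flips (rate $0$ vs $1$).

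Dynkin's formula then gives
\begin{equation*}
\bbE[\Phi(\eta(t))]=e^{-\mu_L t}\Phi(\eta_0)+\int_0^t e^{-\mu_L(t-s)}\bbE[R(\eta(s))]\dd s,
\end{equation*}
so it remains to bound the integral by $L^{7/4}$. The trivial pointwise bound $|R(\eta)|\leq CL$ is too weak; one must exploit that the near-wall statistics of $\eta(s)$ relax on a much shorter time scale than the global diffusive one. At all but the smallest times they should look like those of the equilibrium pinning measure $\pi^\gl_L$, under which for $\gl<2$ both the expected contact count and the expected "height-$2$ peak" count are $O(1)$ by the classical pinning-model analysis \cite{cf:GB}. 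The natural route is to combine monotonicity in $\gl$ (Section \ref{monoton}) to sandwich $\eta$ between the $\gl=1$ and $\gl=2$ dynamics with an explicit control of the relaxation of the wall-layer on order-$1$ time per site, so as to show $\bbE[|R(\eta(s))|]$ is bounded by a quantity whose integral up to $L^{2+\gep}$ is $O(L^{7/4})$. This is the main obstacle: equilibrium estimates alone do not suffice, and one must control local near-wall statistics uniformly in the (possibly far-from-equilibrium) initial condition.

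The second statement \eqref{gos} is then a direct consequence of \eqref{enproba}: $L^{-2}\Phi(\eta_0)=L^{-1}\sum_x g(x)f_0(x/L)(1+o(1))$ is a Riemann sum converging to $\int_{-1}^1\cos(\pi u/2) f_0(u)\dd u$, while $\mu_L L^2\to\pi^2/4$ and the $L^{7/4}$ error in \eqref{enproba} is $o(L^2)$, giving uniform convergence on any $[0,T]$.
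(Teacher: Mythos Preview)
Your setup is correct up to Duhamel, but there are two genuine gaps.

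\textbf{Controlling the wall correction.} You correctly isolate the obstacle, but your proposed route (sandwich between $\gl=1$ and $\gl=2$ by monotonicity and invoke equilibrium contact estimates) does not close. At $\gl=2$ the equilibrium contact count is of order $\sqrt L$, not $O(1)$, so the sandwich gives nothing useful on that side; and equilibrium estimates alone cannot produce the time decay you need to integrate over $[0,L^{2+\gep}]$. The paper's mechanism is different and more local: for each site $x$ at time $s$, compare $\eta$ by monotonicity to a pinning dynamics on a window of half-width $m=\min(d_L(x),s^{1/2-\delta})$ around $x$. By the mixing-time bound of \cite{cf:CMT} this small system is mixed at time $s$, and the equilibrium probability that its midpoint is pinned is $O(m^{-3/2})$ by the standard partition-function asymptotics for $\gl<2$. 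Summing in $x$ with the weight $g(x)$ gives $\bbE[|R(\eta(s))|]\le C L\, s^{-3/4+3\delta/2}$, whose integral over $[0,L^{2+\gep}]$ is $O(L^{3/2+\gep'})\le L^{7/4}$. The key missing idea is this \emph{local} mixing comparison at scale $s^{1/2-\delta}$, which converts the out-of-equilibrium problem into an equilibrium one on a short window.

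\textbf{The second part is not a direct consequence.} Equation \eqref{enproba} controls only the \emph{expectation} $\bbE[\Phi(\eta(t))]$, whereas \eqref{gos} is a statement about the random variable $\Phi(\eta(L^2 t))$ itself, uniformly in $t\in[0,T]$, in probability. Your Riemann-sum argument handles $\Phi(\eta_0)$ but not the passage from mean to pathwise control. The paper closes this as follows: the upper bound on $\Phi(\eta(L^2 t))$ holds w.h.p.\ by Proposition \ref{coucc} (comparison with the wall-free heat flow). For the lower bound, define the stopping time $\tau=\inf\{t:\Phi(\eta(L^2 t))\le (1-\delta)e^{-\pi^2 t/4}\Phi(\eta_0)\}$; apply the Markov property at $\tau$ together with \eqref{enproba} to bound $\bbE[\Phi(\eta(L^2 T))]$ from above on $\{\tau\le T\}$, and combine with the w.h.p.\ upper bound on $\{\tau>T\}$. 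Comparing with the lower bound on $\bbE[\Phi(\eta(L^2 T))]$ from \eqref{enproba} forces $\bbP[\tau\le T]$ to be small. Without this stopping-time argument (or some substitute, e.g.\ a variance bound on $\Phi(\eta(t))$ plus a chaining in $t$), \eqref{gos} does not follow from \eqref{enproba}.
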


A way to reformulate \eqref{gos} is that the Fourier coefficient of the rescaled interface converges to the one of $\tilde f$.
We define $\bar \eta$ to be the rescaled version (defined on $[-1,1]$)
\begin{equation}\label{barete}
 \bar \eta(x,t)=\frac{1}{L}\eta(Lx,L^2t).
\end{equation}
Then \eqref{gos} can be read as
\begin{equation}\label{fourcoef}
\lim_{L\to\infty} \int_{-1}^1 \bar \eta(x,t)\cos(\pi x/2)\dd x=\int_{-1}^1 \tilde f(x,t) \cos(\pi x/2)\dd x,
\end{equation}
where convergence holds uniformly in $[0,T]$, in probability.
As Proposition \ref{coucc} already provides one bound, this estimate turns out to be sufficient to prove convergence of $\eta$.

\begin{proof}[Proof of Theorem \ref{repcase} for $\gl \in (1,2)$ from Proposition \ref{downar}]

Using the fact that $|y|= 2y_+ -y$  (where $y_+=\max(y,0)$) we have
\begin{multline}
\int_{-1}^1|\bar \eta(x,t)-f(x,t)|\cos(\pi x/2)\dd x=\\2 \bbE\left[\int_{-1}^1(\bar \eta(x,t)-f(x,t))_+\cos(\pi x/2)\dd x\right]
+ \int_{-1}^1(f(x,t)- \bar \eta(x,t))\cos(\pi x/2)\dd x.
\end{multline}

Using Proposition \ref{coucc}, we know that the first term tends to zero uniformly on $[0,T]$ in probability.
This is also the case of for the second term thanks to \eqref{gos} or \eqref{fourcoef}.

\medskip

Hence for any $\gep>0$, w.h.p.\ for all $t\le T$
\begin{equation}\label{crocodileum}
\int_{-1}^1|\bar \eta(x,t)-f(x,t)|\cos(\pi x/2)\dd x\le \gep.
\end{equation}
To conclude, we use the fact $|\bar \eta(x,t)-f(x,t)|$ is a $2$-Lipshitz function to show that \eqref{crocodileum} implies 
uniform convergence.

\medskip

Note that 
$$\forall x\in [-1,-1-\delta]\cup[1-\delta,1],\quad  |\bar \eta(x,t)-f(x,t)|\le \delta$$
because both $f$ and $\bar \eta$ are in $[0,\delta]$.
To control $|\bar \eta(x,t)-f(x,t)|$ on $[-1+\delta]\cup[1-\delta]$ we notice that \eqref{crocodileum} implies that
\begin{equation}
\int_{-1+\delta}^{1-\delta} |\bar \eta(x,t)-f(x,t)|\dd x\le \frac{\gep}{\sin(\pi\delta/2)}\le \gep/\delta.
\end{equation}
As $\bar \eta-f$ is a $2$-Lipshitz function in $x$, this implies that whenever \eqref{crocodileum} holds,
\begin{equation}
 |\bar \eta(x,t)-f(x,t)|\le \sqrt{\gep/\delta},\quad  \forall x \in [-1+\delta,1-\delta].
\end{equation}
Taking choosing $\gep=\delta^3$ we conclude that w.h.p.\ for all $t\in[0,T]$
$$ |\bar \eta(x,t)-f(x,t)|\le \delta.$$

\end{proof}

\begin{proof}[Proof of Proposition \ref{downar}]
Using \cite[Lemma 2.3, (2.37)]{cf:CMT} (be careful that the definition for the discrete Laplacian differs by a factor $2$ and 
the same apply to our transition rates), and the linearity of $\mathcal L$ of  we have

\begin{multline}\label{fifi}
\partial_t \bbE\left[\Phi(\eta(t))\right]= \bbE\left[(\mathcal L\Phi)(\eta(t))\right]=\\
\bbE\left[\sum_{x=-L}^L g(x)(\gD \eta)(x,t)\right]
+2\bbE\left[\sum_{x=-L}^L  g(x) \ind_{\eta(x\pm 1,t)=0}\right]\\-\frac{2(\gl-1)}{\gl+1}\bbE\left[\sum_{x=-L}^L  g(x) \ind_{\eta(x\pm 1,t)=1}\right].
\end{multline}
Doing summation by part and using the fact that $\gD g(x)=-\kappa_L g(x)$ where $$\kappa_L:=2\left(1-\cos\left(\pi/2L\right)\right))$$
the first term is equal to $-\kappa_L\bbE[\Phi(\eta(t)]$. We can control the value of the two other terms with the following estimates:
\bigskip

\begin{lemma}\label{contact}
For any $\delta>0$,
there exists a constant $C_1(\gl,\delta)$ such that
for any choice of initial configuration $\eta_0\in \gO^0_L$, one has for every $t\ge 0$, and every $x$
\begin{equation}\label{contactor}\begin{split}
\bbP[\eta(x\pm 1,t)=1]&\le C_1 \min(d_L(x), t^{1/2-\delta})^{-3/2} \text{ if $x+L$ is even},\\
\bbP[\eta(x\pm 1,t)=0]&\le C_1 \min(d_L(x), t^{1/2-\delta})^{-3/2} \text{ if $x+L$ is odd},
\end{split}\end{equation}
where $d_L(x)=\min(|x+L|,|x-L|)$ denotes the distance between $x$ and the boundary of $[-L,L]$.
\end{lemma}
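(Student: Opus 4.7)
\textbf{Proof plan for Lemma \ref{contact}.} The bound $\min(d_L(x), t^{1/2-\delta})^{-3/2}$ exhibits two regimes---a boundary regime (small $d_L(x)$) and a short-time regime (small $t$)---and in both the exponent $3/2$ is the characteristic density decay for a fixed-height event along a random walk bridge, coming from the ballot problem / reflection principle. My plan is to obtain both regimes through a single ``localize, then equilibrate'' argument.

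\emph{Step 1 (Localization via the graphical construction).} Using the graphical coupling for the heat-bath dynamics, couple $\eta$ with a local dynamics $\eta^{\text{loc}}$ that only uses the Poisson clocks in a window $W_\ell := [x-\ell, x+\ell]$ of size $\ell := \min(d_L(x), t^{1/2-\delta})(1+o(1))$ (with some stochastically dominated boundary condition at $x\pm\ell$). Since the transition rates are uniformly bounded, a standard large-deviation estimate for the Poisson clock process yields a finite-speed-of-propagation bound: the two dynamics agree at site $x\pm 1$ up to time $t$ except on an event of super-polynomially small probability, which is negligible compared to any polynomial bound.

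\emph{Step 2 (Equilibrium estimate on the localized window).} Inside $W_\ell$, I would replace $\eta^{\text{loc}}(\cdot, t)$ by its local invariant measure $\pi^\gl|_{W_\ell}$. Since the pinning dynamics on a segment of length $\ell$ has relaxation time of order $\ell^2(\log \ell)^{O(1)}$ when $\gl<2$ (see \cite{cf:CMT}), and our choice of $\ell$ gives $\ell^2 \le t^{1-\delta'}$, the distribution at time $t$ is within $L^{-c}$-total-variation of $\pi^\gl|_{W_\ell}$. Under $\pi^\gl|_{W_\ell}$ the polymer is, by the Gibbs property and the assumption $\gl<2$, contiguous (with bounded Radon--Nikodym derivative on the $\sigma$-algebra generated by $\eta(x\pm 1)$) to the uniform random walk bridge of length $2\ell$. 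The probability that such a bridge passes through height $1$ (resp.\ $0$) at a prescribed site with the correct parity is of order $\ell^{-3/2}$ by the classical reflection-principle estimate. Combining this with the Radon--Nikodym comparison gives the desired bound on $\bbP[\eta^{\text{loc}}(x\pm 1, t)=1]$, and Step 1 transfers it to $\eta$.

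\emph{Main obstacle.} The delicate point is Step 2. One needs a quantitative relaxation-to-equilibrium estimate on $W_\ell$ that is (i) uniform in the random boundary condition at $x\pm\ell$ and (ii) sharp enough to lose only sub-polynomial factors: the small parameter $\delta$ in the statement is precisely the buffer absorbing these polylog losses. A second subtlety specific to the intermediate regime $\gl\in(1,2)$ is that no simple monotonicity places $\eta$ directly below the wall-less corner-flip dynamics, so the comparison with the free bridge has to be done at the level of the equilibrium measure $\pi^\gl|_{W_\ell}$ rather than through a pathwise coupling---and this is the step where the hypothesis $\gl<2$ (depinned phase, number of contacts tight) is genuinely used, via known bounds on the partition function $Z^\gl$ on a window.
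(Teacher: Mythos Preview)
Your Step~2 is essentially the paper's argument: localize to a window of half-width $\ell\approx\min(d_L(x),t^{1/2-\delta})$, use the mixing-time bound from \cite{cf:CMT} (since $\ell^2\le t^{1-\delta'}\ll t$) to replace the law at time $t$ by the equilibrium $\pi^\gl$ on the window, and then read off the $\ell^{-3/2}$ decay from the partition-function asymptotics $Z^\gl_{2l}\sim c_\gl l^{-3/2}$ in the depinned phase. Your Radon--Nikodym comparison with the free bridge is just a rephrasing of this last step.

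The gap is in Step~1. A finite-speed-of-propagation argument does \emph{not} give what you claim at these scales. In corner-flip dynamics, influence travels ballistically (speed $O(1)$ via the graphical construction), so by time $t$ the boundary data at $x\pm\ell$ with $\ell=t^{1/2-\delta}\ll t$ has long since reached the center; the full and localized dynamics will \emph{not} agree at $x\pm1$ with high probability. Even the more refined diffusive bound for a second-class particle fails, since $\ell<t^{1/2}$. You do mention a ``stochastically dominated boundary condition'', which is the right instinct, but it cannot be combined with a propagation argument to conclude that the dynamics \emph{agree}; stochastic domination only gives a one-sided inequality.

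The paper replaces your Step~1 by two applications of monotonicity. First, since $\{\eta(x\pm1,t)=1\}$ (resp.\ $\{\eta(x\pm1,t)=0\}$) is, by parity and the wall constraint, a decreasing event, one may take the worst initial condition $\eta_0=\eta^{\min}$. Second, with this initial condition one couples the full dynamics to the local dynamics $\eta'$ on the window with \emph{zero} boundary condition at $x\pm\ell$, obtaining $\eta'\le\eta$ on the window for all times; again by parity this yields $\bbP[\eta(x\pm1,t)=1]\le\bbP[\eta'(x\pm1,t)=1]$ exactly, with no error term. This also dissolves what you flag as the ``main obstacle'': the boundary condition for $\eta'$ is simply zero, so the mixing-time result of \cite{cf:CMT} applies directly and no uniformity over random boundary data is needed.
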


We postpone the proof to the end of the Section.
Note that Lemma \ref{contact} implies that for every $t\le L^{\frac{2}{1-2\delta}}$
\begin{equation}\label{bibero}\begin{split}
\bbE\left[\sum_{x=-L}^L  g(x) \ind_{\eta(x\pm 1,t)=1}\right]\le C_2 L t^{-3/4+3\delta/2},\\
\bbE\left[\sum_{x=-L}^L  g(x) \ind_{\eta(x\pm 1,t)=0}\right]\le C_2 L t^{-3/4+3\delta/2}. 
\end{split}\end{equation} 
and thus
 \begin{equation}
|\partial_t \bbE\left[\Phi(\eta(t))\right]+\kappa_L\bbE[\Phi(\eta(t)]|\le C_3 L t^{-3/4+3\delta/2}.
 \end{equation}
Integrating the above inequality, we obtain for all $t\le L^{\frac{2}{1-2\delta}}$

\begin{multline}
 \bbE\left[\Phi(\eta(t))\right]\ge \exp(-t\kappa_L)\Phi(\eta_0)-C_3\int_0^t e^{\kappa_L(s-t)} L t^{-3/4+3\delta/2}\dd s\\
 \ge \exp(-t\kappa_L)\Phi(\eta_0)-C_3L t^{1/4+3\delta/2}\ge  \exp(-t\kappa_L)\Phi(\eta_0)-L^{7/4},
\end{multline}
if $\delta$ is small enough. A lower-bound can be obtained in the same manner accordingly.
And hence \eqref{enproba} holds.
\medskip

Let us now use the convergence of $\bbE\left[\Phi(\eta(t))\right]$ to prove \eqref{gos}. As we already have the upper-bound
which is just a consequence of 
Proposition \ref{coucc}, we only need to prove that for any $\delta$ w.h.p.\
\begin{equation}\label{gosp}
\forall t\in[0,T],\quad \frac{1}{L^2}\Phi(\eta(L^2t))-\exp\left(-t\pi^2/4\right)\int_{-1}^1 f_0(x)\cos(\pi x/2)\dd x\ge -\delta.
\end{equation}

\medskip
Let us fix $T$ and $\delta>0$ arbitrary  $\gep=\gd^2/4$.
Fom Proposition \ref{coucc}, we have for $L$ large enough, for any event $A$
\begin{equation}\label{lesa}
\bbE\left[\Phi(\eta(L^2T))\ind_A \right]\le (\bbP[A]+\gep)\exp(-\pi^2 T/4)\Phi(\eta_0).
\end{equation}
Let us define
\begin{equation}
\tau:= \min \{ t\  | \ \Phi(\eta(L^2t))\le (1-\delta)\exp(-\pi^2t/4)\Phi(\eta_0)\}.
\end{equation}
As $\tau$ is a stopping time, we can apply the Markov property and \eqref{enproba} for the initial condition $\eta_{\tau}$ to obtain that 
for every $t\le T$.

\begin{multline}
 \bbE\left[ \Phi(\eta(L^2(\tau+t)))\ | \ (\tau,\eta_{\tau}) \right]\le \exp(-\pi^2 t/4)\Phi(\eta_{\tau})+L^{7/4}\\
 \le (1-\delta)\exp(-\pi^2 (\tau+t)/4)\Phi(\eta_0)+L^{7/4}.
\end{multline}

This inequality remains also valid if $t$ is a function of $\tau$. Taking $t=T-\tau$ we have, on the event 
$\{\tau\le T\}$,
\begin{multline}
 \bbE\left[ \Phi(\eta(L^2T)) | \ \tau \right]\le (1-\delta)\exp(-\pi^2 T/4)\Phi(\eta_0)+L^{7/4}\\
 \le 
 (1-\delta/2)\exp(-\pi^2 T/4)\Phi(\eta_0).
\end{multline}
Combining it with \eqref{lesa} for the event $\{\tau>T\}$ we have 
\begin{multline}
  \bbE\left[ \Phi(\eta(L^2T))\right]\le (1-\delta/2)\exp(-\pi^2 T/4)\Phi(\eta_0)\bbP[\tau\le T]
  + \bbE\left[ \Phi(\eta(L^2T))\ind_{\tau>T}\right]\\
  \le  (1-\delta/2\bbP[\tau\le T]+\gep)\exp(-\pi^2 T/4)\Phi(\eta_0).
\end{multline}
On the other hand \eqref{enproba} implies that
\begin{equation}
   \bbE\left[ \Phi(\eta(L^2T))\right]\ge  (1-\gep)\exp(-\pi^2 T/4)\Phi(\eta_0),
\end{equation}
and hence as $\gep=\gd^2/4$,
\begin{equation}
 \bbP[\tau\le T]\le 4\gep/\delta\le \delta.
\end{equation}
 Hence with probability larger than $1-\delta$
\begin{equation}
 \Phi(\eta(L^2t))\ge \exp(-\pi^2 t/4)\Phi(\eta_0)(1-\delta) \quad \forall t\in [0,T].
\end{equation}
Dividing by $L^2$ leads to \eqref{gosp}.

\end{proof}

\begin{proof}[Proof of Lemma \ref{contact}]
We can assume $t\ge C^{4/(3+2\delta}_1)$, as if not the result for holds trivially.
By monotonicity, it is sufficient to prove the result with the smallest possible initial condition, where $\eta_0=\eta^{\min}$ (recall \eqref{etamin})
\medskip

Let us treat the case $x+L$ even only as the second line of \eqref{contactor} can be proved in a similar manner.
 Consider first $x$ which is at a distance larger than $t^{1/2-\delta}$ from the boundary. To avoid complicated notation we assume that $t^{1/2-\delta}$ is an even integer (or else, we replace it by twice the integer part of its half). By mononicity again, the dynamics $(\eta(\cdot,s))_{s\ge 0}$ can be coupled with a dynamics 
 $(\eta'(\cdot,s)_{s\ge 0}$ with a wall and pinning force $\gl$ on state space
 \begin{multline}
  \gO_{x,t}:=\{ \eta\in\bbZ^{2t^{1/2-\delta}+1}\ | \;
\eta_{x\pm t^{1/2-\delta}}=0\,; \\
 \forall y\in\{x-t^{1/2-\delta},\dots,x+t^{1/2-\delta}-1\}, |\eta_{x+1}-\eta_x|= 1, \eta_x\ge 0 \}\, .
 \end{multline}
We can perform the coupling in such a manner that 
$$ \forall y\in\{x-t^{1/2-\delta},\dots,x+t^{1/2-\delta}\},\  \forall s\ge 0, \eta(y,t)\ge \eta' (y,t) $$
\medskip

Hence it is sufficient to prove the result for $\eta'$.
From \cite[Theorem 3.1]{cf:CMT}, we know that the mixing time $\Tm(1/2e)$ of such a dynamics is smaller than $t^{1-\delta}$.
Hence at time $s=t$, the total variation distance between the distribution of $\eta'(\cdot,t)$ and the equilibrium distribution $\pi':=\pi'_{x,t}$ satisfies

\begin{equation}\label{TVbound}
\| \bbP[\eta'(\cdot,t) \in \cdot] -\pi'\|_{TV}\le\exp(\lfloor t/ \Tm(1/2e)\rfloor )\le \exp(-t^{\delta}/2).
\end{equation}
At equilibrium, the probability that the midpoint of a polymer of length  $2t^{1/2-\delta}$ is pinned satisfies 
\begin{equation}\label{crogs}
\pi'( \eta'(x\pm 1)=1)=\frac{1+\gl}{\gl} \pi'(\bar \eta(x)=0)
=\frac{1+\gl}{\gl}\frac{(Z^{\gl}_{t^{1/2-\delta}})^2}{(Z^{\gl}_{2t^{1/2-\delta}})^2}\le C_\gl t^{-3/2+3\delta}.
\end{equation}
In the last inequality, we used the asymptotic equivalence 
$$Z^{\gl}_{2l}\approx c_\gl l^{-3/2}$$ which holds for an explicit constant $c_\gl$, see \cite[Theorem 2.2 (2)]{cf:GB}.
Hence from \eqref{TVbound} one has
\begin{equation}
\bbP[\eta'(x,t)=0]\le  C_\gl t^{-3/2+3\delta}+\exp(-t^{\delta})\le C_1  t^{-3/2+3\delta}.
\end{equation}
if $C_1$ is chosen in an appropriate manner.

\medskip

When $x$ is at a distance smaller than $t^{1/2-\delta}$ from the boundary (say from $-L$), we use the same idea and compare the dynamics to one on the restricted path space  
\begin{multline}
  \gO_{x}:=\{ \eta\in\bbZ^{2(L+x)+1}\ | \;
\eta_{-L}=0, \eta_{2x+L}=0\,; \\
 \forall y\in\{-L,\dots,2x+L\}, |\eta_{x+1}-\eta_x|= 1, \eta_x\ge 0 \}\, .
 \end{multline}
From \cite[Theorem 3.1]{cf:CMT}, this restricted $\eta'$ on an interval of length $2(L+x)$, for which the mixing time is smaller than $t^{1-\delta}$
and thus similarly
\begin{equation}
\bbP[\eta'(x,t)=0]\le \pi'( \eta'(x\pm 1)=1) +\exp(-t^\delta)\le  2C(L+x)^{-3/2}.
\end{equation}
where $\pi'$ denote the equilibrium measure for the restricted dynamics. 
\end{proof}

\section{Proof of Theorem \ref{mainres}} \label{mainressec}

Let us describe a bit the strategy behind the proof of Theorem \ref{mainres}.
The first step of the proof,   Lemma \ref{drifta}, is to show that the area below $\eta$ decays at least with a constant rate equal to $-2$ 
on the rescaled picture. This is performed by computing the expected drift of the area, and using a
martingale technique to bound the possible fluctuation.

\medskip

Using this bound on the area, we show that the problem can be reduced to proving that the solution of \eqref{sstef} is asymptotically 
a lower bound for the evolution 
of the polymer (see Proposition \ref{lb} and below).
The most difficult part then, when proving Proposition \ref{lb} 
is to control the motion of the boundary of the unpinned zone for $\eta(t)$.

%
%
%
%
%

\subsection{An upper-bound for the decay of the area below the graph of $\bar \eta$}

Consider $$a(\bar \eta(t))=\int_{-1}^1 \bar\eta(x,t)\dd x$$ 
the area below the rescaled polymer (recall \eqref{barete})
(recalle the notation $\eta(t)=\eta(\cdot,t)$). 
Similarly let $a(f(t))=\int_{-1}^1 f(x,t)\dd x$  denote the area below the graph of $f(\cdot,t)$.
We have seen (recall \eqref{area}) that 
\begin{equation}\label{area2}
a(f(t))=a(f(0))-2t \text{ when } t\le T^*.
\end{equation}
We want to prove a similar statement for $a(\bar \eta(t))$.
\medskip

To state our result, it is easier to consider the area below the non-rescaled curve:
\begin{equation}
 A(\eta(t)):=\int_{-L}^L \max(\eta(x,t),1)\dd x.
\end{equation}
Note that $A$ is not exactly the area because of the $\max$ present in the integral, but this detail is of no importance once 
rescaling is performed. The quantity $\max(\eta(x,t),1)$ is present instead of $\eta(x,t)$ in order to have a nice expression for the expected drift of $A$
(see below).

It can be checked by the reader (see also Section 5.4 in \cite{cf:CMT}) that the expected drift of $A(\eta(t))$ 
is equal to (recall \eqref{generator}
\begin{multline}\label{drift}
 (\mathcal L A)(\eta)= 2 
|\{ x\in  \{-L\dots L\} \ | \ \eta_x=\eta_{x\pm 1}-1  \text{ and } \eta_{x\pm 1} \ne 0\}|\\
-2|\{ x\in  \{-L\dots L\} \ | \ \eta_x=\eta_{x\pm 1}+1 \text{ and }
 \eta_x\ne 1 \}|.
\end{multline}
This is because the transition $\eta\to \eta^{(x)}$ increases/decreases $A$ by $\pm2$, 
except if it adds one contact with the wall, in which case $A$ is 
decreased by $-1$ (but this happens with a rate twice as big).

\medskip

The right-hand side of \eqref{drift} is equal to minus $2$ times the number of excursions of length $4$ or more away from the wall, and thus 
\begin{equation}\label{groove}
  (\mathcal L A)(\eta)\ge -2, \forall \eta \in \Omega_L\setminus \{ \eta_{\min}\}.
\end{equation}

\medskip

\begin{lemma}\label{drifta}
One has w.h.p., 
\begin{equation}\label{fdf}
 A(\eta(t))\le A(\eta(0))+\int_0^{t} \mathcal L A(\eta(s))\dd s+ L^{7/4}, \quad  \forall t\in[0,L^2],
\end{equation}
so that in particular for all $t\in[0,L^2]$
\begin{equation}\label{asdf}
 A(\eta(t))\le \max(A(\eta(t))-2t+L^{7/4}, 2L).
\end{equation}
As a consequence, w.h.p.\ 
\begin{equation}\label{croci}
 \mathcal T\le A(\eta(0))/2+L^{7/4},
\end{equation}
and w.h.p.\ uniformly for all time,
\begin{equation}\label{trucmachin}
 a(\bar \eta(t))\le (a(f_0)-2t)_+ + 2L^{-1/4}.
\end{equation}

\end{lemma}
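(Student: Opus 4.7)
The proof I would write is a standard Dynkin-martingale argument, combined with a direct computation of the sign and magnitude of the drift $\mathcal{L}A$. Define the mean-zero martingale
$$ M_t := A(\eta(t)) - A(\eta(0)) - \int_0^t (\mathcal{L}A)(\eta(s))\, ds. $$
The estimate \eqref{fdf} is exactly the statement $\sup_{t \le L^2} M_t \le L^{7/4}$ w.h.p. For this I would estimate the predictable bracket: each admissible flip changes $A$ by at most $2$, at most $2L$ sites can fire at any configuration, and each has rate bounded by $2$, so $\langle M\rangle_t \le C L t$ for a universal $C$. Doob's $L^2$ maximal inequality then gives
$$ \bbP\!\left[\sup_{t \le L^2} |M_t| \ge L^{7/4}\right] \le \frac{4\,\bbE[\langle M\rangle_{L^2}]}{L^{7/2}} \le \frac{C'}{\sqrt{L}} \to 0, $$
which is \eqref{fdf}.

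Next I would establish the drift identity $\mathcal{L}A(\eta) = -2 \cdot \#\{\text{excursions of $\eta$ above the wall of length}\ge 4\}$. Writing $\eta$ as a concatenation of its excursions from $0$ back to $0$ and reading off the rates \eqref{jumpr} at $\gl = \infty$: a length-$2$ excursion $0,1,0$ contributes $0$ (both $0\to 2$ and $1\to -1$ are forbidden). Within a longer excursion, peaks at height $\ge 3$ flip at rate $1$ changing $A$ by $-2$, peaks at height $2$ flip at rate $2$ changing $A$ by $-1$ (this is where the $\max(\eta,1)$ in the definition of $A$ matters), and interior valleys at height $\ge 1$ flip at rate $1$ changing $A$ by $+2$. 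Using the Dyck-path identity $P_{\mathrm{exc}} - V_{\mathrm{exc}} = 1$ in each excursion, these contributions collapse to $-2$ per long excursion. Since $\eta^{\min}$ is the unique configuration composed only of length-$2$ excursions, we obtain $\mathcal{L}A(\eta)\le -2$ for every $\eta\ne\eta^{\min}$, and $\mathcal{L}A(\eta^{\min})=0$.

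With this in hand \eqref{asdf} is a case analysis on the event of \eqref{fdf}: if $\eta$ has not reached $\eta^{\min}$ by time $t$, then $\int_0^t \mathcal{L}A(\eta(s))ds \le -2t$ yields $A(\eta(t)) \le A(\eta(0)) - 2t + L^{7/4}$; otherwise $A(\eta(t)) = A(\eta^{\min}) = 2L$. The bound \eqref{croci} is immediate: if $\mathcal T > A(\eta(0))/2 + L^{7/4}$, evaluating \eqref{asdf} at $t := A(\eta(0))/2 + L^{7/4}$ would force $A(\eta(t)) \le -L^{7/4} < 0$. For \eqref{trucmachin} I would rescale: changing variables $y = x/L$ gives
$$ L^{-2} A(\eta(L^2 t)) = \int_{-1}^1 \max\!\bigl(\bar\eta(y,t),\, L^{-1}\bigr)\, dy \ge a(\bar\eta(t)), $$
and the initial-condition hypothesis yields $L^{-2} A(\eta(0)) = a(f_0) + o(1)$; dividing \eqref{asdf} by $L^2$ and absorbing the $o(1)$ and the $2/L$ coming from the $\max$ into the single error term $2L^{-1/4}$ (valid for $L$ large) yields the stated bound.

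The only delicate ingredient is really the drift identity in the second paragraph: one must carefully verify that contact-creating flips $\eta_x = 2 \to 0$ contribute exactly the same $-2$ to $\mathcal{L}A$ as the bulk flips $\eta_x = k \to k-2$ for $k \ge 3$, because the rate doubles from $1$ to $2$ at $\gl=\infty$ precisely as the jump in $A$ halves from $2$ to $1$ (due to the $\max(\cdot,1)$ in the definition of $A$). Once this telescoping is verified, the rest is routine stochastic calculus and rescaling.
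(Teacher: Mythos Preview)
Your proof is correct and follows essentially the same approach as the paper's: the Dynkin martingale $M_t$, Doob's maximal inequality with the bracket bound $\langle M\rangle_{L^2} = O(L^3)$, the drift identity $\mathcal L A = -2\cdot\#\{\text{excursions of length}\ge 4\}$ (which the paper states with a reference to \cite{cf:CMT} while you spell it out), and then the case split and rescaling for \eqref{asdf}--\eqref{trucmachin}. Your explicit verification that the rate-$2$, jump-$1$ contact-creating flips contribute exactly $-2$ to $\mathcal L A$ is precisely the point of the $\max(\cdot,1)$ in the definition of $A$, and matches the paper's intent.
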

\begin{proof}

It is a standard property of Markov chains that 
$$M_t:=A(\eta(t))- A(\eta(0))-\int_0^t (\mathcal L A (\eta(s)))\dd s$$ is a martingale, for the filtration associated with the process $(\eta(t),\ t\ge 0)$.
To prove \eqref{fdf} we have to show that $M_t$ cannot be too large. We use Doob's maximal inequality
\begin{equation} \label{doob}
\bbP[\max_{t\in [0,L^2]} M_t\ge L^{7/4}]\le L^{-7/2} \bbE[M^2_{L^2}].
\end{equation}
As $M_0=0$ the expected value of $M^2_{L^2}$ is the one of the martingale bracket $\langle M^2\rangle_{L^2}$, for which we have an explicit expression

\begin{equation}
 \langle M^2\rangle_{t}=\int_0^t F(\eta(s))\dd s
\end{equation}
where
\begin{multline}\label{vardrift}
 F(\eta(s)):= 4 |\{ x\in  \{-L\dots L\} \ | \ \eta_{x+1}=\eta_{x-1}\notin \{0,1\}\}|\\
 +  2 |\{ x\in  \{-L\dots L\} \ | \ \eta_x=2,\ \eta_{x\pm 1}= 1\}| \le 8L.
\end{multline}
Hence from \eqref{doob}
\begin{equation}
 \bbP[\max_{t\in [0,L^2]} M_t\ge L^{7/4}]\le 8 L^{-1/2}.
\end{equation}

For the second statement, we note that
either the chain has already reached $\eta_{\min}$ at time $t$ and thus $A(t)=2L$ 
or it has not and from \eqref{groove}
$$\int^t_0 \mathcal L A(\eta(s))\dd s \ge -2t,$$ so that \eqref{asdf} is a consequence \eqref{fdf}.

\medskip
Finally, to obtain \eqref{croci} we notice that if $\mathcal T\ge A(\eta(0))/2+L^{7/4}$, then
\begin{equation}
 \int_0^{A(\eta(0))+L^{7/4}} \mathcal L A(\eta(s))\dd s\le -2A(\eta(0))-2L^{7/4}.
\end{equation}
Thus this cannot occur with non-vanishing probability, as it would bring a contradiction to \eqref{fdf}.
Equation \eqref{trucmachin} is obtained by rescaling time and space in \eqref{asdf} (we a have to replace $A$ by the area below the curve, but the difference between the 
two is at most $L$).
\end{proof}

\subsection{Reducing the problem to the proof of an asymptotic lower-bound for $\eta(t)$}

A consequence of equation \eqref{trucmachin} is that in order to prove Theorem \ref{mainres}, we only need a lower-bound result.
Indeed the upper-bound on the area plus the constraint that $\bar \eta(\cdot, t)$ is a Lipshitz function are sufficient to deduce the upper-bound from the lower-bound.
For this reason, in the remainder of the paper we focus on proving

\begin{proposition}\label{lb}
Let $\eta^{L,\infty}$ be the dynamic with wall and $\gl=\infty$, 
and starting from a sequence of initial condition
$\eta^L_0$ satisfying 
\begin{equation} \label{caref}
\eta^L_0(x)= Lf_0(x/L)(1+o(1)) \text{ uniformly in $x$ when $L\to \infty$ }.
\end{equation}
Then for every choice of $\gep>0$, the rescaled dynamics  $\bar \eta$ satisfies w.h.p.
\begin{equation}
\bar \eta^{L,\infty}(x.t)\ge f(x,t)-\gep, \quad \forall x\in[-1,1], \forall t>0.
\end{equation}
\end{proposition}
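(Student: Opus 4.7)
The plan is to establish the lower bound by a time-discretization combined with a comparison to corner-flip dynamics, exploiting the crucial property of the $\lambda=\infty$ dynamics that contacts with the wall are permanent. Within the unpinned region of $\eta(t)$ the evolution coincides with corner-flip dynamics, to which Theorem \ref{frefer} applies. For a given error $\epsilon > 0$, first construct a smooth ``subsolution'' $f^\epsilon$: a $1$-Lipschitz function positive on a strictly smaller interval $(l^\epsilon(t),r^\epsilon(t)) \subset (l(t),r(t))$, satisfying the heat equation on its support, with slope strictly less than $1$ at its boundaries, lying strictly below $f$ in the interior, and converging to $f$ as $\epsilon \to 0$. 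Such a family can be obtained from Theorem \ref{existence} applied to slightly shrunk initial data (or, more simply, by a small time-shift of $f$).

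Fix a time step $\tau > 0$ and set $t_k = k\tau$. The core is an induction on $k$ showing that $\bar\eta(\cdot, t_k) \geq f^\epsilon(\cdot, t_k)$ with high probability for each $k \leq T^*/\tau$. Assuming the bound at $t_k$, introduce an auxiliary lattice path $\xi$ supported on $[L l^\epsilon(t_k), L r^\epsilon(t_k)]$ that discretizes $L f^\epsilon(\cdot/L, t_k)$ from below, and evolve $\xi$ under pure corner-flip dynamics with zero Dirichlet boundary condition at the fixed endpoints. By the monotone coupling of Section \ref{monoton}, $\xi(s) \leq \eta(s)$ as long as $\eta(s)$ does not touch the wall strictly inside $(L l^\epsilon(t_k), L r^\epsilon(t_k))$ during $s \in [t_k, t_{k+1}]$. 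Theorem \ref{frefer} applied to $\xi$ identifies its scaling limit at $t_{k+1}$ as the solution $\tilde h$ of the heat equation on $(l^\epsilon(t_k), r^\epsilon(t_k))$ with zero Dirichlet boundary and initial data $f^\epsilon(\cdot, t_k)$. A maximum-principle comparison between $\tilde h$ and $f^\epsilon$ on the shrinking domain $(l^\epsilon(s), r^\epsilon(s))$ (where both satisfy the heat equation, both agree at $s = t_k$, and $\tilde h$ is non-negative along the moving boundary) yields $\tilde h(\cdot, t_{k+1}) \geq f^\epsilon(\cdot, t_{k+1})$ on $(l^\epsilon(t_{k+1}), r^\epsilon(t_{k+1}))$, closing the induction step.

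The main obstacle, and the one flagged by the author, is justifying the hypothesis used in the coupling: that $\eta$ does not develop a new contact with the wall strictly inside $(L l^\epsilon(t_k), L r^\epsilon(t_k))$ during $[t_k, t_{k+1}]$. Macroscopically this is natural since $\bar\eta \geq f^\epsilon$ is bounded away from zero in the interior at $t_k$, but quantitative control on microscopic fluctuations over a time window of order $\tau L^2$ is needed. Combining the upper bound from Proposition \ref{coucc} with the area bound \eqref{trucmachin} localizes $\bar\eta$ in a strip of width $o(1)$ around $f$ between the two time scales, which in particular forbids downward macroscopic deviations. At the microscopic scale, the remaining task is to rule out a local hitting of the wall by the corner-flip excursion, for which one can invoke hitting-time estimates of the type used in the proof of Proposition \ref{coucc} (via Lemma \ref{mousis}): the polymer value has to cross a gap proportional to the minimum of $f^\epsilon$ in the interior, and the probability of such an excursion is much smaller than $\tau$. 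A union bound over the $\lfloor T^*/\tau \rfloor$ time steps, followed by taking first $L \to \infty$ and then $\tau, \epsilon \to 0$, completes the argument.
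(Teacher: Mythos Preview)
Your time-discretization and corner-flip comparison scheme matches the paper's overall architecture (Lemmata~\ref{iltep} and~\ref{iltep2}), but the decisive step---controlling the inward motion of the contact region during each time slice---is where the proposal has a genuine gap, and where the paper's argument differs substantially from yours.

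First, your comparison function is inconsistently specified: a time-shift $f(\cdot,t+s)$, or the Stefan solution from shrunk data, still has boundary slope exactly $\pm 1$, not strictly less than $1$. More seriously, your justification that no contact is created inside $(l^\gep(t_k),r^\gep(t_k))$ during $[t_k,t_{k+1}]$ does not stand. The bound \eqref{trucmachin} you invoke is an \emph{upper} bound on the area; combining it with Proposition~\ref{coucc} (another upper bound on $\bar\eta$) cannot ``localize $\bar\eta$ in a strip around $f$''---that would be the conclusion of Proposition~\ref{lb} itself, so the reasoning is circular. What is actually needed is a \emph{lower} bound on the area up to the first interior contact, which the paper obtains separately (Lemma~\ref{ksksks}) from the fact that the area drift is exactly $-2$ while there is a single excursion. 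Your appeal to Lemma~\ref{mousis} is likewise misplaced: that lemma needs the polymer at height $L^{3/4}$, whereas near the edge of the unpinned region the height is $O(1)$, and it is precisely this boundary creep that must be ruled out.

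The paper's key device, absent from your plan, is the \emph{pedestal} construction $f^\delta$ of \eqref{clep}: instead of comparing with a function below $f$, one raises $f$ by $\bar\delta(t)$ and extends it by linear pieces of slope exactly $\pm 1$ over a buffer of width $\bar\delta(t)$. This buffer is what makes the area contradiction quantitative (Proposition~\ref{samere}): if a contact appeared at distance $\ge\gep^2$ inside the pedestal, the Lipschitz cone $|x-X_{\tau'}|$ would cut a wedge of area at least $c\,\gep^2\bar\delta$ out from under the heat profile (Lemma~\ref{compra}), contradicting the area lower bound. Without the flat slope-$1$ buffer, the corresponding wedge near the boundary is only of order $d^{3/2}$ for a contact at depth $d$, which does not beat the $O(\gep^2)$ slack in the area balance and so does not close the induction. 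In short, you have correctly identified the obstacle but not supplied the mechanism---pedestal plus two-sided area control---that the paper uses to overcome it.
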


\begin{proof}[Proof of Theorem \ref{mainres} from Proposition \ref{lb}]
It is sufficient to prove that for any $\delta>0$,  w.h.p
\begin{equation}
  \forall x\in[-1,1], \forall t>0 \quad \bar \eta(x.t)\le f(x,t)-\delta.
  \end{equation}
Combining Equations \eqref{trucmachin} and \eqref{area} we have w.h.p
\begin{equation}
   \forall x\in[-1,1], \forall t>0,\quad a(\bar \eta(t))\le a(f(\cdot,t))+\gd^2/32.
\end{equation}
Moreover, from Proposition \ref{lb} we have w.h.p.\ for all $x$ and $t$
\begin{equation}
 \bar \eta(x,t)\ge f(x,t)-\gd^2/64.
\end{equation}
Hence w.h.p.\  for all $t>0$ 
$$x\mapsto\bar \eta(x,t)- f(x,t)+\gd^2/32$$ is a $2$-Lifshitz positive function whose integral is smaller than
$\gd^2/16$. This implies that
\begin{equation}
 \bar \eta(x,t)- f(x,t)-\gd^2/32 \le \gd/2.
\end{equation}
Concerning \eqref{ouichtrucfacile}, the upper-bound on $\mathcal T$ is proved in Lemma \ref{drifta}, and the lower-bound is a consequence 
of \eqref{trouille}.

\end{proof}

\subsection{Overall strategy for the proof of Proposition \ref{lb}}

The main difficuly when trying to prove Proposition \ref{lb} is to control the motion of the boundary between the pinned and the unpinned zone.
Indeed, for part of $\eta$ that are far from the wall, Theorem \ref{frefer} can be used to control the drift of the rescaled polymer.
The first and most novel idea in the proof is to add a a small perturbation of amplitude $\delta$ to the function $f$ and 
to the initial condition $\eta_0^L$ 
(see the caption of Figure \ref{piedestal}). The reason for adding this perturbation is that when adding flat parts of slope $+1/-1$ on the sides of $\eta$,
the motion of the phase boundary i.e.\ the boundary between the pinned and unpinned phases, is slowed down.
Of course a consequence of this modification is that the initial condition does not satisfy \eqref{caref}, and we have
to find an equivalent formulation Proposition \ref{lb} that deals with this problem: this is Proposition \ref{infdelta}.

\medskip

We show that it is enough to control where the dynamics goes during a time period $\gep$ and iterate the process.
This is the role of Lemmata \ref{iltep} and \ref{iltep2}. 

\medskip

%
%

\subsection{Modification of the initial function}

Given $f_0$ with $l_0>-1$, $r_0<-1$, set (recall \eqref{area2})

\begin{equation}
\bar \delta=a(f_0)\delta/2 \quad \text{and} \quad \bar\delta(t):=\delta a(f(t))/2=\delta(a(f_0)/2-t). 
\end{equation}

We define $f^{\delta}:[-1,1]\times (0,\infty)\to \bbR_+$ (for $\gd$ small enough)
by
\begin{equation}\label{clep}
 f^{\delta}(x,t):=\begin{cases}
    f(x,t)+\bar \delta(t), \text{ for }  x\in (l(t),r(t)),\\
    x-(l(t)-\bar \delta(t)), \text{ for } x\in  (l(t)-\bar\delta(t),l(t)),\\
    -\big(x-(r(t)+\bar\delta(t))\big), \text{ for } x\in  (r(t),r(t)+\bar\delta(t)),\\
   0 \quad \text{elsewhere}.
   \end{cases}
\end{equation}
See Figure \ref{piedestal} for a graphical vision of $f^{\delta}$. We also set 

$$f^\delta_0:=f^{\delta}(\cdot,0).$$

\begin{figure}[hlt]
 \begin{center}
 \leavevmode 
 \epsfxsize =14 cm
 \psfragscanon
 \psfrag{L}{\small $l(t)$}
 \psfrag{R}{\small $r(t)$}
\psfrag{Del}{\small $\bar\delta(t)$}
 \epsfbox{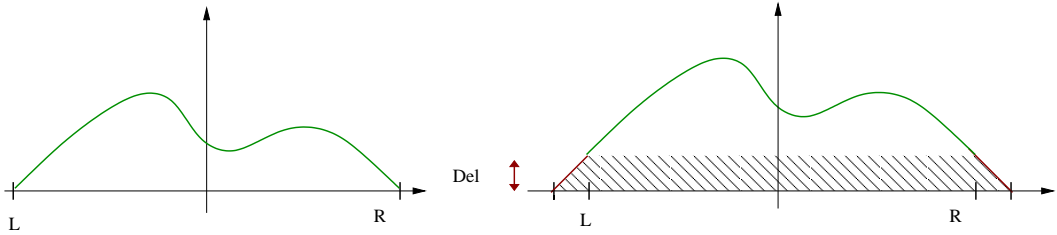}
 \end{center}
 \caption{\label{piedestal} The construction of the graph of $f^{\delta}(t)$ from the graph of $f(t)$ is done by adding some kind of pedestal of height
$\bar \delta(t)$ to support the original graph.}
 \end{figure}

From the fact that $f$ is the solution of \eqref{sstef}, we can deduce that $f^\delta$ satisfies
\begin{equation}\label{clep2}
\begin{cases}
  \partial_t f^\delta=f^{\delta}_{xx}-\delta \text{ for } x\in(r(t),l(t)),\\
  \partial_t f^\delta(x,t)=f_{xx}(l(t),t)-\delta \text{ for } x\in(l(t)-\bar\delta(t),l(t)),\\
  \partial_t f^\delta(x,t)=f_{xx}(r(t),t)-\delta \text{ for } x\in(r(t),r(t)+\bar \delta(t)).
   \end{cases}
\end{equation}

Instead of proving Proposition \ref{lb}, we prove a similar statement where $f$ is replaced by $f^\delta$.
Somewhere in the proof we will need some continuity assumption on the solution that are uniform for all time.
For this reason we will require our initial condition to be in the set

\begin{multline}
 \mathcal E:=\{ (f_0,r_0,l_0) \ |\ f_0 \text{ is positive, Lipshitz,} C^\infty \text{ on } (l_0,r_0), \\
  \text{ and such that } 
 l(t) \text{ and } r(t) \text{ are } C^\infty \text{ on } [0,T^*) \}.
\end{multline}

With this assumptions we are certain that there exists $C(f_0,c)$ that is such that
uniformly on $t\in[0,T^*-c/4]$ and $x\in l(t),r(t)$

\begin{equation}\label{lesbornes}\begin{split}
 |f_xx(x,t)|& \le C,\quad |(\partial_x)^3 f(x,t)|\le C,\\
|\partial_t f(x,t)|=|(\partial_x)^4 f(x,t)|\&le C,\\ 
\max(|r'(t)|,|r''(t)|,|l'(t)|,|l''(t)|)&\le C,\\
\end{split}\end{equation}

The set
$\mathcal E$ is dense for the uniform norm in the set of Lipshitz function that are positive on an interval.
This is because from Theorem \ref{existence}, the solution of \eqref{sstef} $f(\cdot,t)$ belongs to $\mathcal E$ for all positive times.

\begin{proposition}\label{infdelta}
Given a $f_0$ regular enough, and $\delta>0$
starting from a (sequence of) initial condition $\eta_0^L$ satisfying
\begin{equation}\label{crounch}
  \forall x\in[-1,1], \quad \bar \eta^L_0(x)\ge f_0^{\delta}(x),
\end{equation}

we have, for every $\gep>0$, w.h.p

\begin{equation}\label{corie}
\forall x\in[-1,1], \ \forall t>0,
 \quad \bar \eta(x,t)\ge f^{\delta}(x,t)-\gep .
 \end{equation}
\end{proposition}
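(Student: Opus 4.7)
The plan is to propagate the lower bound forward in time by iteration on a macroscopic grid $0 = t_0 < t_1 < \cdots < t_N$ with step $h$ (small but of order one in $L$), establishing by induction that with high probability
\[
\bar\eta(\cdot, t_k) \geq f^\delta(\cdot, t_k) - \epsilon_k,
\]
with errors $\epsilon_k$ remaining bounded, and in fact non-increasing for a suitable choice of $h$. The base case $k = 0$ is the hypothesis \eqref{crounch}. The one-step inductive step aims to show that $\epsilon_{k+1} \leq \epsilon_k - \tfrac{1}{2}\delta h + o_L(1)$, and summing over $k$ up to any time $T^* - c$, where \eqref{lesbornes} provides uniform regularity, then yields the claim on any compact sub-interval of $[0,T^*)$.

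The crux of the one-step argument is to exploit two features of the pedestal construction \eqref{clep}. First, $f^\delta \geq \bar\delta(t) > 0$ on the interior $(l(t), r(t))$, so if $\epsilon_k$ is small compared to $\bar\delta(T^* - c)$ then $\eta(\cdot, t_k)$ sits at macroscopic distance of order $L$ above the wall throughout this interior; a standard diffusive fluctuation estimate for the dynamics then ensures that, for $h$ small enough, $\eta$ does not touch the wall there during the entire interval $[t_k L^2, t_{k+1} L^2]$ with high probability. On this wall-avoiding region the rates \eqref{jumpr} of the sticky dynamics reduce identically to the corner-flip rates, so Theorem \ref{frefer} can be applied to a corner-flip comparison process, giving that $\bar\eta(\cdot, t_{k+1})$ is $o_L(1)$-close to the solution $u$ of the heat equation on a suitable sub-interval with initial datum bounded below by $f^\delta(\cdot, t_k) - \epsilon_k$ and with $0$ Dirichlet data at the boundaries. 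Second, by \eqref{clep2} the function $f^\delta$ itself satisfies $\partial_t f^\delta = f^\delta_{xx} - \delta$ on the interior, which is strictly below the pure heat equation driving $u$; the maximum principle applied to the difference $u - f^\delta$ then yields
\[
u(x, t_{k+1}) - f^\delta(x, t_{k+1}) \geq -\epsilon_k + \delta h - O(h^2)
\]
on $(l(t_{k+1}), r(t_{k+1}))$, completing the inductive step on this region. Outside the support of $f^\delta$ the inequality is trivial because $\bar\eta \geq 0 = f^\delta$, and on the two pedestal regions it reduces, via the slope $\pm 1$ of $f^\delta$ and the $1$-Lipshitz property of $\bar\eta$, to a simple matching of $\bar\eta$ with its value at the corresponding endpoint of the interior.

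The main obstacle I expect is coupling the interior heat-equation limit to the dynamics near the boundary of the unpinned region, which shifts from $(l(t_k), r(t_k))$ to $(l(t_{k+1}), r(t_{k+1}))$ across the step while all the slope and positivity conditions entering the PDE comparison must be preserved. The pedestal width $\bar\delta(t)$ is strictly larger than the maximal boundary displacement, which is $O(h)$ by \eqref{lesbornes}, so there is enough geometric slack to match the two descriptions; but implementing this rigorously requires carefully choosing a spatial window on which Theorem \ref{frefer} is applied and verifying that its boundary data are compatible with the microscopic state of $\eta$ at the ends of the interior. A related subtlety is that Theorem \ref{frefer} is formulated on a fixed spatial domain whereas we effectively need it on a slowly moving one; this is absorbed by the step-by-step discretization, since on each short interval $[t_k, t_{k+1}]$ the effective domain $(l(t_k), r(t_k))$ is frozen, and the smoothness of $l, r$ guaranteed by Theorem \ref{existence} ensures that no uniformity is lost as $k$ varies over the $N = O(1/h)$ steps.
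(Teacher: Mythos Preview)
Your time-discretization strategy matches the paper's (Lemmata \ref{iltep} and \ref{iltep2}), and the Lipschitz extension to the pedestal is the right idea. The genuine gap is in the choice of the spatial window for the corner-flip comparison and, more importantly, in the justification of ``no wall contact'' on it.

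Consider the two natural choices. If the window is the interior $(l(t_k), r(t_k))$, then the corner-flip comparison dynamics has zero Dirichlet data at $l(t_k)$, and its initial datum --- which must vanish at $l(t_k)$ to lie below $\eta$ --- equals $x - l(t_k)$ near $l(t_k)$, not $f^\delta(\cdot, t_k) - \epsilon_k$. The resulting heat-equation solution $u$ then satisfies $u(l(t_{k+1}), h) \approx l(t_{k+1}) - l(t_k) = O(h)$, whereas $f^\delta(l(t_{k+1}), t_{k+1}) = \bar\delta(t_{k+1})$ is of order one. So the induction does not close near the boundary, and your Lipschitz extension to the pedestal --- which needs the bound precisely at $l(t_{k+1})$ as input --- cannot repair this. (Your maximum-principle inequality for $u - f^\delta$ also fails on the lateral parabolic boundary $x = l(t_k)$, where $u - f^\delta = -\bar\delta(t)$, not $-\epsilon_k$.)

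If instead the window extends into the pedestal, say to $(\bar l + \gep^2, \bar r - \gep^2)$ as in the paper, then the PDE comparison does work (this is \eqref{zabe} together with Lemma \ref{pasdrol}), but the polymer's initial height at the window ends is only of order $\gep^2$ --- not macroscopic in any sense a diffusive-fluctuation bound can exploit, and the sticky contact zone sitting just outside can in principle advance inward. Ruling out wall contact there is the heart of the matter, and the paper does it by a \emph{global area argument} (Proposition \ref{samere}): the area $a(\bar\eta)$ satisfies a martingale lower bound $a(\bar\eta(\tau')) \ge a(f^\delta_0) - 2\tau' - o(1)$ up to the first interior contact time $\tau'$ (Lemma \ref{ksksks}), while a premature contact at some $X_{\tau'}$ forces, via the Lipschitz cone $|x - X_{\tau'}|$ and the heat-equation upper bound of Proposition \ref{coucc}, the opposite inequality $a(\bar\eta(\tau')) \le a(f^\delta_0) - 2\tau' - c\gep^2\bar\delta$ (Lemma \ref{compra}). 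This contradiction is invisible to any local estimate; it hinges on the fact that the drift of the area is exactly $-2$ before the first interior contact and that a contact anywhere in the window costs a definite amount of area. Your proposal identifies the right obstacle in its last paragraph but supplies no mechanism to overcome it.
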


The reason why Proposition \ref{infdelta} is easier to prove than Proposition \ref{lb} is that as can be seen from 
\eqref{clep2}, $f^{\delta}$ has a stronger ``push-down'' than $f$. Also even if $f^\delta_{xx}\equiv 0$ in a neighborhood $l(t)-\bar \delta(t)$ and $r(t)+\bar \delta(t)$, there is a contration of the boundary of the unpinned region
(meaning that the lateral push is also stronger).

\medskip

However, it is not too difficult to prove that the two statements are in fact equivalent.

\begin{proof}[Proof of Proposition \ref{lb} from Proposition \ref{infdelta}]

Given $f_0$, and $\gep$, we consider $\delta>0$ small enough (depending on $f_0$ and $\gep$) and 
we define 
$\hat f_0$ to be a $1$-Lipshitz function which is positive and smooth on $(\hat l_0, \hat r_0)\subset(l_0,r_0)$.
and that satisfies

\begin{equation}\label{troicinq}
 \hat f_0^{2\delta}\le f_0, \quad \text{ and } \quad
 \int_{-1}^1 (f_0-\hat f_0)(x)\dd x\le \gep^2/4.
\end{equation}
Let $\hat f$ denote the solution of \eqref{sstef} with initial condition $\hat f_0$ and $\hat f^{\delta}$ resp. $\hat f^\delta_0$, is defined as 
in \eqref{clep}, replacing $f$ by $\hat f$. We remark that if $\eta^L_0$ satisfies \eqref{caref} then for $L$ large enough
$$ \forall x\in[-1,1], \quad \bar \eta^L_0\ge \hat f^\delta_0.$$
Indeed it has to be checked only on the interval where $\hat f_0^{\delta}$ is positive, and on this interval 
$f_0- \hat f_0^\delta$ is uniformly bounded away from zero (from \eqref{troicinq}).

\medskip

Applying Proposition \ref{infdelta} to $\hat f_0$ we obtain that
w.h.p.\ 

\begin{equation}\label{andre}
\forall t\ge 0,\ \forall x\in[-1,1], \quad  \bar \eta(x,t)\ge \hat f^{\delta}(x,t)- \gep \ge \hat f(x,t)-\gep.
\end{equation}

\medskip

Applying Equation \eqref{area} to $f$ and $\hat f$ and combining it with the assumption \eqref{troicinq}, we obtain that for all $t>0$
\begin{equation}
 \int^1_{-1} (f(x,t)-\hat f(x,t))\dd x\le \gep^2/4.
\end{equation}
As $(f-\hat f)(\cdot,t)$ is $2-$Lipshitz and positive, this implies

$$\forall t>0,\ \forall x\in[-1,1], \quad  f \le \hat f+\gep.$$
Hence \eqref{andre} implies that
\begin{equation}
 \bar \eta(x,t)\ge f(x,t)-2\gep.
\end{equation}

\end{proof}

\subsection{Reduction to a statement for infinitesimal time}

To prove Proposition \eqref{infdelta}, we slice time into short period of length $\gep$ during which it is easier to control the dynamics.
The proof can be decomposed in two steps:
in Lemma \ref{iltep} we check that when $t$ is a multiple of $\gep$, the polymer stays above $f^\delta(\cdot,t)$.
Then Lemma \ref{iltep2} is used to fill the gap; it shows that during a period of time $\gep$ the polymer cannot go down too much.

\medskip

%

\begin{lemma}\label{iltep}
Given $f_0$ and $\delta>0$, and $c>0$, 
there exists $\gep_0=\gep_0(f_0,\delta,c)>0$ such that
for all $\gep\le \gep_0$, $k$ satifying $k\gep\le (a(f_0)-c)/2$,
if w.h.p.\
\begin{equation}\label{grominet}
, \forall x\in[-1,1], \quad  \eta(x,k\gep)\ge f^{\delta}(x,k\gep),
\end{equation}
then w.h.p.
\begin{equation}
, \forall x\in[-1,1], \quad \bar \eta(x,(k+1)\gep )> f^{\delta}(x,(k+1)\gep).
\end{equation}
\end{lemma}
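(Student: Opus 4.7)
The plan is to propagate the polymer dynamics over a short time window $\gep$ and then compare it to $f^\gd(\cdot,(k+1)\gep)$ via a PDE argument, exploiting the strict decrease $-\gd$ built into \eqref{clep2} as the engine of the induction.

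First, by the monotonicity in the initial condition for the sticky dynamics recalled in Section~\ref{monoton}, it suffices to prove the conclusion for a dynamics $\bar\eta'$ initialized at a lattice discretization of $L f^\gd(\cdot/L,k\gep)$. Let $I := [l(k\gep) - \bar\gd(k\gep),\, r(k\gep) + \bar\gd(k\gep)]$. Outside $I$, the initial condition sits on the wall, and $\gl=\infty$ preserves this pinning forever; all nontrivial evolution takes place on $I$. Inside $I$, the initial profile is at distance $\bar\gd(k\gep)\ge \gd c/2$ from the wall on $[l(k\gep),r(k\gep)]$ and is an affine ramp of slope $\pm 1$ on the two pedestals. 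A non-touching estimate for a free corner-flip walk started on such a ramp (in the spirit of Lemma~\ref{mousis}) shows that with high probability the walk does not touch the wall in the interior of $I$ during rescaled time $\gep$, so that one may couple the sticky dynamics on $I$ with a free corner-flip dynamics with zero boundary values at $\partial I$. Theorem~\ref{frefer}, applied on the interval $I$ via an affine change of variables, then yields, w.h.p.\ uniformly in $x\in I$,
\begin{equation*}
\bar\eta'(x,\gep)=u(x,\gep)+o(1),
\end{equation*}
where $u(\cdot,s)$ is the classical solution of the heat equation on $I$ with Dirichlet data $0$ and initial data $f^\gd(\cdot,k\gep)$.

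The second ingredient is a PDE comparison between $u$ and $f^\gd(\cdot,k\gep+s)$ on $[0,\gep]$. Set $v(x,s):=u(x,s)-f^\gd(x,k\gep+s)$ on $I\times[0,\gep]$. From \eqref{clep2} together with the inequalities $l'\ge 0$, $r'\le 0$ (both immediate from the boundary conditions $f_x=\pm 1$ and Lipshitz preservation by the heat flow, as recalled after \eqref{sstef}), one checks that on each smooth piece of $f^\gd$,
\begin{equation*}
\partial_s f^\gd \le f^\gd_{xx}-\gd.
\end{equation*}
At the pedestal--bulk junctions $x=l,r$ the slopes match the prescribed $\pm 1$, so $f^\gd$ is $C^1$ there and no distributional mass appears in $f^\gd_{xx}$; at the pedestal--wall junctions $x=l-\bar\gd, r+\bar\gd$ the slope jumps from $0$ to $\pm 1$, contributing a positive Dirac mass that only strengthens the inequality. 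Subtracting from $\partial_s u = u_{xx}$, $v$ satisfies, in the distributional sense, $\partial_s v\ge v_{xx}+\gd$ on $I$, with $v(\cdot,0)=0$ and $v\equiv 0$ on $\partial I\times[0,\gep]$. Comparison with the auxiliary $\phi$ solving $\partial_s\phi=\phi_{xx}+\gd$ on $I$ with the same zero data gives $v\ge\phi$, and the sine-series representation of $\phi$ yields $\phi(\cdot,\gep)\ge \alpha(\gd,\gep,c)>0$ uniformly on any compact subset of $\operatorname{int}(I)$. Since the support of $f^\gd(\cdot,(k+1)\gep)$ is strictly contained in $\operatorname{int}(I)$ (it shrinks by at least $\gd\gep$ on each side), we get $u(x,\gep)\ge f^\gd(x,(k+1)\gep)+\alpha$ on that support, and $u\ge 0=f^\gd$ elsewhere in $I$. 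Combined with the scaling-limit estimate, this gives the required inequality w.h.p.\ provided $\gep_0$ is small enough that the $o(1)$ error from Step~1 is below $\alpha/2$.

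The main obstacle is the rigorous justification of the distributional subsolution property of $f^\gd$: one must make sure that no negative distributional mass is created at the pedestal--bulk junctions (where $f^\gd$ is $C^1$ by the built-in slope condition) while the Dirac masses at the ramp--wall junctions point the right way. A clean implementation is to mollify $f^\gd$ at a scale much smaller than $\gd\gep$, run the classical comparison with a gain $\gd/2$, and pass to the limit; the uniform regularity bounds \eqref{lesbornes} guaranteed by $(f_0,l_0,r_0)\in\cE$ together with the time restriction $k\gep\le(a(f_0)-c)/2$ keep all error terms under control uniformly in $k$.
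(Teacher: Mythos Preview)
Your proof has a genuine gap in Step~1, the non-touching estimate. You claim that ``a non-touching estimate for a free corner-flip walk started on such a ramp (in the spirit of Lemma~\ref{mousis})'' shows the walk does not touch the wall in the interior of $I$. This is the heart of the argument and it is not justified. Lemma~\ref{mousis} concerns a free dynamics whose initial height is at least $L^{3/4}$ \emph{everywhere}; here the pedestal goes continuously down to height~$0$ at $\partial I$, so the hypothesis fails on the whole ramp. More importantly, even if the free corner-flip dynamics on $I$ stayed positive, this would not let you couple it with the sticky dynamics: the $2\to 0$ transition has rate~$2$ for $\gl=\infty$ versus rate~$1$ for the free chain, so under any natural graphical coupling the sticky dynamics touches the wall \emph{before} the free one does. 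What is actually needed is that the \emph{sticky} dynamics itself does not touch the wall in a slightly restricted interval, and this cannot be read off from a fluctuation bound of Lemma~\ref{mousis} type.

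The paper establishes exactly this via Proposition~\ref{samere}, whose proof is the main new idea of the section and is of an entirely different nature: it is a global area-conservation argument. One shows (Lemma~\ref{ksksks}) that the area of $\eta$ over $[\bar l+\gep^2,\bar r-\gep^2]$ can drop at rate at most~$2$ up to martingale fluctuations, and (Lemma~\ref{compra}, via Proposition~\ref{coucc}) that any premature contact at a point $X_{\tau'}$ would force the area, through the Lipschitz constraint $\bar\eta\le |x-X_{\tau'}|$, to be strictly smaller than what the martingale bound allows. Once non-touching on $[\bar l+\gep^2,\bar r-\gep^2]$ is secured, the paper compares with a corner-flip dynamics on that \emph{restricted} interval with initial datum $f_0^\delta-\gep^2$, and closes with an explicit Taylor-expansion comparison (your PDE subsolution argument in Step~2 would be a legitimate alternative for that final comparison, modulo adjusting for the $\gep^2$-shifts). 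But without a genuine argument replacing Proposition~\ref{samere}, Step~1 does not stand.
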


\begin{lemma}\label{iltep2}
Given $f_0$, $\delta>0$, $c>0$, and $\eta>0$
there exists $\gep_1=\gep_1(f_0,\delta,c,\alpha)>0$ such that: 
for all $\gep\le \gep_1$ 
and $k$ satifying $k\gep\le (a(f_0)-c)/2$,
if w.h.p.\eqref{grominet} holds then
\begin{equation}\label{jkl}
  \forall t\in (\gep k,\gep(k+1)), \quad \bar \eta (x,t)> f^{\delta}(x,k\gep )-\alpha.
\end{equation}
\end{lemma}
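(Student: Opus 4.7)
The plan is to reduce via monotonicity to a controlled initial profile and then compare the $\gl=\infty$ dynamics with the wall-free corner-flip dynamics over the short rescaled time $\gep$, using the scaling limit in Theorem \ref{frefer} and the regularity of $f^{\delta}$. Using the graphical construction from Section \ref{monoton}, I would fix a lattice path $\eta^\dagger_0\in\gO_L$ satisfying $\eta^\dagger_0(x)\le \eta(x,k\gep)$ for all $x$ (which holds with high probability by \eqref{grominet}) and $\bar\eta^\dagger_0=f^{\delta}(\cdot,k\gep)-o(1)$ uniformly. Writing $\eta^\dagger$ for the $\gl=\infty$ dynamics started from $\eta^\dagger_0$, monotonicity yields $\eta(\cdot,k\gep+t)\ge \eta^\dagger(\cdot,t)$ almost surely, so it suffices to establish \eqref{jkl} with $\eta^\dagger$ in place of $\eta$. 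I then partition $[-1,1]$ into $P=\{x:f^{\delta}(x,k\gep)\le \alpha/2\}$ and $I=\{x:f^{\delta}(x,k\gep)>\alpha/2\}$: on $P$ the bound is automatic since $\bar\eta^\dagger\ge 0\ge f^{\delta}-\alpha$, and since $f^{\delta}(\cdot,k\gep)$ is $1$-Lipshitz, $I$ lies at rescaled distance $\ge \alpha/2$ from $P$.

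On $I$ I would couple $\eta^\dagger$ with the wall-free corner-flip dynamics $\tilde\eta$ starting from the same $\eta^\dagger_0$ through the shared Poisson clocks of the graphical construction. Applying Theorem \ref{frefer} to $\tilde\eta$ yields $\bar{\tilde\eta}(\cdot,t)\to \tilde f(\cdot,t)$ uniformly in probability, where $\tilde f$ solves the heat equation on $[-1,1]$ with initial datum $f^{\delta}(\cdot,k\gep)$ and zero Dirichlet boundary. The uniform regularity bounds \eqref{lesbornes} (which transfer from $f$ to $f^{\delta}$ away from the pedestal corners) together with classical parabolic smoothing yield $\|\tilde f(\cdot,\gep)-f^{\delta}(\cdot,k\gep)\|_\infty\le C(f_0,\delta,c)\,\gep$ on $I$, so for $\gep_1$ small enough in terms of $C$ and $\alpha$ one obtains $\bar{\tilde\eta}(x,t)\ge f^{\delta}(x,k\gep)-\alpha/2$ uniformly on $I$ with high probability.

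The principal technical step is to transfer this lower bound from $\tilde\eta$ to $\eta^\dagger$ on $I$. The two dynamics agree at any space-time point $(x,s)$ provided no ``wall-interaction event'' (a flip attempt when the height lies in $\{0,1,2\}$, the regime where the rates of \eqref{jumpr} and of the pure corner-flip dynamics differ) has occurred in the past region of influence of $(x,s)$. Such discrepancies can be initiated only where $\eta^\dagger_0\le 2$, that is, in a small neighborhood of $P$ lying at rescaled distance at least $\alpha/2-o(1)$ from $I$. Using a random-walk representation for the propagation of discrepancies between the two couplings, one can prove a diffusive bound on the probability that an event initiated at $y\in P$ affects $(x,t)$ with $x\in I$ and $t\le L^2\gep$; the resulting estimate is of Gaussian type in the rescaled distance and, once summed over the $|P|\le 2L$ initial sites and combined with a control on an $O(1/\sqrt\gep)$-point rescaled grid interpolated by the $1$-Lipshitzianity of $\bar\eta^\dagger$ and $f^{\delta}$, gives $\eta^\dagger(x,t)=\tilde\eta(x,t)$ uniformly on $I$ with high probability, for some $\gep_1$ depending only on $f_0,\delta,c,\alpha$. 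Obtaining this $L$-uniform diffusive propagation bound is the main obstacle; a thin boundary layer between $I$ and $P$ where $f^{\delta}\le\alpha$ is handled directly using $\bar\eta^\dagger\ge 0$ together with the Lipshitz property.
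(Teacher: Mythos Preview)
Your approach differs substantially from the paper's, and the place where you put the difficulty is exactly where the gap lies.

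The paper does \emph{not} rely on any discrepancy--propagation bound between the $\gl=\infty$ dynamics and the wall-free dynamics. Instead it first proves, as an independent statement (Proposition~\ref{samere}), that during the rescaled time interval $[0,\gep]$ the $\gl=\infty$ dynamics makes no new wall contact inside $[\bar l+\gep^2,\bar r-\gep^2]$. This is obtained by an \emph{area argument}: one tracks $\tilde A(\eta(t))$, shows via a martingale bound that it decreases at rate essentially $-2$ until the first interior contact (Lemma~\ref{ksksks}), and then shows that an early interior contact would force the area to be strictly smaller than this allows (Lemma~\ref{compra}). Once interior contact is ruled out, the $\gl=\infty$ dynamics on the subinterval $[x_l,x_r]$ coincides exactly with a wall-free corner-flip dynamics there, and monotonicity with a lowered Dirichlet boundary reduces the lemma to the one-line estimate $\tilde f^{(\gep,\delta)}_{xx}\ge -\|f_0''\|_\infty$, hence $\tilde f^{(\gep,\delta)}(\cdot,t)\ge f_0^\delta-\gep\|f_0''\|_\infty$.

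Your route tries to bypass Proposition~\ref{samere} by coupling with the wall-free dynamics on the whole interval and arguing that discrepancies, born only where the rules differ, stay away from $I$. There are two problems. First, the claim that discrepancies are ``initiated only where $\eta^\dagger_0\le 2$'' is not correct as stated: a new discrepancy is created at $(x,t)$ whenever the \emph{current} common height at $x$ is in the wall-interaction regime, and nothing prevents the pinned region of $\eta^\dagger$ from advancing into your buffer zone $B$ and creating fresh discrepancies there. Ruling this out is precisely the content of Proposition~\ref{samere}, so your argument is circular at this point. Second, even granting that discrepancies enter $I\cup B$ only through its boundary, the front of the discrepancy set is not a second-class particle in a single SSEP; the two coupled processes obey different rules near the wall, and the naive bound on the motion of the discrepancy front is ballistic (one site per clock ring), giving a rescaled displacement of order $L\gep$, not $\sqrt\gep$. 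A genuine diffusive bound here is neither standard nor sketched; it would require substantial new work comparable to, or exceeding, the area argument you are trying to avoid.

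In short, the missing ingredient is exactly the one the paper supplies: an \emph{a priori} control (via area conservation) showing that $\eta^\dagger$ does not touch the wall in the interior during $[0,\gep]$. With that in hand, the comparison with the wall-free dynamics becomes exact on the restricted interval and the heat-equation estimate you wrote is all that remains.
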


\begin{proof}[Proof of Proposition \ref{infdelta} from Lemmata \ref{iltep} and \ref{iltep2}]
Given $f_0,\delta,c,\eta$, let us choose
$\gep=\min(\gep_0,\gep_1)$.
If $\eta^L_0$ satisfies \eqref{crounch} then reasoning by induction and using 
Lemma \ref{iltep} one obtains setting $k_{\max}=\lceil (a(f_0)-c)/2\gep \rceil$
that w.h.p.
\begin{equation}
 \forall k \in [0,k_{\max}], \quad \bar \eta(x,k\gep )> f^{\delta}(x,k\gep).
\end{equation}
Then we can use Lemma \ref{iltep2} to get a lower bound on $\bar \eta(x,t)$ for intermediate times and we obtain that 
w.h.p.\
\begin{equation}
  \forall t\in[0,(a(f_0)-c)/2], \quad \bar \eta(x,t)> f^{\delta}(x,t)-\alpha.
\end{equation}
Finally, for $t\ge (a(f_0)-c)/2$, Equation \eqref{area} ensures that
\begin{equation}
 \int_{-1}^1 f(x,t)\dd x \le c.
\end{equation}
As $f$ is Lipshitz and vanishes on the boundary of $[-1,1]$ this implies that $f(x,t)\le \sqrt{c}$ uniformly and thus
that 
$$f^\delta(x,t)\le \sqrt{c}+\delta c/2.$$
This is enough to conclude, because $c$ can be chosen arbitrarily small.

\end{proof}

\subsection{Proof of Lemmata \ref{iltep} and \ref{iltep2}}

The final step in the proof of Theorem \ref{mainres} is the proof of Lemma \ref{iltep} and Lemma \ref{iltep2}.
For commodity reason, we chose to shift time from $\gep k$ to zero. 
What we need to show is that if one start with an initial condition $\eta^L_0$ that satisfies
\begin{equation}\label{grominet2}
\forall x\in[-1,1], \quad   \bar \eta^L_0(x)\ge f^{\delta}(x,k\gep),
\end{equation}
then w.h.p. 
\begin{equation}\label{lem1}
 \forall x\in[-1,1], \quad   \bar \eta(x,\gep)\ge f^{\delta}(x,(k+1)\gep),
\end{equation}
and 
\begin{equation}\label{lem2}
  \quad \forall t\in(0,\gep) \bar \eta(x,t)\ge f^{\delta}(x,k\gep)-\alpha.
\end{equation}
Note that the assumption that we suppose is not exactly the same as in the Lemmata \ref{iltep} and \ref{iltep2}
as \eqref{grominet} holds only w.h.p.\ but this is not a problem, as anything happening on a set of vanishing probability does not change the conclusion.

In the remainder of the proof, we shall consider only the case $k=0$, to lighten the notation.
The reader can check then that the bound we use throughout the proof are in fact valid uniformly for all $k$.
For instance for Proposition \ref{samere} we can note $\bar \delta(\gep k)$ is bounded from below by $\bar \delta c$, 
and for the rest \eqref{lesbornes} provides uniform bounds.

\medskip

By monotonicity, it is sufficient to prove \eqref{lem1} and \eqref{lem2}  starting from the smallest initial condition satisfying \eqref{grominet2}.
In in this case we habe
\begin{equation}\label{crounch2}
 \bar \eta^L_0(x)= f_0^\gd+\sigma_L(x),
\end{equation}
where $0\le \sigma_L(x)\le 2/L$.
\medskip

The strategy is then quite simple.
Set $\bar l=l_0-\bar\delta$ and $\bar r=r_0-\bar\delta$ to be the left and right boundaries of the region where
$f^{\delta}(\cdot,\gep k)$ is positive.
First, we show that for a small time $\gep$, the boundary of the unpinned region do not move more further than $\gep^2$ from their original location, i.e. that  w.h.p.\
$\eta$ does not add contact points with the wall in the interval $L[\bar l+\gep^2,\bar r-\gep^2]$.

\begin{proposition}\label{samere}
For $\gep$ small enough (depending only on $\delta$)
the dynamic started from an initial condition satisfying \eqref{crounch2}.
we have  w.h.p.\ 
\begin{equation}
 \bar \eta(x,t)>0, \quad \forall x\in
 [\bar l+\gep^2,\bar r-\gep^2 ], \forall t\in[0,\gep].
\end{equation}
\end{proposition}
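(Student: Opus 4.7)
By symmetry it suffices to rule out new pinning events in $L[\bar l+\gep^2,0]$ during $[0,L^2\gep]$, and by monotonicity of the dynamics in the initial condition (Section \ref{monoton}) I may assume $\eta_0$ is the minimal parity-correct integer profile dominated by $Lf_0^\delta$. In particular, $\eta_0$ has slope $+1$ at every bond of the pedestal $[L\bar l,L l_0]$ and thus contains no downward step there. The plan is to pass to the gradient representation and reduce the question to a diffusive-propagation estimate for the symmetric simple exclusion process (SSEP).

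I would introduce $\sigma_i:=(\eta_{i+1}-\eta_i+1)/2\in\{0,1\}$, which turns the corner-flip dynamics into SSEP on $\sigma$, with frozen "walls" across each bond $(z-1,z)$ whenever $z$ is a pinned site of $\eta^{L,\infty}$. Initially, all $\sigma_i=1$ on the pedestal and the wall at $L\bar l$ forbids any $\sigma$-exchange across the bond $(L\bar l-1,L\bar l)$; hence, until the first new pinning event in the protected interval, the only source of "down-steps" ($\sigma_i=0$) inside the pedestal is diffusive migration from the bump region to the right of $L l_0$. A pinning event at a site $y\in[L\bar l,L l_0]$ requires $\lfloor (y-L\bar l-2)/2\rfloor$ down-steps to have accumulated in $[L\bar l,y-1]$, since
\[
\eta(y,t)=(y-L\bar l)-2\cdot\#\{i\in[L\bar l,y-1]:\sigma_i(t)=0\}.
\]

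Using the standard hydrodynamic / heat-kernel bound for SSEP on the stopped process (stopped at the first pinning event in the protected interval, before which the $\sigma$-dynamics on the pedestal is exactly SSEP on the static domain $[L\bar l,L l_0]$ with walls only at $L\bar l$), the expected density of down-steps at $x\in[\bar l,\bar l+\gep^2]$ at macroscopic time $\gep$ is dominated by a free-particle Gaussian tail of the form $1-\Phi\!\bigl((l_0-x)/\sqrt{2\gep}\bigr)\le \exp(-c\bar\delta^2/\gep)$ for a constant $c=c(f_0,\delta)>0$. A suitable SSEP concentration inequality (available via the log-Sobolev inequality on finite intervals, or by stochastic domination by free random walks, which only increases fluctuations) then upgrades this to a per-site probability bound $\le\exp(-c' L\gep^2)$ for the number of down-steps in $[L\bar l,y-1]$ ever to exceed the pinning threshold during $[0,L^2\gep]$. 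Combining with a union bound over the $O(L)$ candidate sites $y$ in the left slab $L[\bar l+\gep^2,L l_0-L\gep^2]$ shows that no new pinning occurs there w.h.p., provided $\gep\le c''\bar\delta^2$ for a suitable small $c''$. The companion slab near $l_0$ (and symmetrically near $r_0$) is handled by a direct martingale estimate on $\eta(y,t)-\eta(y,0)$, using that the initial polymer sits at height $\gtrsim L\bar\delta\gg L\sqrt\gep\log L$ throughout.

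The main obstacle is quantitative: a plain per-site Markov inequality yields a probability bounded only by a fixed small constant, which does not survive the union bound over $O(L)$ sites. One therefore needs either the logarithmic Sobolev inequality for SSEP on finite intervals, or a Peierls-type contour reorganization that uses the Lipshitz structure of the polymer to charge a "bad" down-step configuration only once rather than $O(L)$ times. A related delicacy is justifying that the hydrodynamic SSEP bound stays valid on the stopped process, i.e., that no freshly created wall intrudes into the protected region before the first-pinning stopping time; this is essentially the proposition itself, and is closed by a short bootstrap that iterates the comparison at the same spatial scale $L\gep^2$ a finite number of times.
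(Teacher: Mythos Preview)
Your approach is genuinely different from the paper's, and the gaps you yourself flag are real and not closed.

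The paper does not track particle diffusion at all. Instead it runs a contradiction on the \emph{total area}. Let $\tau'$ be the first (rescaled) time a contact appears in $[\bar l+\gep^2,\bar r-\gep^2]$. On one hand, the restricted area $\tilde A$ has drift $\mathcal L\tilde A\ge -2$ before $\tau'$ (there is only one excursion), so a Doob martingale bound gives w.h.p.\ $a(\bar\eta(\tau'))\ge a(f_0^\delta)-2\tau'-\gep^4-L^{-1/4}$. On the other hand, at time $\tau'$ the rescaled polymer vanishes at some $X_{\tau'}$, so by Lipschitzianity $\bar\eta(\cdot,\tau')\le\min(\tilde f^\delta(\cdot,\tau'),|\cdot-X_{\tau'}|)+\alpha$ (the $\tilde f^\delta$ bound is Proposition~\ref{coucc}). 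A short computation using the pedestal geometry shows this min cuts at least $\gep^2\bar\delta/16$ of area compared to $\tilde f^\delta$ alone, yielding $a(\bar\eta(\tau'))\le a(f_0^\delta)-2\tau'-\gep^2\bar\delta/32$. For $\gep$ small these two inequalities are incompatible, so $\tau'>\gep$ w.h.p. The point is that working with a single scalar (area) replaces your union bound over $O(L)$ sites by one martingale inequality.

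Your route, by contrast, needs exactly the ingredients you list as ``obstacles''. The per-site SSEP concentration bound of order $\exp(-c'L\gep^2)$ is asserted but not proved; negative association of SSEP occupation variables would indeed help, but you still need a uniform-in-time statement over the window $[0,L^2\gep]$, and the comparison with free walkers goes the wrong way for an upper bound on excess particles (independent walkers can pile up more, not less, than exclusion particles in a given region, so domination gives the inequality you do not want). The bootstrap you invoke to break the circularity (``no new wall intrudes before the first-pinning time'') is precisely the proposition and is not made non-circular by iterating at the same scale; you would need a strictly smaller spatial or temporal scale at each step and a mechanism to close the induction. Finally, the ``companion slab'' near $l_0$ is dismissed with a one-line martingale remark, but there the initial height is only $\bar\delta$, not $\gg\sqrt\gep$, and this is exactly where the paper's area argument does its work. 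As written the proposal is a plausible outline with the hard steps deferred, whereas the paper's area method is short and self-contained.
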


The second step, at the end of the section, is then to say that if no contact is added, 
then the dynamics restricted to the interval $L[\bar l+\gep^2,  \bar r-\gep^2]$ stochastically
dominates  a corner-flip dynamics like the one of Section \ref{cornerflip} for which the scaling limit is given by the 
solution of the heat equation with an initial condition that is close to $f_0$.
What remains to do at last is to compare the solution of the heat-equation after time $\gep$ to $f(\cdot,\gep)$ to establish \eqref{lem1} and to 
$f_0-\alpha$ to establish \eqref{lem2}.

\begin{proof}[Proof of Proposition \ref{samere}]
The trick is to show that 
if one touches the wall to soon, the area below the curve $a(\bar \eta)$ shrinks too fast.
Set $\mathcal T'$ to  be the first time at which $\eta$ touches the wall in the interval $L[\bar l+\gep^2,\bar r-\gep^2]$ and 
$\tau'=\mathcal T'/L^2$.
We decompose the proof in two lemmata which proofs are postponed.

\medskip

The first step of our reasoning is to prove that Lemma \ref{drifta} is almost sharp up to time $\mathcal T'$. 
Let $ \tilde A(\eta)$ be defined as 
\begin{equation}
 \tilde A(\eta(t)):=\int_{x_l}^{x_r} \max(\eta(x,t),1)\dd x.
\end{equation}
where
$$x_l:=\lceil L (\bar l+\gep^2) \rceil \text{ and } x_r:=\lfloor L (\bar r-\gep^2)\rfloor.$$

\begin{figure}[hlt]
 \begin{center}
 \leavevmode 
 \epsfxsize =8 cm
 \psfragscanon
 \psfrag{1}{\small $\bar l$}
 \psfrag{-1}{\small $\bar r$}
\psfrag{d}{\small $\gep^2$}
 \epsfbox{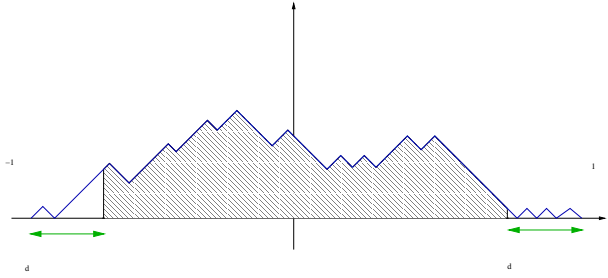}
 \end{center}
 \caption{\label{plusbas} A trajectory $\eta$, with the volume $ A^d(\eta)$ darkened.
The drift of the volume when there are no contact with the wall is either equal to $2$, $1$, $0$, $-1$ or $-2$.}
 \end{figure}

As in Lemma \ref{drifta}, one can compute explicitly $\mathcal L \tilde A$ and use the expression we obtain to prove,

\begin{lemma}\label{ksksks}
One has w.h.p.\
\begin{equation}\label{doobi}
 \tilde A(\eta(\cT'))\ge \tilde A(\eta_0)-2\cT'-L^{7/4},
\end{equation}
and as a consequence, w.h.p.
\begin{equation}\label{frete}
a(\bar \eta(\tau'))\ge a(f^{\delta}_0)-2\tau'-\gep^4-L^{-1/4}.
\end{equation}
\end{lemma}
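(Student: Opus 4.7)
The plan is to adapt the martingale strategy of Lemma \ref{drifta} to the restricted area $\tilde A$. The first step is to establish a drift bound analogous to \eqref{groove}, namely $\mathcal L\tilde A(\eta(s))\geq -2$ for every $s<\mathcal T'$. By definition of $\mathcal T'$, at such times $\eta_y\geq 1$ for all $y\in[x_l,x_r]$, so no interior contact-creating flip has yet taken place. A direct computation of $\mathcal L\tilde A$ then shows that each site $x\in[x_l,x_r]$ contributes exactly $\pm 2$ to the drift according to whether $x$ is a local minimum or a local maximum of the full path $\eta$: genuine corner flips (rate $1$, change $\pm 2$) and the at-most-two boundary contact flips at $x_l,x_r$ (rate $2$, change $\mp 1$ in $\max(\eta,1)$) yield the same per-site drift $\pm 2$. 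Writing $m,M$ for the numbers of local minima and maxima of $\eta$ located in $[x_l,x_r]$, one gets $\mathcal L\tilde A(\eta)=2(m-M)$, and the alternation of local extrema of a nearest-neighbor lattice path ensures $|m-M|\leq 1$, hence $\mathcal L\tilde A\geq -2$.

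Next I would introduce the martingale
\[
\tilde M_t:=\tilde A(\eta(t))-\tilde A(\eta_0)-\int_0^t\mathcal L\tilde A(\eta(s))\dd s
\]
and apply Doob's maximal inequality exactly as in \eqref{doob}. Its predictable bracket is bounded by $CLt$ (each flip contributes at most a constant to the variance, and there are at most $O(L)$ active sites in $[x_l,x_r]$), so $\bbE[\tilde M_{L^2}^2]\leq CL^3$ and $\sup_{t\leq L^2}|\tilde M_t|\leq L^{7/4}$ w.h.p. Combining this with the drift bound immediately yields \eqref{doobi}.

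To deduce \eqref{frete}, I would rescale \eqref{doobi} by $L^2$. Using $\bar\eta_0=f_0^\delta+\sigma_L$ with $\|\sigma_L\|_\infty\leq 2/L$, the rescaled initial area $\tilde A(\eta_0)/L^2$ equals $\int_{\bar l+\gep^2}^{\bar r-\gep^2}f_0^\delta\dd x$ up to $O(L^{-1})$. On the two end-strips of width $\gep^2$ the function $f_0^\delta$ is affine with slope $\pm 1$ (provided $\gep^2\leq\bar\delta$), so each strip contributes exactly $\gep^4/2$, giving $\tilde A(\eta_0)/L^2\geq a(f_0^\delta)-\gep^4-O(L^{-1})$. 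Since $a(\bar\eta(\tau'))\geq\tilde A(\eta(\mathcal T'))/L^2-O(L^{-1})$ (the only discrepancy being the $\max(\cdot,1)$ floor in the definition of $\tilde A$), absorbing the $O(L^{-1})$ errors into $L^{-1/4}$ produces \eqref{frete}.

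The main technical delicacy is the identity $\mathcal L\tilde A=2(m-M)$ itself: one must check that the boundary sites $x_l$ and $x_r$, whose flip rates depend on $\eta_{x_l-1}$ and $\eta_{x_r+1}$ outside the monitored interval, contribute with precisely the same structure as interior local extrema and generate no additional negative boundary term. Once this is verified, the alternation property of $\eta|_{[x_l-1,x_r+1]}$ forces $m-M\geq -1$, and the remainder of the argument is just the martingale machinery already deployed in Lemma \ref{drifta}.
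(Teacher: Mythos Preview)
Your approach is essentially the same as the paper's: compute $\mathcal L\tilde A$, bound the drift from below by $-2$ on $[0,\mathcal T')$, control the martingale via Doob, and rescale. However, your computation of the boundary contribution is wrong, and this is precisely the point you flagged as delicate.

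A corner flip at $x_l$ (or $x_r$) changes $\tilde A$ by $\pm 1$, not $\pm 2$, simply because only the half-triangle over $[x_l,x_l+1]$ lies inside the integration domain $[x_l,x_r]$. This is true for ordinary flips (rate $1$, change $\pm 1$, drift $\pm 1$) and for the contact-creating flip $\eta_{x_l}:2\to 0$ (rate $2$, but the change in $\int_{x_l}^{x_l+1}\max(\eta,1)$ is only $-1/2$, giving drift $-1$). So your identity $\mathcal L\tilde A=2(m-M)$ is false; the correct statement, which is what the paper proves, is that interior extrema contribute $\pm 2$ and boundary extrema contribute $\pm 1$, whence $\mathcal L\tilde A(\eta(t))\in\{-2,-1,0,1,2\}$ for $t<\mathcal T'$. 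The alternation argument then still yields $\mathcal L\tilde A\ge -2$, so your conclusion survives, but the route to it needs repair.

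One further small omission: to apply the Doob bound at the random time $\mathcal T'$ you need $\mathcal T'\le L^2$ w.h.p., which the paper obtains from $\mathcal T'\le\mathcal T$ together with \eqref{croci}. The rescaling step is fine and matches the paper.
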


The second step is to show that if $\tau'$ is to small, then we have a lower bound on $a(\bar \eta(\tau'))$ which brings a contradiction to Lemma \ref{ksksks}.
This bound is obtained by combining Proposition \ref{coucc} with a simple geometric argument.

\medskip

Set $X_{\tau'}$ to be the point where the first contact with the wall in $[x_l,x_r]$ is occurs. It is the only $x$ that satisfies
\begin{equation}
L X_{\tau'}\in [x_l,x_r] \text{ and } \bar \eta(X_{\tau'},\tau')=0.
\end{equation}
Set also $\tilde f^\delta$ to be the solution heat equation with Dirichlet boundary condition on $[\bar l,\bar r]$, and initial condition $f^\delta_0$.

\begin{lemma}\label{compra}
For the dynamics $\eta$ starting from an initial condition that satisfies \eqref{crounch2}.
For every positive $\alpha$, w.h.p.\
we have

\begin{equation}\label{aeez}
 \forall x\in [\bar l,\bar r], \quad \bar \eta(x,\tau')\le \min(\tilde f^\delta(x,\tau'),|x-X_{\tau'}|)+\alpha.
\end{equation}
In addition if $\tau'\le \gep$, and $\gep$ is small enough (depending on $\delta$ and $c$) one has 
\begin{equation}\label{michou}
\int_{-1}^1\min( \tilde f^\delta(x,\tau'),|x-X_{\tau'}|))\dd x \le a (f_0^{\delta})-2\tau'-\gep^2\bar \delta/16
\end{equation}
\end{lemma}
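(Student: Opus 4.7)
For \eqref{aeez} I combine two independent upper bounds on $\bar\eta(x,\tau')$. The Lipschitz bound $\bar\eta(x,\tau')\le|x-X_{\tau'}|$ is immediate from $|\eta_{y+1}-\eta_y|=1$ together with $\eta(LX_{\tau'},L^2\tau')=0$. The heat-equation bound $\bar\eta(x,\tau')\le\tilde f^\delta(x,\tau')+\alpha$ holds with high probability by running the argument of Proposition \ref{coucc} on the sub-interval $[L\bar l,L\bar r]$: since $\gl=\infty$ freezes all contacts with the wall and the initial profile \eqref{crounch2} vanishes outside $[L\bar l,L\bar r]$, the endpoints $L\bar l$ and $L\bar r$ remain pinned for all time and act as effective Dirichlet-zero boundaries, so rejecting transitions in an $L^{3/4}$-layer around them yields a corner-flip upper bound whose scaling limit is $\tilde f^\delta$.

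For \eqref{michou} I start from the decomposition
\begin{equation*}
\int_{-1}^{1}\min\bigl(\tilde f^\delta(x,\tau'),\,|x-X_{\tau'}|\bigr)\,dx=\int\tilde f^\delta(x,\tau')\,dx-\int\bigl(\tilde f^\delta(x,\tau')-|x-X_{\tau'}|\bigr)_+ dx
\end{equation*}
and bound the two terms separately. For the area term, $\partial_s\int\tilde f^\delta=\tilde f^\delta_x(\bar r,s)-\tilde f^\delta_x(\bar l,s)$; reflecting $f_0^\delta$ through $\bar l$ produces a $C^1$ extension $g$ with $g'\equiv 1$ on $(\bar l-\bar\delta,\bar l+\bar\delta)$, so that the heat-kernel identity gives $\tilde f^\delta_x(\bar l,s)=\int K_s(y-\bar l)\,g'(y)\,dy\ge 1-2\,\mathrm{erfc}\bigl(\bar\delta/(2\sqrt{s})\bigr)$, and symmetrically at $\bar r$. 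Integrating over $s\in[0,\tau']$ yields $\int\tilde f^\delta(\cdot,\tau')\,dx\le a(f_0^\delta)-2\tau'+C\tau' e^{-\bar\delta^2/(4\tau')}$, and the exponential correction is at most $\varepsilon^2\bar\delta/32$ as soon as $\varepsilon$ is small enough depending on $\bar\delta$ and $c$.

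For the cap term, by symmetry I may assume $X_{\tau'}\le(\bar l+\bar r)/2$ and work on $I:=[X_{\tau'},X_{\tau'}+\bar\delta/4]$. The key estimate is $\tilde f^\delta(x,\tau')\ge\min(x-\bar l,\bar\delta)-\varepsilon^2/2$ on $I$: by inspection $f_0^\delta\ge\min(x-\bar l,\bar\delta)$ on $[\bar l,r_0]$, and the heat-equation deviation from this minorant is driven only by the kink at $l_0$ (since $x-\bar l$ is the exact stationary solution on $[\bar l,\infty)$ with Dirichlet zero at $\bar l$, and $f_0^\delta$ matches it throughout $[\bar l,l_0]$) and by the right boundary $\bar r$; both lie at distance $\gtrsim\bar\delta$ from $I$, and a Duhamel estimate bounds their contribution by $C\sqrt{\tau'}\,e^{-\bar\delta^2/(16\tau')}\le\varepsilon^2/2$. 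Using $X_{\tau'}-\bar l\ge\varepsilon^2$ and $|x-X_{\tau'}|\le\bar\delta/4$ on $I$, one deduces
$
\tilde f^\delta(x,\tau')-|x-X_{\tau'}|\ge\min(X_{\tau'}-\bar l,\,3\bar\delta/4)-\varepsilon^2/2\ge\varepsilon^2/2,
$
so that the cap reduction is at least $(\bar\delta/4)(\varepsilon^2/2)=\varepsilon^2\bar\delta/8$; combining with the area bound yields \eqref{michou}. The main technical hurdle is precisely this lower bound on $\tilde f^\delta$ near $X_{\tau'}$ when $X_{\tau'}$ sits within $\varepsilon^2\ll\sqrt{\tau'}$ of $\bar l$, and it hinges on the exact matching of $f_0^\delta$ with the half-line stationary profile $x-\bar l$ on the full $\bar\delta$-neighborhood of $\bar l$, which suppresses boundary dissipation to leading order.
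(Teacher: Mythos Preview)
Your overall strategy matches the paper's: for \eqref{aeez} you combine the Lipschitz cone at $X_{\tau'}$ with the heat-equation upper bound from Proposition~\ref{coucc} (the paper is a bit more explicit about the randomness of $\tau'$, splitting on $\{\tau'\ge 1\}$, but your uniform-in-time bound handles this); for \eqref{michou} you write $\int\min(\tilde f^\delta,|x-X_{\tau'}|)=\int\tilde f^\delta-\int(\tilde f^\delta-|x-X_{\tau'}|)_+$ and bound the two pieces separately, just as the paper does via Lemma~\ref{pasdrol}.

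There is, however, a genuine gap in your cap-term estimate. You claim $\tilde f^\delta(x,\tau')\ge\min(x-\bar l,\bar\delta)-\varepsilon^2/2$ on $I=[X_{\tau'},X_{\tau'}+\bar\delta/4]$, justified by a Duhamel bound asserting that ``the kink at $l_0$ and the right boundary $\bar r$ both lie at distance $\gtrsim\bar\delta$ from $I$.'' This is not true: the kink of your minorant sits at $l_0=\bar l+\bar\delta$, and when $X_{\tau'}\in[\bar l+3\bar\delta/4,l_0]$ the interval $I$ actually contains $l_0$, so the distance is zero and the exponential Duhamel factor $e^{-\bar\delta^2/(16\tau')}$ is unavailable. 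In that regime the correct deviation bound is only $|\tilde f^\delta(x,\tau')-f_0^\delta(x)|\le\|(f_0^\delta)''\|_\infty\,\tau'=O(\varepsilon)$ (from the maximum principle applied to $\tilde f^\delta_{xx}$), and $O(\varepsilon)$ is much larger than the $\varepsilon^2/2$ you need when $X_{\tau'}-\bar l$ is as small as $\varepsilon^2$.

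The paper avoids this by splitting into two cases. When $X_{\tau'}$ is close to $\bar l$ (say within $\bar\delta/4$), the interval $I$ stays inside $[\bar l,\bar l+\bar\delta/2]$, and there the slope estimate \eqref{wercs2} gives $\tilde f^\delta(x,\tau')\ge(x-\bar l)(1-e^{-\bar\delta^2/(16\tau')})$, which yields the pointwise gap $\ge\varepsilon^2/2$ you want. When $X_{\tau'}$ is at distance $\ge\bar\delta/4$ from $\bar l$, the cruder height bound \eqref{wercs3} ($\tilde f^\delta\ge\bar\delta/8$) already gives a gap of order $\bar\delta$ on a sub-interval of length $\sim\bar\delta$, producing a cap reduction $\gtrsim\bar\delta^2\gg\varepsilon^2\bar\delta$. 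Your unified minorant can be made to work with this same case split, but as written the Duhamel justification is incorrect.
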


Suppose now that with a non vanishing probability $\tau'\le \gep$.
Then Lemma \ref{compra} implies (using \eqref{aeez} with $\alpha=\gep^2\bar \delta /64$ that  with  a non vanishing probability,
\begin{equation}
a(\bar \eta(\tau'))\le  a (f^{\delta}(\gep k))-2\tau'-\gep^2\bar \delta /32.
\end{equation}
If $\gep$ is chosen such that 
$\gep^2\le \delta c/64$ then this brings a contradiction to \eqref{frete} for $L$ sufficiently large.
\end{proof}

\begin{proof}[Proof of Lemma \ref{ksksks}]

Evaluating the effect of each transition on $\tilde A$, and noticing in particular that corner flips involving either $x_l$ or $x_r$ modifies 
$\tilde A$ only  by $\pm 1$, one obtains
\begin{multline}
\mathcal L \tilde A(\eta):=2 \big(|\{ x\in (-x_l,x_r) \ | \ \eta_x\ge 1, \eta_{x\pm 1}=\eta_x+1\}|\\-
|\{ x\in (-x_l,x_r) \ | \ \eta_x\ge 2, \eta_{x\pm 1}=\eta_x-1\}|
\big)\\
-\ind_{\{\eta_{x_l}\ge 2, \ \eta_{x_l}=\eta_{x_l\pm 1}-1\}}+\ind_{\{\eta_{x_l}\ge  1,\ \eta_{x_l}=\eta_{x_l\pm 1}+1\}}\\
-\ind_{\{\eta_{-x_r}\ge 2,\ \eta_{x_r}=\eta_{x_r\pm 1}-1\}}+\ind_{\{\eta_{x_r}\ge 1,\ \eta_{x_r}=\eta_{x_r\pm 1}+1\}}.
\end{multline}
It is then easy to check that $t<\mathcal T'$
\begin{equation}\label{gronaz}
\mathcal L \tilde A(\eta(t))\in\{2,1,0,-1,-2\}.
\end{equation}
It is just a consequence of the fact that the difference between the number of local maximum and local minimum in $[x_l,x_r]$ is at most $1$.

Next, as in Lemma \ref{drifta}, we use Doobs maximal inequality for the martingale
\begin{equation}
  \tilde M_t:=\tilde A(\eta(t))- \tilde A(\eta_0)-\int_0^t \mathcal L \tilde A(\eta(s))\dd s, 
\end{equation}
and obtain that
\begin{equation}\label{zaq}
 \bbE[\min_{t\in[0,L^2]} \tilde M_t\le -L^{7/4}]\le L^{-7/2}\bbE\left[\tilde M^2_{L^2}\right].
\end{equation}
As $\tilde M_0=0$, the expectation of $\tilde M^2_{L^2}$ is equal that of the martingale bracket
$\langle \tilde M^2\rangle_{L^2}$, which can be shown to be almost surely bounded by $8L^3$ (recall \ref{vardrift}). It implies that 
the r.h.s.\ of \eqref{zaq} vanishes when $L$ tends to infinity.
Next, we notice that

\begin{equation}
 \bbP[\tilde M_{\mathcal T'}<-L^{7/4}]
 \le \bbP[ \mathcal T'<L^2]+ \bbE[\min_{t\in[0,L^2]} \tilde M_t\le -L^{7/4}],
\end{equation}
and we note that by \eqref{croci} (and the fact that $\cT'\le \cT$), the first term in the right-hand side also vanishes.
Hence, we have w.h.p.\
\begin{equation}
 \tilde A(\eta(\cT'))\ge \tilde A(\eta_0)-\int_0^\cT \mathcal L \tilde A(\eta(s))\dd s-L^{7/4},
\end{equation}
and \eqref{gronaz} allows us to obtain \eqref{doobi}.

\medskip

For the second-point we first notice that from the various definitions we have
\begin{equation}\label{zeri}
 a(\bar \eta(\tau'))\ge \frac{1}{L^2} \tilde A( \eta(\mathcal T')),
\end{equation}

Then, noticing that  the area below the curve in $[-L,x_l]\cup[x_r,L]$ cannot be larger $\gep^4$ (the curve is $1$-Lipshitz), we have
\begin{equation}\label{teri}
\tilde A(\eta_0)\ge L^2 (a(f^{\delta}_0)-\gep^4),
\end{equation}
and hence \eqref{frete} holds by a combination of \eqref{doobi}, \eqref{zeri} and \eqref{teri}.
\end{proof}

\begin{proof}[Proof of Lemma \ref{compra}]

The fact that $\bar \eta(x,\tau')\le |x-X_\tau'|$ is just derived from the fact that $\bar \eta$ is a Lipshitz function which equals zero at $X_\tau'$.
The inequality 
$$\bar \eta(x,\tau')\le  \tilde f^\delta(x,\tau')+\alpha, \quad \forall x\in(-1,1)$$ is derived from Proposition \ref{coucc}, but one has to be careful since 
$\tau'$ is a random time.
One has
\begin{equation}
\bbP[\max_{x\in[\bar l, \bar r]} [\bar \eta(x,\tau')- \tilde f^\delta(x,\tau')]\ge \alpha]
\le \bbP[\tau'\ge 1]+\bbP\left[\maxtwo{x\in[\bar l, \bar r]}{t\in[0,1]} [\bar \eta(x,t)- \tilde f^\delta(x,t)]\ge \alpha\right].
\end{equation}
The first term has vanishing probability from \eqref{croci} and the second one from Proposition \ref{coucc}.

\medskip

The second point \eqref{michou} is a bit more technical.
The first task is to show is that if $\tau'\le \gep$

\begin{equation}\label{minots}
\int_{-1}^1\min(\tilde f^{\delta}(x,\tau'),|x-X_{\tau'}|))\dd x \le \int_{-1}^1\tilde f^{\delta}(x,\tau')-\frac{\gep^2\bar \delta}{8}.
\end{equation}

We consider separately two cases, either $X_{\tau'}$ is far from the boundary say $\min(|X_{\tau'}-\bar l |,X_{\tau'}-\bar r|)\ge \bar\delta/4$.
By symmetry we can suppose that $X_{\tau'}$ is closer to the left boundary.
A consequence of \eqref{wercs3} in Lemma \ref{pasdrol} is that 
\begin{equation}
x-X_{\tau'} \le  \tilde f^{\delta}(x,\tau')-\bar \delta/16, \quad \forall x\in (X_{\tau'},X_{\tau'}+\bar \delta/16),
\end{equation}
which implies that
\begin{equation} \label{wert}
\int_{-1}^1\min(\tilde f^{\delta}(x,\tau'),|x-X_{\tau'}|))\dd x \le \int_{-1}^1\tilde f^{\delta}(x,\tau')-\frac{\bar \delta^2}{256}.
\end{equation}
We can say that \eqref{minots}  holds for all value of $X_{\tau'}$ provided that $\gep^2\le \bar \delta/32$. 

\medskip

If $X_{\tau'}$ is close to one of the boundary say $X_{\tau'}\in [\bar l+\gep^2,\bar l+\bar \delta/4]$, then \eqref{wercs2} implies that for all $x\le \in[X_\tau, X_\tau+ \bar \delta/4]$
\begin{multline}
x-X_{\tau'}\le \tilde f^{\delta}(x,\tau')-(x-\bar l)(1-e^{-\frac{\bar \delta^2}{16t}})+(x-X_{\tau'}).
 \\ \le \tilde f^{\delta}(x,\tau')-(X_{\tau'}-\bar l)+(x-\bar l)e^{-\frac{\bar \delta^2}{16t}}
\\ \le  \tilde f^{\delta}(x,\tau')-\gep^2+(\bar \delta/2)e^{-\frac{\bar \delta^2}{16\gep}}\le \tilde f^{\delta}(x,\tau')-\gep^2/2.
\end{multline}
The last inequality is valid if $\gep$ is small enough (how small depending on  $\bar \delta$). 
Integrating the above inequality over $[X_\tau, X_\tau+ \bar \delta/4]$ we obtain \eqref{minots} also in that case

\medskip

To conclude, it is then sufficient to use \eqref{wercs} in the r.h.s\ of \eqref{minots} with $\gep$ sufficiently small.

 \begin{figure}[hlt]
 \begin{center}
 \leavevmode 
 \epsfxsize =8 cm
 \psfragscanon
 \psfrag{d}{\small $\gep^2$}
 \psfrag{x_d}{\small $X_\tau'$}
 \epsfbox{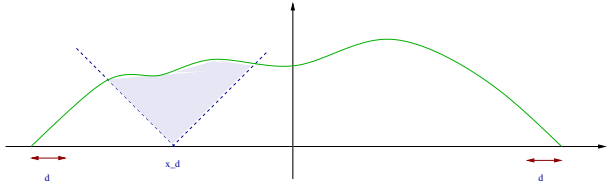}
 \end{center}
 \caption{\label{diffdaire} Figure representing the two function $\tilde f^{\delta}(\cdot,\tau')$, and $|\cdot-X_\tau'|$. The difference of volume below the graph of
$\tilde f^{\delta}(\cdot,\tau')$ and $\min(\tilde f^{\delta}(\cdot,\tau'),|\cdot-X_\tau'|)$ is the dark area on the figure. We prove a lower-bound on it using that
$\tilde f^{\delta}$ is not too small if $X_{\tau'}$ lies in the middle of the interval, or that the slope of $f$ on the boundary is close to one if $X_{\tau'}$ is closer
to the boundary.}
 \end{figure}

\end{proof}
%
%
%
%
%
%
%
%
%

We just proved that the dynamic do not touch the wall in the interval $[\bar l+\gep^2,\bar r-\gep^2]$. This allows us 
to compare it with the dynamic with no-wall for which we know the scaling limit by Theorem \eqref{frefer}.
If one runs the dynamic up to a time $\gep$, according to Proposition \ref{samere} 
the dynamic $\eta$ coincide w.h.p.\ with a modified one $\eta^{(\gep)}$ where there is no 
wall-constraint in the interval $[x_l,x_r]$.

\medskip

Using monotonicity of the dynamics, this second dynamic can be bounded from below 
by a dynamic with the domain reduced to $[x_l,x_r]$ and with a modified initial condition
which satisfies
$$\forall x\in [x_l,x_r], \quad \eta^{(\gep)}_0(x)\le \eta_0(x).$$
As $\eta_0$ satisfies \eqref{crounch2}
we can choose a sequence of initial condition $\eta^{(\gep)}_0$ which satisfies

\begin{equation}
\bar \eta^{(\gep)}_0(x):= \frac{1}{L}\eta^{(\gep)}_0(Lx)=f^{(\gep,\delta)}_0+o(1),
\end{equation}
where $f^{(\gep,\delta)}_0$ is defined on $[\bar l+\gep^2,\bar r-\gep^2]$ by

\begin{equation}
f^{(\gep,\delta)}_0:=f_0^{\delta}(x)-\gep^2.
\end{equation}
One calls $\bar \eta^{(\gep)}$ the resulting space-time rescaled dynamics.
By Theorem \ref{frefer} w.h.p.\

\begin{equation}\label{spce}
 \forall t>0 \forall x\in[-1,1],\quad \bar \eta^{(\gep)}(x,t)=\tilde f^{(\gep,\delta)}(x,t)+o(1),
\end{equation}
where $\tilde f^{(\gep,\delta)}(x,t)$ denotes the solution of the heat equation on $[\bar l+\gep^2,\bar r-\gep^2]$ with Dirichlet boundary condition and 
initial condition $f^{(\gep,\delta)}_0$.
Thus, to prove that Equation \eqref{lem1} and \eqref{lem2} hold for $k=0$, it is sufficient to show that

\begin{equation}\label{zabe}
  \forall x\in[l(\gep)-\bar \delta(\gep),r(\gep)+\bar \delta(\gep)],\quad \tilde f^{(\gep,\delta)}(x,\gep)>f^{\delta}(x,\gep),
\end{equation}
and that
\begin{equation}\label{crouvile}
  \forall x\in [\bar l+\gep^2,\bar r-\gep^2], \forall t\in [0,\gep], \quad  \tilde f^{(\gep,\delta)}(x,t)>f^{\delta}_0(x)-\alpha.
\end{equation}

\medskip

The second inequality is easier, it is a consequence of the fact that $\partial_t \tilde f^{(\gep,\delta)}= \tilde f^{(\gep,\delta)}_{xx}$ is bounded from below by 
$-\|f''_0\|_\infty$ because the minimum of the solution of the heat-equation with Dirichlet boundary condition is increasing.
Thus \eqref{crouvile} holds provided 
$$\gep\|f''_0\|_\infty<\alpha.$$
To prove  \ref{lem2} for $k>0$, we use \eqref{lesbornes} and replace $\|f''_0\|_\infty$ by $C$.

\medskip

Now we turn to the proof of \eqref{zabe}.
To control the value of $f^{\delta}(x,\gep)$, we use \eqref{clep}, \eqref{lesbornes} and the Taylor-Young formula to obtain

\begin{equation}\label{teces1}
 f^{\delta}(x,\gep)\le \begin{cases}
                     f^{\delta}_0(x)+\gep(f''_0(x)-\delta)+C\gep^2, \quad \forall x\in (l_0,r_0),\\
f^{\delta}_0(x)+\gep( f_0 ''(l(0))-\delta)+C \gep^2, 
 \quad  \forall x\in [l(\gep)-\bar\delta(\gep),l_0]\\
 f^{\delta}_0(x)+\gep(f_0''(r(0))-\delta)+C \gep^2, \quad \forall x\in [r_0,r(\gep)+\bar\delta(\gep)],
                    \end{cases}
\end{equation}
and
\begin{equation}\label{teces2}
\begin{split}
 l(\gep)-l_0&
 \ge\gep (-f_0''(l_0)+\bar \delta)-C\gep^2,\\
  r(\gep)-r_0&\le \gep(f_0''(r_0)-\bar \delta))+C\gep^2.
\end{split}\end{equation}

Controlling $ \tilde f^{(\gep,\delta)}(x,\gep)$ is more tedious. As the initial condition is not $C^2$, there is no continuity of the 
time derivatives. Let $K$ denote the heat kernel on $I:=[\bar l+\gep^2, \bar r-\gep^2]$ with Dirichlet boundary condition.
We have

\begin{equation}\label{brin}
 \tilde f^{(\gep,\delta)}(x,\gep)=f^{(\delta)}_0(x)-\gep^2+\int^\gep_0  \tilde f^{(\gep,\delta)}_{xx}(x,t) \dd t
\end{equation}
and 
\begin{equation}
\tilde f^{(\gep,\delta)}_{xx}(x,t)=\int_I f''_0(y)\ind_{\{y\in [l_0,r_0]\}}K(x,y,t)\dd y.
\end{equation}
Now for $t\le \gep$, and if $\gep$ is sufficiently small (how small depending on the value of $\bar \delta$) we have uniformly in $x\in (l_0,r_0)$ 
\begin{equation}
\int_{|y-x|\le \gep^{1/3}} K(x,y,t) \dd x \ge 1-\gep,
\end{equation}
and from \eqref{lesbornes} for  $x\in (l_0,r_0)$ 
\begin{equation}
 f''_0(y)\ind_{\{y\in [l_0,r_0]\}}\le f''_0(x)+C|x-y|.
\end{equation}
This implies that for  $x\in (l_0,r_0)$ 
\begin{equation}\label{groums}
 \tilde f^{(\gep,\delta)}_{xx}(x,t)\ge f''_0(x)-C(\gep+\gep^{1/3}).
\end{equation}
For $x\notin (l_0,r_0)$,  using similar heat kernel estimates, we obtain that for all $t\in[0,\gep]$
\begin{equation}\label{groums2}
  \tilde f^{(\gep,\delta)}_{xx}(x,t)\ge\begin{cases} f''_0(l_0)-C\gep^{1/3} \text{ for } x\in (l_0-\bar \delta+\gep^2, l_0),\\
  f''_0(r_0)-C\gep^{1/3} \text{ for } x\in (r_0, r_0+\bar \delta-\gep^2).
                                     \end{cases})]
\end{equation}
Note that neither \eqref{groums} nor \eqref{groums2} are optimal estimates but they are sufficient for our purpose.
Using these estimates in \eqref{brin} we obtain that

\begin{equation}
  \tilde f^{(\gep,\delta)}(x,\gep)\ge \begin{cases}
                                       f^{(\delta)}_0(x)-\gep^2-\gep f''_0(x)-2C\gep^{4/3} \text{ for } x\in (l_0,r_0),\\
                                       f^{(\delta)}_0(x)-\gep^2-\gep f''_0(l_0)-C\gep^{4/3}  \text{ for } x\in (l_0-\bar \delta+\gep^2,l_0),\\
                                         f^{(\delta)}_0(x)-\gep^2-\gep f''_0(r_0)-C\gep^{4/3}  \text{ for } x\in (r_0, r_0+ \bar \delta-\gep^2).
                                      \end{cases}
\end{equation}

Comparing the above inequalities with \eqref{teces1}, we can conclude that \eqref{zabe} holds. This ends the proof. 

\qed
%

{\bf Acknowledgement:} The author would like to thank Inwon Kim for her usefull advice for the proof of the short time existence of a solution to the Stefan Problem,
Jean Dolbeault for pointing out a simple proof for Lemma \ref{agmon}, Jimmy Lamboley for various discussions and Julien Sohier for his comments on the manuscript.
 This research work has been initiated during the authors stay in Instituto Nacional de Matématica Pura e Aplicada, he acknowledges 
 gratefully hospitality and support.

\appendix

\section{}

In this section, we prove several technical results.
The first one concerns the time needed for a corner-flip dynamics to reach an atypically low position.

\begin{lemma}\label{mousis}
Consider a corner-flip dynamics on $\gO^0_L$, started from an initial condition that satisfies

\begin{equation}
\eta_0(x)\ge \min(x+L,-x+L, L^{3/4}).
\end{equation}
Then with large probability, for all $t\le\exp(L^{1/4})$
\begin{equation}
\tilde\eta(x,t)\ge -L^{3/4}, \forall x\in[-L,L]
\end{equation}

\end{lemma}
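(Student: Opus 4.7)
The plan is to compare the corner-flip dynamics $\tilde\eta$ with a \emph{stationary} copy started from the invariant measure, thereby reducing the question to a random-walk-bridge tail estimate. Using the graphical construction with common Poisson clocks and coin tosses from Section~\ref{monoton}, I would couple $\tilde\eta$ with a second dynamics $\tilde\eta^\pi$ whose initial condition $\eta_0^\pi$ is sampled (independently of the clocks) from the uniform measure $\pi$ on $\gO^0_L$, which is the invariant measure of the corner-flip dynamics and coincides with the law of a simple symmetric random walk bridge of length $2L$ with endpoints at $0$. Monotonicity guarantees that on the event $\{\eta_0 \ge \eta_0^\pi\}$ one has $\tilde\eta(\cdot,t) \ge \tilde\eta^\pi(\cdot,t)$ for every $t \ge 0$.

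The initial ordering holds with overwhelming probability. Any path $\eta_0^\pi \in \gO^0_L$ automatically satisfies $\eta_0^\pi(x) \le \min(x+L,\,-x+L)$ by the $\pm 1$-increment constraint. The classical reflection-principle estimate for the simple random walk bridge gives
\[
 \bP\bigl[\max_x \eta_0^\pi(x) > L^{3/4}\bigr] \le \exp(-c L^{1/2}),
\]
so combining with the hypothesis $\eta_0(x) \ge \min(x+L,\,-x+L,\,L^{3/4})$ yields $\bP[\eta_0 \ge \eta_0^\pi] \ge 1 - \exp(-cL^{1/2})$. On this event it suffices to bound $\min_x \tilde\eta^\pi(x,t)$ uniformly in $t \in [0, \exp(L^{1/4})]$, which is easier because $\tilde\eta^\pi$ is stationary: its one-time marginal is $\pi$ for every $t$.

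The uniform-in-time bound is obtained by discretization. The same bridge estimate (for the minimum) gives $\pi(\min_x \eta \le -L^{3/4}/2) \le \exp(-cL^{1/2})$, and a union bound at integer times $s_j \in \{0,1,\dots,\lceil \exp(L^{1/4})\rceil\}$ yields a failure probability of at most $\exp(L^{1/4}-cL^{1/2}) \to 0$. For intermediate times $t \in [s_j,s_{j+1}]$, the height at any site $x$ changes by at most $2 N_x^{(j)}$, where $N_x^{(j)}$ is the number of clock rings at $x$ during the unit interval and is Poisson$(1)$. A further union bound over the $2L\,\exp(L^{1/4})$ pairs $(j,x)$, together with $\bP[\mathrm{Poisson}(1) > (\log L)^2] \le \exp(-c(\log L)^2 \log\log L)$, shows that with probability tending to $1$ all such changes are bounded by $2(\log L)^2 \ll L^{3/4}$. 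Combining, one obtains $\tilde\eta^\pi(x,t) \ge -L^{3/4}/2 - 2(\log L)^2 \ge -L^{3/4}$ uniformly, hence $\tilde\eta(x,t) \ge -L^{3/4}$ on the good event.

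The only nontrivial technical obstacle is the quantitative tail bound for the bridge maximum (and symmetrically for the minimum reaching level $-L^{3/4}/2$), but this is a standard reflection-principle/local-central-limit computation giving a Gaussian-type tail $\exp(-c M^2/L)$ with $M = L^{3/4}/2$; everything else reduces to elementary Poisson-clock control and union bounds, with the time horizon $\exp(L^{1/4})$ comfortably beaten by the equilibrium tail $\exp(-cL^{1/2})$.
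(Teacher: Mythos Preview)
Your approach is essentially the same as the paper's: couple with a stationary copy via the graphical construction, use the bridge tail $\pi(\min_x\eta\le -L^{3/4})\le\exp(-cL^{1/2})$, and beat the time horizon $\exp(L^{1/4})$ by a union bound. There is, however, a quantitative slip in your intermediate-time control. You union-bound over $2L\exp(L^{1/4})$ pairs $(j,x)$ against the Poisson tail $\bbP[\mathrm{Poisson}(1)>(\log L)^2]\le\exp(-c(\log L)^2\log\log L)$, but $(\log L)^2\log\log L=o(L^{1/4})$, so the product $2L\exp(L^{1/4})\exp(-c(\log L)^2\log\log L)$ does \emph{not} tend to zero. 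The fix is trivial: replace $(\log L)^2$ by any threshold $M$ with $L^{1/4}\ll M\log M$ and $M\ll L^{3/4}$, e.g.\ $M=L^{1/3}$.

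The paper sidesteps this bookkeeping more elegantly by passing to the embedded \emph{discrete-time} chain $\hat\eta(n)$ (one uniformly chosen site update per step), which is also stationary under $\pi$. Since the configuration does not change between jumps, a union bound over the $\exp(2L^{1/4})$ discrete steps suffices directly, with no need to control intra-interval fluctuations; one then recovers continuous time by noting that a rate-$L$ Poisson clock needs time at least $\exp(L^{1/4})$ to accumulate $\exp(2L^{1/4})$ jumps with high probability.
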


\begin{proof}
One couples $\tilde \eta$ with a dynamic $\tilde \eta^{2}$ starting from the uniform measure on $\gO^{0}_{L}$ (we denote it by $\pi$).
Because of our choice for the initial condition of $\tilde\eta$ is above $\tilde\eta^{2}$ at time zero with large probability 
and thus we can couple the two dynamics so that $\tilde\eta\ge\tilde\eta^{2}$ for all time with large probability.

\medskip

Hence it is sufficient to prove the result for $\tilde\eta^{2}$.
Consider the discrete-time dynamics $\hat\eta^2(n)$ starting from the uniform measure and that at each step choses $x$ at random and flip $\eta$ to $\eta^{(x)}$.
As $\pi$ is left invariant by this dynamics, the probability that $\hat\eta(x,n)< -L^{3/4}$ for some $x$ after $\exp(2L^{1/4})$ 
step is at most (by union bound)
\begin{equation}
 \exp(2L^{1/4})\pi(\exists x\in[-L,L], \eta(x)< -L^{3/4})=O(\exp(-L^{1/2})).
\end{equation}
The dynamics in continuous time can then be construction from $\hat \eta$ by considering $(\tau_n)_{n\ge0}$ a Poisson clock process of intensity $L$
and setting $\tilde \eta^2(t)=\hat \eta(n)$ if $t\in[\tau_n,\tau_{n+1})$. Then we conclude by remarking that 
with high probability 
\begin{equation}
 \tau_{\exp(2L^{1/4})}\ge \exp(L^{1/4}).
\end{equation}

\end{proof}

The second result concerns estimate on $\tilde f^{\delta}$, the solution of the heat-equation with initial condition $f^{\delta}_0$ and Dirichlet boundary condition
on $[\bar l,\bar r]$. It says that in the time interval $[0,\gep]$ the slope of $\tilde f^{\delta}$ near the boundary stays close to one.

\begin{lemma}\label{pasdrol}
For $\gep$ small enough, for all $t\le \gep$ the following three statements hold
\begin{itemize}
\item[(i)] For all $s\in [0,\bar \delta/2]$,
\begin{equation}\label{wercs2}
\max\left(  \tilde f^{\delta}(\bar l+s,t) ,\tilde f^{\delta}(\bar r-s,t)\right)\ge s(1-e^{-\frac{\bar \delta^2}{16t}}).
\end{equation}
\item[(ii)] 
For all $x\in [\bar l+\bar \delta/4, \bar r -\bar \delta/4]$
\begin{equation}\label{wercs3}
 \tilde f^{\delta}(x,t)\ge \bar \delta/8.
 \end{equation}
\item[(iii)] We also have
\begin{equation}\label{wercs}
\int_{\bar l}^{\bar r} \tilde f^{\delta}(x,t)\dd x \ge
\int_{\bar l}^{\bar r} f^\delta_0(x)\dd x-2t+e^{-\frac{\bar \delta^2}{16t}}.
\end{equation}
\end{itemize}

\end{lemma}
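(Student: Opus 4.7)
The plan is to derive all three estimates from the maximum principle for the heat equation on $[\bar l, \bar r]$ together with standard Gaussian heat kernel estimates. The common setup exploits that $f_0^\delta$ is $1$-Lipshitz, equals the linear pedestals $x - \bar l$ on $[\bar l, l_0]$ and $\bar r - x$ on $[r_0, \bar r]$, and is at least $\bar\delta$ on $[l_0, r_0]$; in particular, $\bar\delta\, \ind_{[l_0, r_0]}(x) \le f_0^\delta(x) \le (x - \bar l) \wedge (\bar r - x)$.

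For (ii), one checks by case analysis that $f_0^\delta(x) \ge \bar\delta/4$ on the entire set $[\bar l + \bar\delta/4, \bar r - \bar\delta/4]$: on the pedestal strip $[\bar l + \bar\delta/4, l_0]$ one has $f_0^\delta = x - \bar l \ge \bar\delta/4$, on $[l_0, r_0]$ one has $f_0^\delta \ge \bar\delta$, and symmetrically on the right. Combined with the short-time continuity bound $|\tilde f^\delta(x,t) - f_0^\delta(x)| \le C \sqrt{t}$, valid for $x$ at distance much larger than $\sqrt{t}$ from $\{\bar l, \bar r\}$ (a direct consequence of the Lipshitz regularity of $f_0^\delta$ and the Gaussian representation of the Dirichlet heat kernel), this yields $\tilde f^\delta(x,t) \ge \bar\delta/4 - C\sqrt{\gep} \ge \bar\delta/8$ as soon as $\gep$ is small enough in terms of $\bar\delta$.

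For (i), I would introduce $v(x,t) := (x - \bar l) - \tilde f^\delta(x,t)$, which satisfies the heat equation (because $x - \bar l$ is harmonic), vanishes at $x = \bar l$, and has nonnegative initial datum $v_0$ which is identically zero on $[\bar l, l_0]$. The target inequality $\tilde f^\delta(\bar l + s, t) \ge s(1 - e^{-\bar\delta^2/(16t)})$ is equivalent to $v(\bar l + s, t) \le s\, e^{-\bar\delta^2/(16t)}$, and can be established by comparing $v$ on the small interval $[\bar l, \bar l + \bar\delta]$ to the Dirichlet heat equation there with boundary values $0$ at $\bar l$, $\bar\delta$ at $\bar l + \bar\delta$, and zero initial datum: on this interval $v$ also starts at zero, vanishes at $\bar l$, and is bounded by $\bar\delta$ at $\bar l + \bar\delta$ (since $\tilde f^\delta \ge 0$), so it lies below the comparison solution. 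The latter admits the Brownian representation $\bar\delta \cdot \bbP_{\bar l + s}(\tau_{\bar l + \bar\delta} \le t, \tau_{\bar l + \bar\delta} < \tau_{\bar l})$, and the reflection principle bounds this by a Gaussian tail at distance $\bar\delta - s \ge \bar\delta/2$, delivering the factor $e^{-\bar\delta^2/(16t)}$. The symmetric argument near $\bar r$ gives the other quantity inside the maximum, and one keeps whichever side provides the sharper bound.

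For (iii), I would differentiate the area: $\frac{d}{dt}\int_{\bar l}^{\bar r} \tilde f^\delta(x, t)\, dx = \tilde f^\delta_x(\bar r, t) - \tilde f^\delta_x(\bar l, t)$. Comparison with the harmonic functions $x - \bar l$ and $\bar r - x$ yields $\tilde f^\delta_x(\bar l, t) \le 1$ and $\tilde f^\delta_x(\bar r, t) \ge -1$, accounting for the baseline $-2t$ in the bound, while the slope deficits are exactly the normal derivatives at the boundary of the function $v$ from part (i); by the same Gaussian argument they are exponentially small in $\bar\delta^2/(16t)$, and integrating over $[0, t]$ gives the claimed correction via $\int_0^t e^{-\bar\delta^2/(16s)}\,ds = O(t\, e^{-\bar\delta^2/(16t)})$. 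The hard part throughout is the Gaussian estimate in (i): a naive reduction to the half-line $[\bar l, \infty)$ is spoiled by the fact that $v(\bar r, t) = \bar r - \bar l$ is not small, so one cannot ignore the right boundary, and the careful localization to the small Dirichlet problem on $[\bar l, \bar l + \bar\delta]$ sketched above is what makes both the exponent $\bar\delta^2/(16t)$ and the prefactor $s$ come out correctly.
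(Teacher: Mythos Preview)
Your route for (iii) is the same as the paper's, and your argument for (ii) via the short-time Lipschitz continuity bound is a perfectly valid alternative to the paper's (which instead uses (i) at the endpoints together with the fact that interior local minima of a heat-equation solution are increasing).

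For (i), however, there is a real gap. Your comparison with the Dirichlet problem on $[\bar l,\bar l+\bar\delta]$ is correct and gives
\[
v(\bar l+s,t)\le w(\bar l+s,t)=\bar\delta\,\bbP_{\bar l+s}\bigl(\tau_{\bar l+\bar\delta}\le t,\ \tau_{\bar l+\bar\delta}<\tau_{\bar l}\bigr),
\]
but the reflection-principle step you invoke only bounds this probability by the Gaussian tail at distance $\bar\delta-s\ge\bar\delta/2$, yielding $w\le \bar\delta\,e^{-\bar\delta^2/(8t)}$. This delivers the exponential factor but \emph{not} the prefactor $s$: you end up with $\tilde f^\delta(\bar l+s,t)\ge s-\bar\delta\,e^{-\bar\delta^2/(8t)}$, which is strictly weaker than $s(1-e^{-\bar\delta^2/(16t)})$ for $s$ small. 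To recover the $s$ you would need to also exploit the constraint $\tau_{\bar l+\bar\delta}<\tau_{\bar l}$, for instance by a strong-Markov decomposition through the midpoint $\bar l+\bar\delta/2$: the probability of reaching $\bar l+\bar\delta/2$ before $\bar l$ is $2s/\bar\delta$ by gambler's ruin, and the remaining Gaussian tail from $\bar\delta/2$ to $\bar\delta$ then supplies the exponential, giving $w\le 2s\,e^{-\bar\delta^2/(8t)}\le s\,e^{-\bar\delta^2/(16t)}$.

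The paper sidesteps this entirely by working with the spatial derivative: since $\tilde f^\delta$ solves the Dirichlet heat equation, $\tilde f^\delta_x$ solves the Neumann heat equation, and hence admits the reflected-Brownian-motion representation $\tilde f^\delta_x(x,t)=\bE_x[(f_0^\delta)'(B_t)]$. Because $(f_0^\delta)'\equiv 1$ on $[\bar l,\bar l+\bar\delta]$ and $|(f_0^\delta)'|\le 1$ everywhere, one immediately gets $\tilde f^\delta_x(x,t)\ge 1-\bbP_x(B_t\ge\bar l+\bar\delta)\ge 1-e^{-\bar\delta^2/(16t)}$ uniformly on $[\bar l,\bar l+\bar\delta/2]$, and the factor $s$ then falls out for free by integrating in $x$ from $\bar l$ to $\bar l+s$. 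This is both shorter and avoids the delicate localization you correctly flag at the end of your sketch.
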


\begin{proof}
The important point is to control the value of $f_x$ near $\bar l$ and $\bar r$ and the rest follows.
We have
\begin{equation}
\tilde f^{\delta}_x(x,t)=\bE_x[(f^{\delta}_0)'(B_t)],
\end{equation}
where $\bE_x$ is the expectation with respect to standard Brownian Motion reflected on the boundaries of $[\bar l, \bar r]$.
As $f^{\delta}_0)' \in[-1,1]$ in the whole interval  and is equal to $1$ on $[\bar l, \bar l+\bar \delta]$ one has

\begin{equation}
\bE_x[(f^{\delta}_0)'(B_t)]\ge 1-\bP_x[B_t \ge\bar l+\delta].
\end{equation}
Finally if $t$ is much smaller than $\bar \delta^2$ and $x\in [\bar l, \bar l+\bar \delta/2]$,
\begin{equation}
\bP_x[B_t \ge \bar l+\bar \delta]\le e^{-\frac{\bar \delta^2}{16t}},
\end{equation}
which implies that for all $x\in [\bar l, \bar l+\bar \delta/2]$,
 \begin{equation}
 \tilde f^{\delta}_x(x,t)\ge 1-e^{-\frac{\bar \delta^2}{16t}}.
\end{equation}
Similarly for all $x\in [\bar r-\bar \delta/2,\bar r]$
\begin{equation}
 \tilde f^{\delta}_x(x,t)\le -1+e^{-\frac{\bar \delta^2}{16t}}.
\end{equation}
Integrating these inequalities  we obtain \eqref{wercs2}.

\medskip

For $(ii)$, we notice that \eqref{wercs3} is a consequence of the first point for $x\in \{\bar l+\bar \delta/4, \bar r -\bar \delta/4\}$.
For points inside the interval, it is sufficient to notice that local minima of the solution of the heat equation increase, and that they are initially larger than $\bar \delta$.

\medskip

For $(iii)$ we notice that 
\begin{equation}
\partial_s\left(\int_{\bar l}^{\bar r} \tilde f^{\delta}(x,s)\dd x\right)= \tilde f^{\delta}_x(\bar l,t)-  \tilde f^{\delta}_x(\bar r,t)
\ge -2+2e^{-\frac{\bar \delta^2}{16s}},
\end{equation}
if $s$ is much smaller than $\bar \delta^2$. Then \eqref{wercs} follows by integrating over $s\in[0,t]$.
\end{proof}

\end{document}